\newtheorem{theorem}{Theorem}[section]
\newtheorem{lemma}[theorem]{Lemma}
\newcommand{\sm}{\setminus}
\theoremstyle{definition}
  \author{Myriam Preissmann\thanks{Univ. Grenoble Alpes, CNRS, Grenoble INP, G-SCOP, 38000 Grenoble, France}~, Cléophée  Robin\footnotemark[1]~\thanks{Univ Lyon, EnsL, UCBL, CNRS, LIP,
    F-69342, LYON Cedex 07, France. Partially
    supported by the LABEX MILYON (ANR-10-LABX-0070) of Universit\'e
    de Lyon, within the program ‘‘Investissements d'Avenir’’
    (ANR-11-IDEX-0007) operated by the French National Research Agency     (ANR)}~~and Nicolas Trotignon\footnotemark[2]}
\title{On the complexity of colouring antiprismatic graphs}
\begin{document}

\maketitle


\begin{abstract}
A graph $G$ is \emph{prismatic} if for every triangle $T$ of $G$,
  every vertex of $G$ not in $T$ has a unique neighbour in $T$. The
  complement of a prismatic graph is called \emph{antiprismatic}. The
  complexity of colouring antiprismatic graphs is still
  unknown. Equivalently, the complexity of the clique cover problem in
  prismatic graphs is not known.
  
  Chudnovsky and Seymour gave a full structural description of
  prismatic graphs.  They showed that the class can be divided into
  two subclasses: the orientable prismatic graphs, and the
  non-orientable prismatic graphs.  We give a polynomial time
  algorithm that solves the clique cover problem in every
  non-orientable prismatic graph.  It relies on the the structural description and on later work of Javadi and Hajebi.  We give a polynomial time algorithm which 
  solves the vertex-disjoint triangles problem for every prismatic
  graph. It does not rely on the structural description.
\end{abstract}

\section{Introduction}

We consider undirected simple graphs (no multiple edge, no
loop). A \emph{$k$-colouring of a graph $G$} is a function
$f$ from $V(G)$ to $\{1, \dots, k\}$ such that for all $uv\in E(G)$,
$f(u)\neq f(v)$.  The colouring problem is the problem whose input is
a graph $G$ and whose output is a $k$-colouring of $G$ such that $k$ is
minimum.  The colouring problem is NP-hard and is the object of much
attention. In particular, its complexity is known for several classes
of graphs.  

When $H$ is a graph, a graph $G$ is \emph{$H$-free} if it does not
contain any induced subgraph isomorphic to $H$.  When $\cal H$ is a set
of graphs, $G$ is \emph{$\cal H$-free} if $G$ is $H$-free for all
$H\in \cal H$.  We denote by $P_k$ the path on $k$ vertices and by
$C_k$ the cycle on $k$ vertices. When $G$ and $H$ are graphs, we
denote by $G+H$ their disjoint union. When $G$ is a graph and $k$ an
integer, we denote by $kG$ the disjoint union of $k$ copies of $G$. We
denote by $K_k$ the complete graph on $k$ vertices.  We denote by
$K_{k, l}$ the complete bipartite graph with one side of the
bipartition of size $k$ and the other side of size $l$. A \textit{stable set} is a set of vertices that are pairwise nonadjacent. A \textit{clique} is a set of vertices that are pairwise adjacent. 

Král', Kratochvíl, Tuza and
Woeginger~\cite{daniel_kral_complexity_2001} proved the following
dichotomy: the coloring problem for $H$-free graphs is polynomial time
solvable if $H$ is an induced subgraph of $P_4$ or an induced subgraph
of $K_1+P_3$, and NP-hard otherwise.  This motivated the systematic
study of the colouring problem restricted to $\{H_1, H_2\}$-free
graphs for all possible pairs of graphs $H_1$, $H_2$, or even to
$\cal H$-free graph in general.  The complexity has been shown to be
polynomial or NP-hard for all sets of graphs on at most four vertices
(see Figure~\ref{f:g4}), apart for the following four cases that are
still open (see~\cite{v.v._lozin_vertex_2015} and~\cite{golovach_coloring_2012}).

\begin{itemize}
\item ${\cal H = }\{C_4, 4K_1\}$
\item ${\cal H = } \{K_{1, 3}, 4K_1\}$
\item ${\cal H = }\{K_{1, 3}, 2P_1+P_2\}$
\item ${\cal H = }\{K_{1, 3}, 2P_1+P_2, 4K_1\}$
\end{itemize}

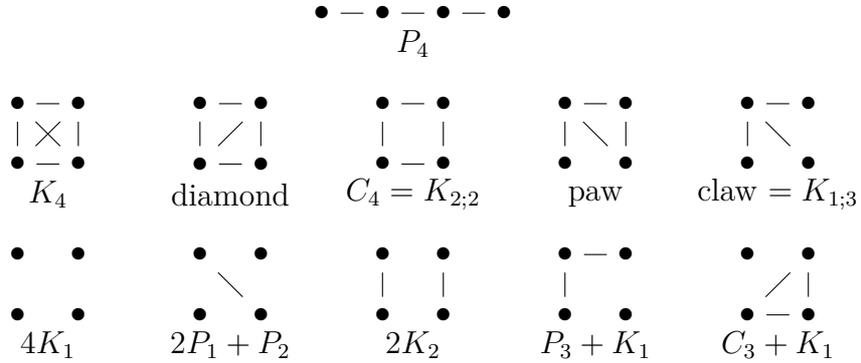
\begin{figure}
\centering
	\begin{tikzpicture}[scale=0.8]
	\node[-] (a) at (-5,1.5) {$\bullet$};
\node[-](b) at (-6,1.5) {$\bullet$};
\node[-] (c) at (-7,1.5) {$\bullet$};
\node[-] (d) at (-8,1.5) {$\bullet$};
\draw[-] (a) -- (b); 
\draw[-] (b) -- (c);
\draw[-] (c) -- (d); 
\node[-] (name) at (-6.5,1) {$P_4$};

\node[-] (a) at (0,0) {$\bullet$};
\node[-](b) at (0,-1) {$\bullet$};
\node[-] (c) at (-1,-1) {$\bullet$};
\node[-] (d) at (-1,0) {$\bullet$};
\draw[-] (a) -- (d); 
\draw[-] (b) -- (d);
\draw[-] (c) -- (d); 
\node[-] (name) at (-0.5,-1.5) {claw  $ = K_{1;3}$}; 

\node[-] (a) at (-3,0) {$\bullet$};
\node[-](b) at (-4,0) {$\bullet$};
\node[-] (c) at (-3,-1) {$\bullet$};
\node[-] (d) at (-4,-1) {$\bullet$};
\draw[-] (b) -- (c); 
\draw[-] (a) -- (b);  
\draw[-] (b) -- (d);
\draw[-] (c) -- (a);   
\node[-] (name) at (-3.5,-1.54) {paw};

\node[-] (a) at (-6,0) {$\bullet$};
\node[-](b) at (-7,-1) {$\bullet$};
\node[-] (c) at (-6,-1) {$\bullet$};
\node[-] (d) at (-7,0) {$\bullet$};
\draw[-] (a) -- (d);
\draw[-] (a) -- (c); 
\draw[-] (d) -- (b); 
\draw[-] (c) -- (b); 
\node[-] (name) at (-6.5,-1.5) {$C_4=K_{2;2}$};

\node[-] (a) at (-9,0) {$\bullet$};
\node[-](b) at (-10,-1) {$\bullet$};
\node[-] (c) at (-9,-1) {$\bullet$};
\node[-] (d) at (-10,0) {$\bullet$};
\draw[-] (a) -- (d);
\draw[-] (a) -- (c); 
\draw[-] (d) -- (b); 
\draw[-] (c) -- (b); 
\draw[-] (a) -- (b);
\node[-] (name) at (-9.5,-1.5) {diamond}; 

\node[-] (a) at (-12,0) {$\bullet$};
\node[-](b) at (-13,-1) {$\bullet$};
\node[-] (c) at (-12,-1) {$\bullet$};
\node[-] (d) at (-13,0) {$\bullet$};
\draw[-] (d) -- (b); 
\draw[-] (d) -- (c); 
\draw[-] (a) -- (d); 
\draw[-] (a) -- (b);
\draw[-] (a) -- (c);
\draw[-] (b) -- (c);   
\node[-] (name) at (-12.5,-1.5) {$K_4$};
\node[-] (a) at (0,-2.5) {$\bullet$};
\node[-](b) at (0,-3.5) {$\bullet$};
\node[-] (c) at (-1,-2.5) {$\bullet$};
\node[-] (d) at (-1,-3.5) {$\bullet$}; 
\draw[-] (b) -- (a);
\draw[-] (d) -- (a); 
\draw[-] (d) -- (b); 
\node[-] (name) at (-0.5,-4) {$C_3+K_1$}; 

\node[-] (a) at (-3,-2.5) {$\bullet$};
\node[-](b) at (-4,-2.5) {$\bullet$};
\node[-] (c) at (-3,-3.5) {$\bullet$};
\node[-] (d) at (-4,-3.5) {$\bullet$};
\draw[-] (b) -- (d); 
\draw[-] (a) -- (b);    
\node[-] (name) at (-3.5,-4) {$P_3+K_1$};

\node[-] (a) at (-6,-2.5) {$\bullet$};
\node[-](b) at (-7,-3.5) {$\bullet$};
\node[-] (c) at (-6,-3.5) {$\bullet$};
\node[-] (d) at (-7,-2.5) {$\bullet$};
\draw[-] (a) -- (c); 
\draw[-] (b) -- (d); 
\node[-] (name) at (-6.5,-4) {$2K_2$};

\node[-] (a) at (-9,-2.5) {$\bullet$};
\node[-](b) at (-10,-3.5) {$\bullet$};
\node[-] (c) at (-9,-3.5) {$\bullet$};
\node[-] (d) at (-10,-2.5) {$\bullet$};
\draw[-] (d) -- (c);
\node[-] (name) at (-9.5,-4) {$2P_1+P_2$}; 

\node[-] (a) at (-12,-2.5) {$\bullet$};
\node[-](b) at (-13,-3.5) {$\bullet$};
\node[-] (c) at (-12,-3.5) {$\bullet$};
\node[-] (d) at (-13,-2.5) {$\bullet$};
\node[-] (name) at (-12.5,-4) {$4K_1$};
\end{tikzpicture}	
\caption{All graphs on 4 vertices\label{f:g4}}
\end{figure}

Lozin and Malyshev~\cite{v.v._lozin_vertex_2015} noted that a
$\{K_{1, 3}, 2P_1+P_2\}$-free graph is either $4K_1$-free or has no
edges.  Therefore, $\{K_{1, 3}, 2P_1+P_2, 4K_1\}$-free graphs are
essentially equivalent to $\{K_{1, 3}, 2P_1+P_2\}$-free graphs, in the
sense that the complexity of the coloring problem is the same for both
classes.

In this paper, we study $\{K_{1, 3}, 2P_1+P_2, 4K_1\}$-free graphs.
They have been introduced in a different context.  Chudnovsky and
Seymour~(see~\cite{chudnovsky_claw-free_2007}) described the structure
of claw-free graphs in a series of articles (the \emph{claw} is another name
for $K_{1, 3}$).  The two first articles of the series are about the
so-called \emph{prismatic graphs}. Let us define them.  A
\emph{triangle} in a graph is a set of three pairwise adjacent
vertices.  A graph $G$ is \textit{prismatic} if for every triangle $T$
of $G$, every vertex of $G$ not in $T$ has a unique neighbour in
$T$. A graph is \emph{antiprismatic} if its complement is prismatic.
It is straightforward to check that antiprismatic graphs are precisely
$\{K_{1, 3}, 2P_1+P_2, 4K_1\}$-free graphs.

Observe that if $\{s_1, s_2, s_3\}$ and $\{t_1, t_2, t_3\}$ are two
vertex-disjoint triangles in a prismatic graph $G$, then there is a
perfect matching in $G$ between $\{s_1, s_2, s_3\}$ and
$\{t_1, t_2, t_3\}$, so that $\{s_1, s_2, s_3, t_1, t_2, t_3\}$
induces the complement of a $C_6$, commonly called
\emph{prism}, see Figure~\ref{f:prism}. This is where the name prismatic comes from.

\begin{figure}
  \begin{center}
\begin{tikzpicture}
\begin{scope}[xshift=0cm,yshift=0cm]
		\node[-] (a) at (0,0) {$s_1$};
		\node[-] (b) at (0,2) {$s_2$};
		\node[-] (c) at (1,1) {$s_3$};
		\node[-] (d) at (4,0) {$t_1$};
		\node[-] (e) at (4,2) {$t_2$};
		\node[-] (f) at (3,1) {$t_3$};
		\draw[->] (a) -- (b);
		\draw[<-] (a) -- (c);
		\draw[<-] (c) -- (b);
		\draw[->] (d) -- (e);
		\draw[<-] (d) -- (f);
		\draw[->] (e) -- (f);
		\draw[-] (a) -- (d);
		\draw[-] (b) -- (e);
		\draw[-] (c) -- (f);
\end{scope}
\end{tikzpicture}
\end{center}
\caption{The prism\label{f:prism}}
\end{figure}
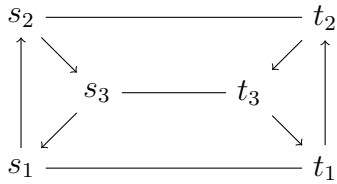

Chudnovsky and Seymour gave a full structural description of prismatic
graphs (and therefore of their complement).  They showed that the
class can be divided into two subclasses, to be defined later: the
orientable prismatic graphs, and the non-orientable prismatic graphs.
They described the structure of the two subclasses: orientable
in~\cite{chudnovsky_claw-free_2007} and non-orientable
in~\cite{chudnovsky_claw-free_2008}. Javadi and Hajebi \cite{DBLP:journals/jgt/JavadiH19} discovered a flaw in~\cite{chudnovsky_claw-free_2008}, that they could fix at the expense of adding a basic class in the structural description. 
 It is natural to ask whether
this description yields a polynomial time algorithm to color
antiprismatic graphs. 

The \emph{clique cover problem} is the problem of finding, in an input
graph $G$, a minimum number of cliques that partition $V(G)$.  It is
equivalent to the coloring problem for the complement. It is therefore
NP-complete in the general case. Our work is about the coloring
problem for antiprismatic graphs.  However, it is more convenient to
view it as a study of the clique cover problem for prismatic graphs.
Hence, from here on, we focus on the prismatic graphs and the clique
cover problem.

Our main result is that there exists an $O(n^{7.5})$-time algorithm to
solve the clique cover problem in non-orientable prismatic graphs.
Our proof is based on the existence, in any non-orientable prismatic
graph $G$ with more than $27$ vertices, of a set of at most $5$
vertices that intersects all triangles of $G$ (see
Theorem~\ref{l:27_ou_5}).  This follows directly from the structural
description, but needs careful verifications.  For the orientable
case, we could not settle the complexity of the clique cover problem,
but we prove that a related problem can be solved in polynomial
time: the \emph{vertex-disjoint triangles problem}.  It consists in finding a
maximum number of disjoint triangles in an input graph. This problem is known to be NP-hard in the general case \cite{hromkovic_vertex-disjoint_1998}.   

Our algorithm for the clique cover problem in the non-orientable case
relies on the existence of a hitting set of the triangles of bounded
size.  Sepehr Hajebi~\cite{hajebi} observed that the existence of such a set can be
proved with a short argument that relies on several lemmas
of~\cite{chudnovsky_claw-free_2008}.  This argument gives a hitting
set of size at most 15. The way we use hitting sets then provides an
algorithm for the clique cover problem of complexity
${\cal O}(n^{17.5})$.  

\subsection*{Outline}

In Section~\ref{sec:def} we give the main definitions that we need and
several results about prismatic graphs.

In Section~\ref{sec:no}, we give the structural description of
non-orientable prismatic graphs from~\cite{chudnovsky_claw-free_2008}
and show that it implies the existence of a set of bounded number of vertices 
that intersects all triangles. Since our proof mostly relies on the
structural description of Chudnovsky and Seymour, we have to give many
long definitions extracted from their work.

In Section~\ref{sec:col}, we show that
this yields a polynomial time algorithm for the clique cover problem in
non-orientable prismatic graphs.  

In Section~\ref{sec:o}, we prove that the vertex-disjoint triangles problem can
be solved in polynomial time for prismatic graphs since by Section~\ref{sec:col}. Our proof does not rely on the structural description
from~\cite{chudnovsky_claw-free_2007}.

In Section \ref{hajebi}, we describe Hajebi’s approach. 

Section \ref{s:concl} is devoted to concluding remarks.

\section{Definition, notation and prismatic graphs}
\label{sec:def}

Let $G$ be a graph. For $X \subseteq V(G)$, we denote by $G[X]$ the
subgraph of $G$ induced by $X$. For $v\in V(G)$, we denote by $N_G(v)$
the set of neighbours of $v$ in $G$ (note that $v\notin N_G(v)$); when there is no risk of ambiguity, we may write $N(v)$.   The
\textit{line graph} of $G$, denoted by $L(G)$, is the graph whose
vertices are the edges of $G$, and such that two vertices  are adjacent if
and only if they are adjacent edges of $G$.

A vertex $v$ is \emph{complete} to $A\subseteq V(G)$ if $v\notin A$
and $v$ is adjacent to all vertices of $A$.  When $A, B$ are two
disjoint subsets of $V(G)$, $A$ is \emph{complete} to $B$ if every
vertex in $A$ is complete to $B$. A vertex $v$ is \emph{anticomplete}
to $A\subseteq V(G)$ if $v\notin A$ and $v$ is adjacent to no vertex
in $A$.  When $A, B$ are two disjoint subsets of $V(G)$, $A$ is
\emph{anticomplete} to $B$ if every vertex in $A$ is anticomplete to
$B$.

A triangle in a graph $G$ is \textit{covered} by a set $S$ of vertices
if at least one vertex of the triangle is in $S$.   A set
$S\subseteq V(G)$ is a \textit{hitting set of the triangles of $G$}
if every triangle in $G$ is covered by $S$. We often write
\emph{hitting set} instead of hitting set of the triangles.

We call \emph{clique cover} of $G$ a set of disjoint cliques of $G$ that partition $V(G)$.

 \subsection*{Orientable and non-orientable prismatic graphs}

Let $T=\{a,b,c\}$ be a triangle in  a graph $G$. There are two cyclic
permutations of $T$, and we use the notation
$a \rightarrow b \rightarrow c \rightarrow a$ to denote the cyclic
permutation mapping $a$ to $b$, $b$ to $c$ and $c$ to $a$. Thus
$a \rightarrow b \rightarrow c \rightarrow a$ and
$b \rightarrow c \rightarrow a \rightarrow b$ mean the same
permutation.

A prismatic graph $G$ is \textit{orientable} if there is a choice of a
cyclic permutation $O(T)$ for every triangle $T$ of $G$, such that if
$S=\{s_1,s_2,s_3\}$ and $T=\{t_1,t_2,t_3\}$ are disjoint triangles and
$s_it_i$ is an edge for $1\leq i\leq 3$, then $O(S)$ is
$s_1\rightarrow s_2\rightarrow s_3 \rightarrow s_1$ if and only if
$O(T) $ is $t_1\rightarrow t_2 \rightarrow t_3 \rightarrow t_1$.  In that case, the set of permutations containing $O(T)$ for every triangle $T$ of  $G$ is called a \emph{correct orientation} of $G$.

 A graph $G$ is \textit{non-orientable} if there exists  no correct orientation of $G$.

Orientable and non-orientable prismatic graphs have very different
structures.  By Theorem~\ref{l:27_ou_5}, a non-orientable prismatic graph contains
at most 9 disjoint triangles.  It might seem surprising that having
a tenth triangle in a prismatic graph implies the existence of an
orientation.  This is because having a large number of disjoint
triangles in a prismatic graph entails so many constraints that the
only way to satisfy them all is in an orientable prismatic graph.

There is a nice characterisation of orientable prismatic
graphs.  The \emph{rotator} and the \emph{twister} are the graphs
represented on Figure~\ref{f:rt}. In the rotator there exists one triangle that intersects all triangles, we call it the \emph{ center of the rotator}.

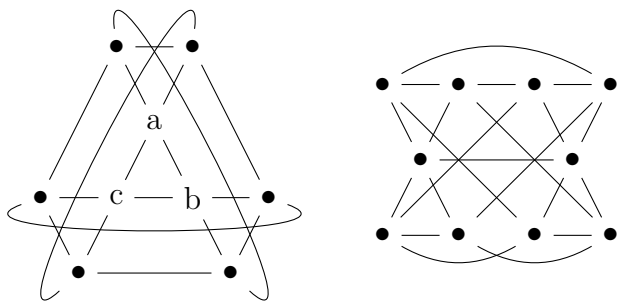
\begin{figure}[h]
 	\centering	
 	\begin{tikzpicture}
 \node[-] (a) at (0.5,-2) {$\bullet$};
 \node[-] (b) at (1,-3) {$\bullet$};
 \node[-] (c) at (1.5,-2) {c};
 \node[-] (d) at (1.5,0) {$\bullet$};
 \node[-] (e) at (2,-1) {a};
 \node[-] (f) at (2.5,0) {$\bullet$};
 \node[-] (g) at (2.5,-2) {b};
 \node[-] (h) at (3,-3) {$\bullet$};
 \node[-] (i) at (3.5,-2) {$\bullet$};

 \draw[-] (a) -- (b); 
 \draw[-] (a) -- (c);
\draw[-] (a) -- (d); 
 \draw[-] (a) to[bend right=160] (i);  
 \draw[-] (b) -- (c); 
 \draw[-] (b) to[bend left=160] (f); 
 \draw[-] (b) -- (h); 
 \draw[-] (c) -- (e); 
 \draw[-] (c) -- (g); 
\draw[-] (d) -- (e); 
 \draw[-] (d) -- (f); 
 \draw[-] (d) to[bend left=160](h); 
 \draw[-] (e) -- (f); 
 \draw[-] (e) -- (g); 
\draw[-] (f) --(i); 
 \draw[-] (g) -- (h);
  \draw[-] (i) -- (h);  
    \draw[-] (i) -- (g);

\node[-] (a2) at (5,-0.5) {$\bullet$};
 \node[-] (b2) at (5,-2.5) {$\bullet$};
 \node[-] (c2) at (5.5,-1.5) {$\bullet$};
 \node[-] (d2) at (6,-0.5) {$\bullet$};
 \node[-] (e2) at (6,-2.5) {$\bullet$};
 \node[-] (f2) at (7,-0.5) {$\bullet$};
 \node[-] (g2) at (7.5,-1.5) {$\bullet$};
 \node[-] (h2) at (7,-2.5) {$\bullet$};
 \node[-] (i2) at (8,-0.5) {$\bullet$};
  \node[-] (j2) at (8,-2.5) {$\bullet$};
  
   \draw[-] (c2) -- (g2); 

 \draw[-] (a2) -- (d2); 
 \draw[-] (a2) -- (c2);
  \draw[-] (d2) -- (c2);
\draw[-] (b2) -- (c2); 
 \draw[-] (b2)-- (e2);  
 \draw[-] (c2) -- (e2); 
 
 \draw[-] (f2) -- (i2); 
 \draw[-] (f2) -- (g2);
  \draw[-] (i2) -- (g2);
\draw[-] (h2) -- (g2); 
 \draw[-] (h2)-- (j2);  
 \draw[-] (g2) -- (j2);  
 
  \draw[-] (a2) to[bend left=30] (i2); 
 \draw[-] (a2)-- (h2);
  \draw[-] (d2) -- (j2);
\draw[-] (d2) -- (f2); 
 \draw[-] (b2)  to[bend right=30]  (h2);
 \draw[-] (b2) -- (f2);
 \draw[-] (e2)-- (i2);  
 \draw[-] (e2)  to[bend right=30]  (j2);
 \end{tikzpicture}
  
\caption{The rotator with center $\{a,b,c\}$ and the twister\label{f:rt}}  
\end{figure}

\begin{theorem}[$6.1$ in~\cite{chudnovsky_claw-free_2008}] \label{t:rotattor et twister}
  A prismatic graph is orientable if and only if it is $\{\mbox{rotator, twister}\}$-free.
\end{theorem}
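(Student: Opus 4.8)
The plan is to recast orientability as a balance (parity) condition on an auxiliary signed graph built from the triangles of $G$, and then to extract from any inconsistency an induced rotator or twister. Two features of a prismatic graph drive everything. First, no edge lies in two triangles: if $uvw$ and $uvw'$ were triangles then $w'$ would have two neighbours in $uvw$; hence two triangles meet in at most one vertex, and (as noted in the excerpt) two \emph{disjoint} triangles $S,T$ always form a prism whose matching $s_i\leftrightarrow t_i$ is canonical, each $s_i$ being joined to its unique neighbour in $T$. Second, orientability is hereditary: an induced subgraph of a prismatic graph is prismatic, and its prisms are prisms of $G$, so a correct orientation of $G$ restricts to one of any induced subgraph. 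These already settle the easy direction by contraposition: it suffices to check that the rotator and the twister are themselves non-orientable. Each has exactly four triangles. In the rotator, three of them are pairwise disjoint (nine vertices), and propagating a cyclic orientation through the three prism-matchings around the triple returns the reverse orientation; in the twister the four triangles form a prism $4$-cycle (two disjoint pairs, each of the remaining two pairs sharing a vertex) around which the orientation is again reversed. Thus any prismatic graph containing one of them as an induced subgraph is non-orientable.

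For the main direction, introduce the signed graph $H$: its vertices are the triangles of $G$, two triangles are adjacent exactly when they are disjoint, and each edge $ST$ carries a sign recording whether the canonical matching sends a fixed reference orientation of $S$ to that of $T$ or to its reverse. A correct orientation of $G$ is precisely a choice of one of the two cyclic orientations at every triangle that is consistent with all signs; orienting one triangle per component of $H$ arbitrarily and propagating along a spanning forest, such a choice exists if and only if every cycle of $H$ is \emph{balanced} (carries an even number of negative edges). Hence $G$ is non-orientable if and only if $H$ has an unbalanced cycle. Assume $G$ is non-orientable and fix a \emph{shortest} unbalanced cycle $C=T_1\cdots T_\ell$. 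It is chordless: a chord $T_iT_j$ splits $C$ into two shorter cycles whose signs multiply to that of $C$, so one of them would be a shorter unbalanced cycle. Consequently non-consecutive $T_i$ are non-adjacent in $H$, i.e. they share a vertex.

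If $\ell=3$ the three triangles are pairwise disjoint (nine vertices) and the composed matching $T_1\rightarrow T_2\rightarrow T_3\rightarrow T_1$ is a self-bijection of $T_1$; it reverses the cyclic orientation of $T_1$ exactly because $C$ is unbalanced, and on a three-element set a reversing bijection must be a transposition. Its fixed vertex $s\in T_1$ has orbit $s\rightarrow t\rightarrow u\rightarrow s$ consisting of matching edges, so $\{s,t,u\}$ is a triangle, the \emph{centre}; the transposition forces the remaining two matchings to be ``crossed''. Since in a prism the only edges between the two triangles are the matching edges, the induced subgraph on the nine vertices is completely determined, and it is exactly the rotator. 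If $\ell=4$ the diagonal pairs $\{T_1,T_3\}$ and $\{T_2,T_4\}$ each share a vertex (chordlessness), the four consecutive pairs are prisms, and the monodromy around the $4$-cycle is orientation-reversing; a short analysis of the two shared vertices together with the four matchings, again using that the only cross-edges are matching edges, pins the $4\cdot 3-2=10$ vertices down to the twister.

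The main obstacle is to rule out $\ell\ge 5$. This cannot follow from the signs alone, because a chordless cycle admits no chord to split, and it is precisely here that the fine structure of prismatic graphs must enter. Concretely, one must show that a prismatic graph admits no chordless cycle of length at least $5$ in $H$ with the forced pattern (consecutive triangles disjoint, each non-consecutive pair sharing a single vertex, each triangle having one private and two shared vertices), or else that any such configuration, once completed under the prismatic constraints, already contains an induced rotator or twister through extra forced adjacencies. Proving that long chordless disjointness-cycles among triangles are impossible — equivalently, that the required short-circuiting edge always exists — is the hard part, and is where the structural description of prismatic graphs, and the resulting bound on their triangle structure, is genuinely needed.
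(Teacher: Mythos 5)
Your reduction of orientability to balance in the signed triangle-disjointness graph $H$ is sound, and your handling of the easy direction and of the length-3 case is essentially correct (three pairwise disjoint triangles with orientation-reversing monodromy do force a centre triangle plus crossed matchings, and since in a prismatic graph the only edges between disjoint triangles are the matching edges, the nine vertices induce exactly a rotator). But the proposal contains a genuine gap, and you name it yourself: you never rule out a shortest unbalanced cycle of length $\ell\geq 5$ in $H$. That case is not a loose end one could tidy up with "the required short-circuiting edge always exists" --- it is the entire content of the hard direction. Nothing in the sign calculus forbids a long chordless unbalanced cycle of triangles with the intersection pattern you describe (consecutive pairs disjoint, non-consecutive pairs meeting in one vertex); excluding it requires a detailed analysis of how triangles can overlap in a prismatic graph, which is exactly the kind of structural work carried out by Chudnovsky and Seymour. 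Note also that your length-4 case is only sketched: you still must show that the two diagonal shared vertices are adjacent and that, given unbalancedness, the four matchings are forced up to isomorphism; this is plausible but not done. So as it stands the argument proves: "non-orientable implies an unbalanced cycle exists", and converts short unbalanced cycles into a rotator or twister, but does not prove the theorem.

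For calibration: the paper you are being compared against does not prove this statement at all --- it quotes it as Theorem 6.1 of Chudnovsky and Seymour's paper on non-orientable prismatic graphs, where the proof occupies a substantial part of a long structural analysis (the same analysis that yields the menagerie used elsewhere in the paper). So your framework is a reasonable way to \emph{organise} such a proof, and it isolates exactly where the difficulty sits, but the step you leave open is the step that makes the theorem a theorem rather than an observation about signed graphs.
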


The structure of non-orientable prismatic graphs presented in~\cite{chudnovsky_claw-free_2008} can be seen as a
description of how the rotator and the twister can be completed in order to obtain all non-orientable prismatic graphs.

\section{Non-orientable prismatic graphs}
\label{sec:no}

Our goal in this section is to prove the following.

 \begin{lemma}\label{supertheorem}
   Every prismatic non-orientable graph contains a hitting set of the
   triangles of cardinality smaller or equal to 10.
\end{lemma}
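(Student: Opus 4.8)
The plan is to read the bound off the structural description of non-orientable prismatic graphs developed in this section, using Theorem~\ref{t:rotattor et twister} as the entry point. Since a non-orientable prismatic graph is not $\{$rotator, twister$\}$-free, it contains an induced rotator or an induced twister, and I would use this fixed subgraph as the anchor around which all triangles are organised.

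First I would isolate the elementary mechanism that lets a few triangles control all the others. Fix any triangle $T=\{a,b,c\}$ of a prismatic graph $G$. Every vertex outside $T$ has a unique neighbour in $T$, so $V(G)\setminus T$ partitions into the classes $A$, $B$, $C$ of vertices whose neighbour in $T$ is $a$, $b$, $c$ respectively. No two vertices of a common triangle lie in the same class, since otherwise $a$, $b$ or $c$ would have two neighbours in that triangle, contradicting the prismatic condition. Hence every triangle of $G$ is either met by $T$ or is a transversal with exactly one vertex in each of $A$, $B$, $C$, necessarily forming a prism with $T$. Thus hitting all triangles reduces to hitting $T$ together with all prisms based on $T$, and this is the mechanism by which a small, well-chosen set controls the rest.

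Next I would apply this to the anchor. In the rotator case the natural choice of $T$ is the center of the rotator, which by definition meets every triangle of the rotator; the structural description should then confine the transversal prisms to a bounded region, so that the center together with a bounded number of further vertices meets every triangle of $G$. In the twister case one argues in the same way with a fixed small hitting set of the twister. I expect the verification to be organised according to the basic classes of the Chudnovsky--Seymour description (as corrected by Javadi and Hajebi): for each class one exhibits an explicit hitting set and bounds its cardinality, and the constant $10$ is the maximum of these bounds. The same analysis should, for graphs with more than $27$ vertices, yield the sharper bound of Theorem~\ref{l:27_ou_5}.

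The hardest part will be the case analysis itself: the structural description involves many long parametrised definitions, and in each basic class one must check that the chosen set really meets every triangle --- in particular the transversal prisms, which are easy to overlook. The delicate quantitative point is to keep the number of vertices added on top of the rotator or twister core uniformly bounded across all classes, so that the total never exceeds $10$ rather than some larger class-dependent constant; this is exactly the ``careful verifications'' that the introduction warns about, and obtaining a single uniform bound is where most of the work lies.
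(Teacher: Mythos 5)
Your second half---organise the verification by the basic classes of the corrected Chudnovsky--Seymour description, exhibit an explicit hitting set in each class, and take $10$ as the maximum of the per-class bounds---is exactly the paper's proof. But the anchor mechanism you put in front of it (fix a rotator centre or a twister via Theorem~\ref{t:rotattor et twister}, then control all remaining triangles as transversal prisms over the anchor) is not how the paper argues, and it cannot deliver the constant. It is essentially Hajebi's argument, described in Section~\ref{hajebi}, and that route yields $15$, not $10$. Concretely, your hope that ``the structural description should then confine the transversal prisms to a bounded region'' around the anchor fails in the extremal class: the graph may be all of $\Sigma$, the complement of the Schl\"afli graph, whose $45$ triangles are spread over all $27$ vertices, and a fixed triangle covers only $13$ of them; no locality argument around an anchor finishes this case. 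What the paper uses instead is Lemma~\ref{claimSchlafli} (in a prismatic graph the neighbourhood of any vertex is a hitting set) together with the fact that $\Sigma$ is $10$-regular; this single class is where the bound $10$ actually comes from (Lemma~\ref{Sclacomp} shows it is tight), and your mechanism plays no role in it.

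Two steps are also genuinely missing, without which the per-class verification cannot legitimately be applied to an arbitrary non-orientable prismatic graph. First, the menagerie classifies only \emph{rigid} non-orientable prismatic graphs; you need Theorem~\ref{rigidtoprism} to reduce the general case to the rigid case, together with the observation that replicating non-core vertices and deleting edges between non-core vertices creates no new triangles, so a hitting set of the rigid graph remains one after these operations. Second, several classes of the menagerie are not fixed finite graphs but are built by the multiplication and exponentiation operations, so their hitting sets cannot simply be ``exhibited''; the paper needs Lemmas~\ref{multi} and~\ref{exp}, proving that under suitable hypotheses these operations preserve hitting sets (a point requiring care, since multiplication does not even preserve being prismatic in general). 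As written, your proposal defers the entire case analysis---which is where all the content of the proof lies---and replaces it with a preamble that would not simplify any of the fourteen cases.
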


The \emph{core} of a graph $G$ is the union of all triangles of $G$.  Clearly,
in a prismatic graph, deleting an edge between two vertices that are
not in the core yields a prismatic graph. It follows that vertices not
in the core are less structured than vertices in the core.  Clearly,
to prove Lemma~\ref{supertheorem}, we may restrict our attention to
the cores of the graphs in the class. 

\medskip
 
A prismatic graph $G$ with core $W$ is \textit{rigid} if 
\begin{itemize}
\item there do not exist distinct $u,v\in V(G) \setminus W$ adjacent
  to precisely the same vertices in $W$, and
\item every two non-adjacent vertices of $G$ have a common neighbour in $W$.
\end{itemize}
 
\emph{Replicating} a vertex $v$ in a graph $G$ means
replacing $v$ by a stable set $S$ that is complete to $N(v)$ and
anticomplete to $V(G) \sm (N(v) \cup \{v\})$. We need the following. 

 \begin{theorem}[\label{rigidtoprism}2.2 from \cite{chudnovsky_claw-free_2008}]
   Every non-orientable prismatic graph can be obtained from a rigid
   non-orientable prismatic graph by replicating vertices not in the
   core, and then deleting edges between vertices not in the core.
 \end{theorem}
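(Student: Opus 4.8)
The plan is to reverse the two construction operations. Both of them --- replicating a vertex $v\notin W$, and deleting an edge between two vertices of $V(G)\sm W$ --- neither create nor destroy any triangle: a replica of $v$ inherits the neighbourhood $N(v)$, which spans no edge since $v$ lies in no triangle, so no replica lies in a triangle; and a deleted non-core edge $xy$ lies in no triangle, since a common neighbour of $x,y$ would place $x$ in a triangle. Hence both operations fix the core $W$ and preserve the prismatic property; since they leave the set of triangles unchanged, they also preserve (non-)orientability. It therefore suffices to build, from $G$, a rigid non-orientable prismatic graph $H$ with the same core $W$ from which $G$ is recovered by replicating non-core vertices and then deleting non-core edges.

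First I would enforce the second rigidity condition by adding edges. Call a pair of non-adjacent vertices \emph{bad} if it has no common neighbour in $W$. If $x,y$ is bad then both lie outside $W$: were $x\in W$, it would sit in a triangle $T$, and $y$ (non-adjacent to $x$, hence outside $T$) would have a unique neighbour in $T\subseteq W$, necessarily adjacent to $x$, giving a common neighbour in $W$. The crucial claim is that a bad pair $x,y$ has no common neighbour \emph{at all}, so that the edge $xy$ may be added without creating a triangle or enlarging $W$. This is exactly where non-orientability must be used: the claim is false for orientable prismatic graphs (one can exhibit an induced $P_3$ of non-core vertices whose ends share no core-neighbour), so I would invoke Theorem~\ref{t:rotattor et twister} and argue from the rich triangle structure of a rotator or a twister that no such configuration survives. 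Adding bad edges one at a time --- each step keeping the graph non-orientable prismatic with core $W$, so that the claim reapplies --- terminates in a graph $G_1$ in which every two non-adjacent vertices have a common neighbour in $W$.

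Next I would enforce the first rigidity condition by identifying replicas. Partition $V(G_1)\sm W$ according to neighbourhood in $W$. I claim that two vertices $u,v$ in one class have identical neighbourhoods in $G_1$: if some $p$ were adjacent to $u$ but not to $v$, then $p\notin W$ (a core-neighbour of $u$ is one of $v$ as well), so by the second condition $v$ and $p$ have a common neighbour $z\in W$; but $u$ shares $v$'s core-neighbourhood, so $u$ is adjacent to $z$, and then $u,p,z$ form a triangle, contradicting $u\notin W$. Thus each class is a stable set of pairwise false twins, that is, exactly a replicated vertex. Deleting all but one representative of each class yields $H$: it is prismatic with core $W$ (deleting non-core vertices changes no triangle), non-orientable (it has the same triangles as $G_1$), satisfies the first condition by construction, and inherits the second (a common neighbour in $W$ of a non-adjacent pair is never deleted). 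Finally $G$ is recovered from $H$ by replicating every representative back to its class, which reconstructs $G_1$ exactly --- for class members $w,w'$ of classes with representatives $r,r'$ one has $w\sim w'$ iff $r\sim r'$, by the false-twin property --- and then deleting the edges added in the previous paragraph.

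The routine parts are the preservation lemmas, the false-twin identification, and the recovery bookkeeping. The single real obstacle is the crucial claim of the second paragraph: ruling out, in a non-orientable prismatic graph, an induced $P_3$ on non-core vertices whose two ends have no common neighbour in the core. I expect this to require a careful case analysis against the explicit rotator and twister structures supplied by Theorem~\ref{t:rotattor et twister}.
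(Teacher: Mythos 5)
This statement is not proved in the paper at all: it is quoted as Theorem 2.2 of Chudnovsky and Seymour \cite{chudnovsky_claw-free_2008}, so there is no internal proof to compare yours against --- what you are attempting is a reproof of their result. Within your attempt, the reduction machinery is sound and carefully executed: both operations preserve the core, the prismatic property and (non-)orientability, since all triangles live inside the core and neither operation touches adjacencies among core vertices; a bad pair must lie outside the core; once no bad pairs remain, non-core vertices with equal core-neighbourhoods are non-adjacent false twins (your triangle argument for this is correct, using that every non-core vertex has a nonempty core-neighbourhood because some triangle exists); the quotient is then rigid, still non-orientable prismatic, and replication followed by deletion of the added edges recovers $G$.

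However, the proof has a genuine gap, and you have located it yourself: the ``crucial claim'' that in a non-orientable prismatic graph two non-adjacent vertices with no common neighbour in the core have no common neighbour at all. This claim is where the entire content of the theorem lives --- it is the only place non-orientability enters, and without it the edge-addition step may destroy prismaticity and the whole construction collapses. You do not prove it; you only announce that it should follow from a case analysis against the rotator and the twister via Theorem~\ref{t:rotattor et twister}. For what it is worth, the claim does appear to be true and provable exactly along the lines you indicate: if $x$--$z$--$y$ is an induced path of non-core vertices whose ends have no common core neighbour, then $x$, $y$, $z$ must have pairwise distinct attachments to every triangle of the rotator (respectively twister) supplied by Theorem~\ref{t:rotattor et twister}, and a finite check of the admissible attachments to its four triangles always forces one of $x$, $y$, $z$ into a triangle, a contradiction. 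But that finite check is the theorem; until it is written out, your proposal is a correct reduction of the statement to an unproved lemma, not a proof.
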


 It follows, from this result, that to prove Lemma~\ref{supertheorem}, it is enough to
 prove it for rigid non-orientable prismatic graphs.

 Theorem~$4.1$ in~\cite{chudnovsky_claw-free_2008} states  that
 the class of rigid non-orientable prismatic graphs is included in the
 union of 13 classes.  Javadi and Hajebi \cite{DBLP:journals/jgt/JavadiH19} discovered that one class is missing in Theorem 4.1,
the so-called class ${\cal F}_0$.  We describe these 14 classes whose union is called the
 \textit{menagerie}.

   In the
 definition of the menagerie, two operations are sometimes needed,
 the so-called \emph{multiplication} and \emph{exponentiation}.

The rest of the section is therefore organized as follows. The first
two subsections describe the multiplication and exponentiation
together with a proof that applying them under some specific
hypotheses preserves the existence of a hitting set. The next 14 subsections each presents one class of the menagerie, together with
a proof of the existence of a small  hitting set.  These subsections with Theorem~\ref{rigidtoprism} therefore form the proof of
Lemma~\ref{supertheorem}.

Before we start, we state the following  lemma which is a direct consequence of the definition of prismatic graphs.

\begin{lemma}\label{claimSchlafli}
If $v$ be a vertex of a prismatic graph $G$ then  $N_G(v)$ is a hitting set of $G$.
\end{lemma}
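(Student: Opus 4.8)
The plan is to show that $N_G(v)$ meets every triangle of $G$. Let $T = \{a, b, c\}$ be an arbitrary triangle of $G$. I would split into two cases according to whether $v$ belongs to $T$ or not.

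First, suppose $v \notin T$. Then by the definition of a prismatic graph, every vertex not in a given triangle has a unique (in particular, at least one) neighbour in that triangle. Applying this to the triangle $T$ and the vertex $v$, we conclude that $v$ has a neighbour in $T$, say $a \in N_G(v)$. Hence $a \in T \cap N_G(v)$, so $T$ is covered by $N_G(v)$.

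Second, suppose $v \in T$, say $v = a$. Then the other two vertices $b$ and $c$ of the triangle are adjacent to $v$ (since $T$ is a triangle and all its vertices are pairwise adjacent), so $b, c \in N_G(v)$. In particular $T \cap N_G(v) \supseteq \{b, c\} \neq \emptyset$, so again $T$ is covered by $N_G(v)$.

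In both cases $T$ intersects $N_G(v)$, and since $T$ was an arbitrary triangle, $N_G(v)$ is a hitting set of the triangles of $G$. I do not expect any real obstacle here: the statement is essentially an immediate unpacking of the definitions, the only thing to be careful about being the case distinction on whether $v$ lies in the triangle, since the prismatic property is stated only for vertices outside the triangle and does not directly apply when $v \in T$.
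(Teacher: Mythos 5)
Your proof is correct and is exactly the argument the paper has in mind: the paper states this lemma without proof, calling it ``a direct consequence of the definition of prismatic graphs,'' and your two-case unpacking (using prismaticity when $v\notin T$, and pairwise adjacency within $T$ when $v\in T$) is precisely that consequence spelled out. Your care about the case $v\in T$ is warranted, since the paper's convention $v\notin N_G(v)$ means the triangle must be hit by one of its other two vertices, which your argument handles.
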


\subsection{Multiplication}

Let $H$ be a prismatic graph and $X$ be a subset of vertices of $H$.
For each vertex $x\in X$, let $A_x$ be a set of vertices not in $V(H)$
such that for all distinct $x, x'\in X$, $A_x \cap A_{x'}=\emptyset$. Let
$A=\cup _{x\in X} A_x$ and let $\varphi$ be a map from $A$ to the set of
integers such that for all $x\in X$, $\varphi$ is injective on $A_x$.

Let now $G$ be the graph defined as follows:

\begin{itemize}
\item $V(G)=(V(H)\setminus X)\cup A$.
\end{itemize}
Let $v$ and $v'$ be two distinct vertices of $G$.   
\begin{itemize}
\item If there is an $x\in X$ such that both $v$ and $v'$ are in $A_x$
  then $v$ and $v'$ are not adjacent.  
  $A_x$ is a stable set of $G$.
\item If $v$ and $v'$ are in $V(H)\setminus X$ then $vv'\in E(G)$ if
  and only if $vv'\in E(H)$. 
\item If $v\in V(H)\setminus X$ and $v'\in A_x$ for some $x\in X$
  then $vv'\in E(G)$ if and only if $vx\in E(H)$.  
\item If $v \in A_x$ and $v' \in A_{x'}$ where $x,x'\in X$ are distinct and adjacent in $H$, then $vv'\in E(G)$ if and only if
  $\varphi(v)=\varphi(v')$.
\item If $v \in A_x$ and $v' \in A_{x'}$ where $x,x'\in X$ are distinct and nonadjacent in $H$, then $vv'\notin E(G)$ if and only if
  $\varphi(v)=\varphi(v')$.
\end{itemize}

The graph $G$\textit{ is obtained from $H$ by multiplying} $X$. For
$x \in X$, the set $A_x$ is the \textit{set of new vertices
  corresponding to x} and $\varphi$ is the \textit{corresponding
  integer map}.  As noted in~\cite{chudnovsky_claw-free_2008}, the
multiplication does not preserve being prismatic in general, but it
is used only in situations where it does.

\begin{lemma}\label{multi}

 If $H[X]$ is an induced subgraph of $C_4$ and  non-adjacent
  vertices of $X$  have no common neighbours in $V(H)\setminus X$ then any hitting set of $H$  disjoint from $X$ is also a hitting set of $G$.
\end{lemma}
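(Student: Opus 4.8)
The plan is to define the natural projection $\pi\colon V(G)\to V(H)$ that sends each vertex of $V(H)\setminus X$ to itself and each vertex of $A_x$ to $x$, and then to prove that $\pi$ carries every triangle of $G$ to a triangle of $H$. Granting this, the lemma follows immediately. Let $S$ be a hitting set of $H$ with $S\cap X=\emptyset$, and let $T$ be a triangle of $G$. Then $\pi(T)$ is a triangle of $H$, so it contains some vertex $y\in S$; since $S$ avoids $X$ we have $y\in V(H)\setminus X$, and as $\pi$ is the identity on $V(H)\setminus X$ and maps each $A_x$ into $X$, the only vertex of $V(G)$ sent to $y$ is $y$ itself. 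Hence $y\in T$, so $S$ meets $T$. As $T$ was an arbitrary triangle of $G$, this shows $S$ is a hitting set of $G$.

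Everything thus reduces to the claim that $\pi(T)$ is a triangle of $H$ for every triangle $T$ of $G$. First I would check that $\pi$ is injective on $T$: two distinct vertices of $T$ are adjacent in $G$, so they cannot both lie in a single $A_x$ (which is stable), and they cannot be identified by $\pi$ in any other way, since $\pi$ is injective on $V(H)\setminus X$ and maps each $A_x$ into $X$. Thus $\pi(T)$ consists of three distinct vertices of $H$. Next I would verify that each edge of $T$ projects to an edge of $H$. Inspecting the adjacency rules of the multiplication, an edge of $T$ with at least one endpoint in $V(H)\setminus X$, or with both endpoints in $A_x,A_{x'}$ where $xx'\in E(H)$, always projects to an edge of $H$. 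The only way an edge of $T$ can project to a non-edge is the ``bad'' case: its endpoints lie in $A_x$ and $A_{x'}$ with $x,x'$ non-adjacent in $H$ but with distinct $\varphi$-values, so that they are nonetheless adjacent in $G$.

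The hard part, and the only place where the two hypotheses are used, is to rule out triangles containing such a bad edge. Suppose $T=\{a,a',z\}$ with $a\in A_x$, $a'\in A_{x'}$, $xx'\notin E(H)$, and $\varphi(a)\neq\varphi(a')$. I would split on the location of $z$. If $z\in V(H)\setminus X$, then $za,za'\in E(G)$ force $zx,zx'\in E(H)$, so $z$ is a common neighbour in $V(H)\setminus X$ of the non-adjacent pair $x,x'$ of $X$, contradicting the second hypothesis. If instead $z\in A_{x''}$, then $x''$ is distinct from $x$ and $x'$; since $H[X]$ is an induced subgraph of $C_4$, any three of its vertices induce a $P_3$, so $xx'$ is the unique non-edge among $\{x,x',x''\}$ and both $xx''$ and $x'x''$ are edges of $H$. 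The adjacency rule then forces $\varphi(a)=\varphi(z)=\varphi(a')$, contradicting $\varphi(a)\neq\varphi(a')$. In either case no such triangle exists, so every edge of $T$ projects to an edge of $H$ and $\pi(T)$ is a triangle of $H$, which completes the argument. Note that the subcase of a triangle entirely contained in $A$ is automatically covered by the second subcase, since three distinct vertices of $X$ always induce a $P_3$ and hence produce a bad edge.
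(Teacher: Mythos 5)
Your proof is correct and takes essentially the same route as the paper's: its Cases I--III correspond exactly to your no-bad-edge situation, your bad-edge case with $z\in V(H)\setminus X$ (common neighbour of a non-adjacent pair of $X$), and your bad-edge case with $z\in A_{x''}$ (the $\varphi$-propagation contradiction along the $P_3$), with the two hypotheses used in the same places. The only difference is organizational: you make explicit the projection $\pi$ and the claim that it maps triangles of $G$ to triangles of $H$, which the paper establishes implicitly case by case.
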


\begin{proof}
	Let $S_H$ be a hitting set of $S_H$.
  We prove that every triangle $\{u,v,w\}$ in $G$
  is covered by $S_H$. If $\{u,v,w\} \subseteq V(H)$, then it is
  covered by $S_H$, so we may assume that $|\{u,v,w\}\cap A|>0$.

  \medskip
  
  \noindent\textbf{Case I: $|\{u,v,w\}\cap A|=1$}

  Suppose up to symmetry that there exists $x\in X$ such that
  $u \in A_x$ and $v,w \in V(H)\setminus X$. Then $\{x,v,w\}$ is a
  triangle in $H$ and it has to be covered by $S_H$. Since
  $X\cap S_H= \emptyset$, $v$ or $w$ belongs to $S_H$.  Hence, $S_H$
  covers $\{u,v,w\}$.

  \medskip
  
  \noindent\textbf{Case II: $|\{u,v,w\}\cap A|=2$}

  Since for every $x\in X$, $A_x$ is a stable set, we may assume up to
  symmetry that there exist distinct $x, x' \in X$ such that
  $u \in A_x$, $v \in A_{x'}$ and $w \in V(H)\setminus X$.  Since $w$ is
  a common neighbour of $u$ and $v$ in $G$, $w$ is a common
  neighbour of $x$ and $x'$ in $H$.  From our assumptions, it follows that $x$
  and $x'$ are adjacent in $H$.  Hence $\{x,x',w\}$ is a triangle in $H$
  and it has to be covered by $S_H$. Since $X\cap S_H= \emptyset$, $w\in S_H$, we have $S_H$ covers $\{u,v,w\}$.

  \medskip
  
  \noindent\textbf{Case III: $|\{u,v,w\}\cap A|=3$} 

 Since for every $x \in X$, $A_x$ is a stable set, there exist then distinct $x, y, z \in X$ such that $u \in A_x$, $v \in A_y$  and $w \in
A_z$. Because of the hypothesis
  on $X$, $H[\{x,y,z\}]$ induces a $P_3$. Without loss of
  generality, suppose $xz\not\in E[H]$.

  Since $x$ and $y$ are adjacent in $H$, in order to have $u$ and $v$
  also adjacent in $G$, we have $\varphi (u)= \varphi (v)$.
  Similarly, $\varphi (v)= \varphi (w)$. Hence $\varphi(u)=\varphi (w)$.

  But since $x$ and $z$ are not adjacent, in order to have $u$ and
  $v$ adjacent in $G$, we need $\varphi (u) \neq \varphi (w)$, a
  contradiction.
\end{proof}

\subsection{Exponentiation}

A triangle $T=\{a,b,c\}$ of a graph $G$ is a \emph{ leaf triangle at c} if every triangle of $G$ distinct from $T$ contains neither a nor b.

Let $T=\{a,b,c\}$ be a leaf triangle at $c$ of a prismatic graph $H$. We define a partition of the neighbours of $c$, distinct from $a$ and $b$, into three disjoint sets: $D_1$, $D_2$, and $D_3$ as follows (see Figure~\ref{f:Nc}).  Let $v\neq a,b$ be a vertex adjacent to $c$ in $H$, then :

\begin{itemize}
\item $v\in D_1$ if $v$ belongs to a triangle that does not contain $c$.
\item $v\in D_2$ if $v\notin D_1$ and $v$ belongs to a triangle (then this triangle is unique and contains $c$). 
\item $v\in D_3$ if $v$ does not belong to any triangle.
\end{itemize}

\begin{figure}[h]
 	\centering	
 	\begin{tikzpicture}
 \node[-] (a) at (3.5,-1) {$a$};
 \node[-] (b) at (5.5,-1) {$b$};
 \node[-] (c) at (4.5,-3) {$c$};
 \node[-] (d) at (1,-7) {$\bullet$};
 \node[-] (e) at (2,-5) {$\bullet$};
 \node[-] (f) at (3,-7) {$\bullet$};
 \node[-] (g) at (4,-6) {$\bullet$};
 \node[-] (h) at (3,-8) {$\bullet$};
 \node[-] (i) at (5,-8) {$\bullet$};
 \node[-] (j) at (6,-5) {$\bullet$};
 \node[-] (k) at (8,-5) {$\bullet$};

 \draw[-] (a) -- (b); 
 \draw[-] (a) -- (c);
\draw[-] (b) -- (c); 
 \draw[-] (c) -- (e); 
 \draw[-] (c) -- (g); 
 \draw[-] (c) -- (j); 
 \draw[-] (c) -- (k); 
\draw[-] (d) -- (e); 
 \draw[-] (d) -- (f); 
 \draw[-] (e)-- (f); 
 \draw[-] (g) -- (h); 
 \draw[-] (i) -- (g); 
\draw[-] (i) -- (h); 
 \draw[-] (g) -- (j);
    
\node[-] (D1) at (1,-4) {$D_1$};    
\draw[dashed] (3,-5.5) ellipse (2cm and 1cm);

\node[-] (D2) at (6,-6) {$D_2$};    
\draw[dashed] (6,-5) ellipse (0.9cm and 0.5cm);

\node[-] (D3) at (8,-6) {$D_3$};    
\draw[dashed] (8,-5) ellipse (0.9cm and 0.5cm);
 \end{tikzpicture}
  
\caption{Neighbourhood of $c$ \label{f:Nc}}  
\end{figure}
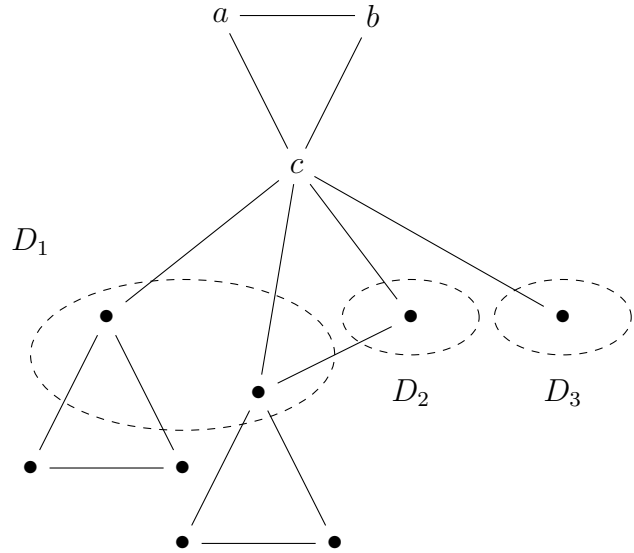

%

Let $A$, $B$ and $C$ be three pairwise disjoint sets of vertices. 

The graph $G$ is defined as follows:

\begin{itemize}
\item $V(G)=(V(H)\setminus \{a,b\})\cup A \cup B \cup C$
\end{itemize}
with the following adjacencies: 

\begin{itemize}
\item Vertices in $V(H)\sm \{a, b\}$ are adjacent in $G$ if and only
  if there are adjacent in $H$. 
\item $A$, $B$ and $C$ are stable sets.
\item Every vertex in $A$ has at most one neighbour in $B$ and vice
  versa.
\item Every vertex in $V(H)\setminus \{a,b\}$ adjacent (resp.\
  non-adjacent) to $a$ in $H$ is complete (resp.\ anticomplete) to $A$
  in $G$.
\item Every vertex in $V(H)\setminus \{a,b\}$ adjacent (resp.\
  non-adjacent) to $b$ in $H$ is complete (resp.\ anticomplete) to $B$
  in $G$.
\item $C$ is complete to $D_1 \cup D_3$ and anticomplete to
  $V(H)\setminus (\{a,b\}\cup D_1\cup D_3)$.
\item Every vertex in $C$ is adjacent to exactly one end of every edge
  between $A$ and $B$ and adjacent to every vertex in $A \cup B$ with
  no neighbour in $A\cup B$.

\end{itemize}

The graph $G$ is obtained from $H$ by \textit{exponentiating the leaf
  triangle $\{a,b,c\}$}.


Before proving that the exponentiation preserves hitting sets, note that  every prismatic graph $H$ with a leaf triangle $T=\{a,b,c\}$ at $c$ has a hitting set $S_H$
that contains $c$ but neither $a$ nor $b$.

\begin{lemma}\label{exp}
  If $G$ is prismatic and $S_H$ is a  hitting set of $H$ containing $c$ but neither $a$ nor $b$, then
  $S_H$ is also a hitting set of $G$.
\end{lemma}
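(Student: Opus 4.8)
The plan is to show that every triangle of $G$ contains a vertex of $S_H$. Recall that the exponentiation replaces the two leaf vertices $a$ and $b$ by the stable sets $A$ and $B$, keeps the rest of $H$ essentially unchanged, and introduces a new stable set $C$ attached to $D_1\cup D_3$. Since $S_H\subseteq V(H)\sm\{a,b\}\subseteq V(G)$ and $c\in S_H$, the idea is to classify each triangle $\{u,v,w\}$ of $G$ according to how many of its vertices lie in the new sets $A\cup B\cup C$, and to pull each case back to a triangle (or forced structure) of $H$ that $S_H$ must already cover.

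First I would dispose of the triangles contained in $V(H)\sm\{a,b\}$: these are exactly the triangles of $H$ avoiding $a$ and $b$, so $S_H$ covers them. Then I would handle triangles meeting $A\cup B\cup C$. The key observation is that each of $A$, $B$, $C$ is stable, so a triangle contains at most one vertex from each, and its vertices outside $A\cup B\cup C$ lie in $V(H)\sm\{a,b\}$. For a triangle with one vertex $\alpha\in A$ and two vertices in $V(H)\sm\{a,b\}$, the adjacency rules say those two vertices are both adjacent to $a$ in $H$, so replacing $\alpha$ by $a$ yields a triangle $\{a,v,w\}$ of $H$; since $T=\{a,b,c\}$ is a leaf triangle at $c$, any triangle containing $a$ must be $T$ itself, forcing $\{v,w\}=\{b,c\}$ — but $b\notin V(H)\sm\{a,b\}$, a contradiction, so no such triangle exists unless one of $v,w$ equals $c\in S_H$. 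The symmetric argument handles one vertex in $B$. For a triangle with one vertex $\gamma\in C$, the neighbours of $\gamma$ lie in $D_1\cup D_3\cup A\cup B$; here I would use the definition of $D_3$ (its vertices lie in no triangle of $H$) and the leaf-triangle property to show the two remaining vertices cannot both come from $D_1$ without creating a triangle of $H$ that $S_H$ covers.

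The genuinely delicate cases are the mixed triangles using vertices from two of the sets, particularly $\{\alpha,\beta,\gamma\}$ with $\alpha\in A$, $\beta\in B$, $\gamma\in C$, and triangles $\{\alpha,\beta,w\}$ with $\alpha\in A$, $\beta\in B$, $w\in V(H)\sm\{a,b\}$. In the latter, $w$ is adjacent to both $a$ and $b$, so $\{a,b,w\}$ is a triangle of $H$, which again by the leaf property must equal $T$, giving $w=c\in S_H$. For the fully-new triangle with $\gamma\in C$ adjacent to both $\alpha\in A$ and $\beta\in B$, the rule that every vertex of $C$ is adjacent to exactly one end of each $A$–$B$ edge means $\alpha\beta\notin E(G)$, so $\{\alpha,\beta,\gamma\}$ is not a triangle — this contradiction eliminates the case entirely. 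I expect the main obstacle to be keeping the bookkeeping of the exponentiation adjacency rules straight across all these cases and invoking the leaf-triangle hypothesis at exactly the right moment; the recurring engine of the proof is that \emph{any} triangle of $H$ through $a$ or $b$ collapses to $T$, which places $c\in S_H$ into the triangle and thereby certifies coverage.
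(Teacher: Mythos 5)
Your case decomposition is the same as the paper's (classify triangles of $G$ by how many vertices they have in $A\cup B\cup C$), and several cases match the paper's proof exactly: triangles inside $V(H)\sm\{a,b\}$, triangles with one vertex in $A$ or $B$ and two old vertices, the triangle $\{\alpha,\beta,w\}$ forcing $w=c$, and the elimination of $\{\alpha,\beta,\gamma\}$ via the ``exactly one end of each $A$--$B$ edge'' rule. However, there is a genuine gap in your handling of a triangle $\{\gamma,v,w\}$ with $\gamma\in C$ and $v,w\in V(H)\sm\{a,b\}$ (so $v,w\in D_1\cup D_3$). You argue that having $v,w\in D_1$ would ``create a triangle of $H$ that $S_H$ covers'' --- but the triangle of $H$ so created is $\{c,v,w\}$, and it is covered by $S_H$ only through the vertex $c$, which does \emph{not} belong to the $G$-triangle $\{\gamma,v,w\}$. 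So this observation certifies nothing about the triangle you actually need to cover. The configuration is not vacuous either: take $H$ to be a rotator with leaf triangle $\{v_4,v_7,v_1\}$ at $c=v_1$; then $v_2,v_3\in D_1$ and $\{c,v_2,v_3\}$ is a triangle of $H$, and a hitting set such as $\{v_1,v_5,v_6\}$ contains $c$ but misses $\{\gamma,v_2,v_3\}$ entirely. What rules this case out is the hypothesis that $G$ is prismatic, which you never invoke anywhere: since $\gamma$ is anticomplete to $c$, the set $\{\gamma,v,w,c\}$ would induce a diamond in $G$, i.e.\ $\gamma$ would have two neighbours in the triangle $\{c,v,w\}$, contradicting prismaticity. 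This is exactly the paper's argument (its Case II, $u\in C$), and it is the one place where the assumption ``$G$ is prismatic'' is essential; a proof that never uses it cannot be correct, as the rotator example shows.

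A second, smaller omission: you never treat triangles of the form $\{\alpha,\gamma,w\}$ with $\alpha\in A$ (or $B$), $\gamma\in C$ and $w\in V(H)\sm\{a,b\}$; your ``mixed'' discussion covers only $\{\alpha,\beta,\gamma\}$ and $\{\alpha,\beta,w\}$. The paper excludes this case by noting that such a $w$ would be adjacent in $H$ to both $a$ and $c$ (the latter because $w\in D_1\cup D_3\subseteq N_H(c)$), so $\{a,c,w\}$ would be a triangle of $H$ distinct from $\{a,b,c\}$ containing $a$, contradicting the leaf-triangle property. Both missing pieces are repairable, but as written the proposal does not close the argument.
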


\begin{proof}

  Let $\{u,v,w\}$ be a triangle in $G$. We show that one of $u$, $v$
  or $w$ belongs to $S_H$. Since $u,v,w\in V(G)$, none of them is $a$
  or $b$.
  

  \medskip
  
  \noindent\textbf{Case I:} $|\{u,v,w\}\cap (V(H)\setminus \{a,b\})|=3$

  This case is trivial because then $\{u,v,w\}$ is a triangle in $H$ and has to be covered by $S_H$.

  \medskip
  
  \noindent\textbf{Case II: $|\{u,v,w\}\cap (V(H)\setminus \{a,b\})|=2$}

  Without loss of generality suppose $u \notin V(H)$ and $v, w\in V(H)$.
 
  If $u$ belongs to $A$, that means that $\{a,v,w\}$ is a triangle in
  $H$. By our hypothesis this triangle should be T and $c\in \{u, v, w\}$. So
  $S_H$ covers $\{u, v, w\}$.  The case where $u$ belongs to $B$ is
  similar.

  If $u$ belongs to $C$, then $v$ and $w$ have to belong to
  $D_1 \cup D_3$ by definition of the neighbourhood of $C$. Then $\{c,v,w\}$ is a triangle in $G$.  So $\{u,v,w,c\}$ is a diamond in $G$, a contradiction to $G$ being prismatic.

  \medskip
  
  \noindent\textbf{Case III: $|(\{u,v,w\}\cap V(H)\setminus \{a,b\})|=1$}
  
	Without loss of generality suppose $w\in V(H)\setminus \{a\cup b\}$.
  Note that $A$, $B$ and $C$ are stable sets so $\{u,v,w\}$ contains at most
  one vertex of each. 

  Suppose that one of $u,v$ is in $C$ (so the other one is in $A\cup
  B$). Up to symmetry, we may assume
  that $u$ belongs to $A$ and $v$ belongs to $C$.  Then $aw\in E(H)$
  and $cw\in E(H)$. This means that
  $\{a,w,c\}$ is a triangle in $H$. This contradicts $\{a,b,c\}$ being a leaf
  triangle at $c$.
  
  Suppose that none of $u$, $v$ belong to $C$.  Up to symmetry,
  suppose $u\in A$ and $v\in B$.  So, $\{a, b, w\}$ is a triangle
  of $H$ and this triangle can only be $\{a,b,c\}$. So  $w=c\in S_H$.

  \medskip
  
  \noindent\textbf{Case IV: $|(\{u,v,w\}\cap H\setminus \{a,b\})|=0$ }

  This case cannot happen because $A$, $B$ and $C$ are stable sets
  and every vertex of $C$ has a unique neighbour in any edge of $G[A\cup B]$. 
\end{proof}

\subsection{Schläfli-prismatic graphs}
\label{Schladef}

We have to define the Schläfli graph, and it is more convenient to
work in the complement. The \emph{complement of the Schläfli graph}
 has 27 vertices $r^i_j,s^i_j,t^i_j$, $1\leq i,j \leq 3$ with
adjacencies as follows.  For $1 \leq i, i', j, j' \leq 3$:

\begin{itemize}
\item If $i\neq i'$ and $j\neq j'$, then $r^i_j$ is adjacent to
  $r^{i'}_{j'}$, $s^i_j$ is adjacent to $s^{i'}_{j'}$ and $t^i_j$ is
  adjacent to $t^{i'}_{j'}$.
\item If $j=i'$, then $r^i_j$ is adjacent to $s^{i'}_{j'}$, $s^i_j$ is
  adjacent to $t^{i'}_{j'}$ and $t^i_j$ is adjacent to $r^{i'}_{j'}$.
\end{itemize}
        
There are no other edges.
\\

This graph will be denoted by $\Sigma$ throughout the rest of the paper. We will often rely on the fact that $\Sigma$ is vertex-transitive. 

We introduce more notation. We set $R=\{r_j^i : 1\leq i,j \leq 3\}$,
$S=\{s_j^i : 1\leq i,j \leq 3\}$ and $T=\{t_j^i : 1\leq i,j \leq 3\}$
and call \emph{tile} each of the sets $R$, $S$, $T$. We call
\emph{line $i$} of $R$ the set $\{r^i_j : 1\leq j \leq 3\}$ and
\emph{column $j$} of $R$ the set $\{r^i_j : 1\leq i \leq 3\}$. We use
a similar notation for $S$ and $T$.

By definition an edge between $u$ and $v$ in a same tile exists if and only
if $u$ and $v$ are in different lines and columns. Edges between tiles
are conveniently described as follows: for every $i=1, 2, 3$, column $i$
of $R$ is complete to line $i$ of $S$, column $i$ of $S$ is complete
to line $i$ of $T$ and column $i$ of $T$ is complete to line $i$ of
$R$. There are no other edges.

A triangle in $\Sigma$ is \emph{internal} if it is included in a tile, and
\emph{external} otherwise.  The observations above show that an
internal triangle is made of three vertices that are in three
different lines, and also in three different columns of the tile. An
external triangle $\{u,v,w\}$ satisfies
$\{u,v,w\}=\{r^i_j,s^j_k,t^k_i\}$ for some $1\leq i,j,k\leq 3$. 

This shows that there exist $6$ internal triangles in each tile and
$27$ external triangles, that gives $45$ triangles in total. Each
vertex lies in two internal triangles and three external
triangles. Every edge is contained in exactly one triangle. 
See Figure~\ref{f:schafli}.

\begin{figure}[H]
 	\centering	
 	\begin{tikzpicture}
 \node[-] (r11) at (0,2) {$r_1^1$};
 \node[-] (r12) at (0,1) {$r_1^2$};
 \node[-] (r13) at (0,0) {$r_1^3$};
 \node[-] (r21) at (1,2) {$r_2^1$};
 \node[-] (r22) at (1,1) {$r_2^2$};
 \node[-] (r23) at (1,0) {$r_2^3$};
 \node[-] (r31) at (2,2) {$r_3^1$};
 \node[-] (r32) at (2,1) {$r_3^2$};
 \node[-] (r33) at (2,0) {$r_3^3$};
 
 \node[-] (s11) at (3,-1) {$s_1^1$};
 \node[-] (s12) at (3,-2) {$s_1^2$};
 \node[-] (s13) at (3,-3) {$s_1^3$};
 \node[-] (s21) at (4,-1) {$s_2^1$};
 \node[-] (s22) at (4,-2) {$s_2^2$};
 \node[-] (s23) at (4,-3) {$s_2^3$};
 \node[-] (s31) at (5,-1) {$s_3^1$};
 \node[-] (s32) at (5,-2) {$s_3^2$};
 \node[-] (s33) at (5,-3) {$s_3^3$};
 
 \node[-] (t11) at (-3,-2) {$t_1^1$};
 \node[-] (t12) at (-3,-3) {$t_1^2$};
 \node[-] (t13) at (-3,-4) {$t_1^3$};
 \node[-] (t21) at (-2,-2) {$t_2^1$};
 \node[-] (t22) at (-2,-3) {$t_2^2$};
 \node[-] (t23) at (-2,-4) {$t_2^3$};
 \node[-] (t31) at (-1,-2) {$t_3^1$};
 \node[-] (t32) at (-1,-3) {$t_3^2$};
 \node[-] (t33) at (-1,-4) {$t_3^3$};

 \draw[-] (r33) -- (r22); 
 \draw[-] (r33) -- (r21); 
 \draw[-] (r11) to[bend left=70] (r33); 
 
 \draw[-] (r11) -- (r22); 
 \draw[-] (r12) -- (r21); 
 \draw[-] (r12) -- (r33);

  \draw[-] (r33) to[bend left=80](s13); 
  \draw[-] (r33) to[bend left=80] (s23); `
  \draw[-] (r33) to[bend left=80] (s33); 
  
   \draw[-] (r33) -- (t31); 
   \draw[-] (r33) -- (t32); 
   \draw[-] (r33) -- (t33); 
   
   \draw[-] (s13) to[bend left=50] (t31); 
   \draw[-] (s23) to[bend left=50] (t32); 
   \draw[-] (s33) to[bend left=50] (t33); 
 \end{tikzpicture}
 \caption{ $\Sigma$, the complement of the Schläfli graph. (Only the 10 edges and the 5 triangles
   that contain $r_3^3$ are represented.)\label{f:schafli}}
 \end{figure}
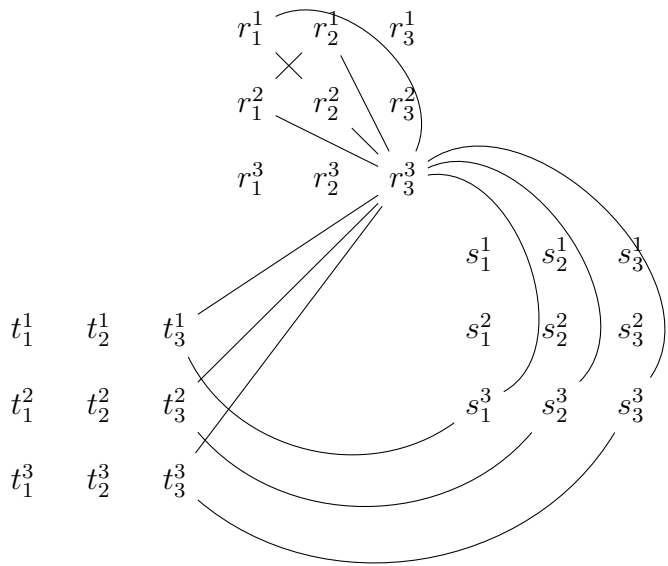	

 We call \textit{Schläfli-prismatic} graph every induced subgraph of
$\Sigma$. It is easy to see that they are prismatic.

%
%
%
%
%


\begin{lemma}\label{Sclacomp}
  A smallest hitting set of  $\Sigma$ has
  cardinality 10.
\end{lemma}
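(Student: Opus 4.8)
The plan is to prove the two matching bounds separately: that some hitting set has size $10$, and that none has size at most $9$.

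For the upper bound I would simply invoke Lemma~\ref{claimSchlafli}. Since $\Sigma$ is prismatic (it is Schläfli-prismatic) and is $10$-regular — each vertex has exactly $4$ neighbours inside its own tile and $3$ in each of the two others — the set $N_\Sigma(v)$ is a hitting set of cardinality $10$ for any vertex $v$. Hence a smallest hitting set has size at most $10$, and everything reduces to showing that no hitting set has size $9$ or less.

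For the lower bound the starting point is a double count. Every vertex of $\Sigma$ lies in exactly $5$ triangles and there are $45$ triangles, so counting pairs (vertex of a hitting set $I$, triangle containing it) gives $5|I|\ge 45$, i.e.\ $|I|\ge 9$. To exclude $|I|=9$, note that equality forces every triangle to contain \emph{exactly} one vertex of $I$; as each edge lies in a unique triangle, $I$ can contain no two adjacent vertices, so $I$ is a stable set. I would then analyse how such an $I$ meets the three tiles. Within a tile two vertices are non-adjacent if and only if they share a line or a column, so a stable subset of a tile lies in a single line or a single column; and because the $6$ internal triangles of a tile can only be hit from within that tile, $I$ must meet each of $R,S,T$ in a \emph{full} line or column, forcing $|I\cap R|=|I\cap S|=|I\cap T|=3$. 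Now the cyclic cross-tile adjacencies (column $i$ of one tile is complete to line $i$ of the next) do the rest: a full \emph{line} of one tile meets all three columns and is therefore complete to the whole next tile, contradicting stability; so each of $I\cap R,I\cap S,I\cap T$ is a full \emph{column}. But two consecutive full columns clash — if $I\cap R$ is column $\gamma$ of $R$, it is complete to line $\gamma$ of $S$, which meets every column of $S$, so whatever column of $S$ is chosen for $I\cap S$ contains a vertex adjacent to $I\cap R$. This final contradiction rules out a stable set of size $9$, whence $|I|\ge 10$ and the smallest hitting set has cardinality exactly $10$.

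The main obstacle is precisely this exclusion of the size-$9$ case: the counting alone only yields $9$, and the content of the lemma is that a hypothetical $9$-element hitting set would have to be a perfect transversal meeting every triangle exactly once, which cannot coexist with both the internal-triangle constraints (forcing a full line or column in each tile) and the external cross-tile adjacencies. I expect the bookkeeping of the cyclic adjacencies — keeping straight which line or column of which tile is complete to which part of the next — to be the most error-prone part, and I would set up the line/column notation carefully and lean on the symmetry $R\to S\to T\to R$ to avoid carrying out the same argument three times.
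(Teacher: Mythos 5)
Your proof is correct and follows essentially the same route as the paper's: the upper bound via Lemma~\ref{claimSchlafli} applied to a vertex of the $10$-regular graph $\Sigma$, and the lower bound by showing a $9$-element hitting set would hit each of the $45$ triangles exactly once, hence be a stable set meeting each tile in a full line or column, which is incompatible with the cross-tile adjacencies. The only (harmless) difference is the endgame: the paper invokes the pigeonhole principle to find two lines or two columns in distinct tiles and notes these always carry an edge, whereas you first eliminate lines outright (a full line of one tile dominates the next tile) and then show two consecutive columns clash; both rest on the same structural facts.
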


\begin{proof}
	Every vertex $v$ in $\Sigma$ has degree 10, so by Lemma \ref{claimSchlafli}, $N_\Sigma(v)$ is a hitting set of $\Sigma$ of size 10. 

  Suppose for a contradiction that $W$ is a hitting set of $\Sigma$ and $|W|=9$.  Since $\Sigma$ contains 45
  triangles and every vertex of   $\Sigma$ is contained in exactly 5 triangles (2
  internal and 3 external), no two vertices of $W$ hit the same
  triangle. Since every edge of  $\Sigma$ is contained in a triangle, it
  follows that $W$ is a stable set. 

  For each tile $X$, a maximum stable set in  $\Sigma$ has cardinality
  3 and is a line or a column. It follows that for $X\in \{R, S, T\}$,
  $W\cap X$ is a line or
  a column of $X$.

    By the pigeon hole principle, $W$ contains either two lines or two
  columns (of different tiles). This contradicts the fact that $W$
  is a stable set, because between two lines (or two columns) of
  different tiles, there exists at least one edge.

\end{proof}

 \subsection{Fuzzily Schläfli-prismatic graphs}

 Let $\{a,b,c\}$ be a leaf triangle at $c$ in a Schläfli-prismatic
 graph $H$. If a prismatic graph $G$ can be obtained from $H$ by multiplying $\{a,b\}$,
 and $A$, $B$ are the two sets of new vertices corresponding to $a$,
 $b$ respectively, the graph $G$ is \emph{ Schläfli-prismatic}. Note that this operation is not iterated.

\begin{lemma}\label{Fuzzily}
  Every fuzzily Schläfli-prismatic graph has a hitting set of
  cardinality smaller or equal to 5.
\end{lemma}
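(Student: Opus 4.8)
The plan is to use the multiplication lemma to pass from $G$ to the underlying Schläfli-prismatic graph $H$, and then to produce the hitting set as a punctured neighbourhood, exploiting both the leaf triangle hypothesis and the local structure of $\Sigma$. So let $G$ be fuzzily Schläfli-prismatic: there is a Schläfli-prismatic graph $H$ with a leaf triangle $\{a,b,c\}$ at $c$ such that $G$ is obtained from $H$ by multiplying $X=\{a,b\}$. Since $a$ and $b$ are adjacent, $H[X]$ is a $K_2$, which is an induced subgraph of $C_4$, and $X$ has no pair of non-adjacent vertices, so the hypotheses of Lemma~\ref{multi} hold (the condition on non-adjacent vertices vacuously). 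By Lemma~\ref{multi}, it then suffices to exhibit a hitting set of $H$ of size at most $5$ that is disjoint from $\{a,b\}$.

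The hitting set I would use is $N_H(a)\setminus\{b\}$. The first task is the size bound, and this is where the leaf hypothesis does its work. Recall that $H$ is an induced subgraph of $\Sigma$, that every vertex of $\Sigma$ lies in exactly $5$ triangles, and that every edge of $\Sigma$ lies in exactly one triangle; hence the five triangles of $\Sigma$ through $a$ pairwise meet only in $a$, and the ten neighbours of $a$ in $\Sigma$ split into five disjoint pairs, one pair per triangle through $a$. One triangle is $\{a,b,c\}$, contributing the pair $\{b,c\}$. For each of the other four triangles $\{a,x_i,y_i\}$ of $\Sigma$ through $a$, I would argue that at most one of $x_i,y_i$ lies in $V(H)$: otherwise $\{a,x_i,y_i\}$ would be a triangle of $H$ containing $a$ and distinct from $\{a,b,c\}$, contradicting that $\{a,b,c\}$ is a leaf triangle at $c$. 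Consequently $|N_H(a)|\le 2+4=6$, and so $|N_H(a)\setminus\{b\}|\le 5$.

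The second task is to check that $N_H(a)\setminus\{b\}$ really is a hitting set of $H$ disjoint from $\{a,b\}$. Disjointness is immediate, since $a\notin N_H(a)$ and $b$ has been deleted. By Lemma~\ref{claimSchlafli}, $N_H(a)$ is a hitting set of $H$, so the only thing to verify after deleting $b$ is that no triangle of $H$ was hit solely by $b$. A triangle of $H$ containing $b$ must, by the leaf property, be $\{a,b,c\}$; but $\{a,b,c\}\cap N_H(a)=\{b,c\}$, so it is still hit by $c\in N_H(a)\setminus\{b\}$. Hence every triangle of $H$ meets $N_H(a)\setminus\{b\}$.

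Combining the two tasks with Lemma~\ref{multi} gives that $N_H(a)\setminus\{b\}$ is a hitting set of $G$ of cardinality at most $5$. I do not expect a genuine obstacle here; the one point needing care is the degree bound $|N_H(a)|\le 6$, which must be derived from the leaf triangle hypothesis (a constraint on $a$, not on $c$) together with the fact that each edge of $\Sigma$ lies in a unique triangle. A tempting alternative, namely taking $c$ together with a hitting set of $H-\{a,b,c\}$, would force one to control $\deg_H(c)$, which the leaf hypothesis does not bound; this is exactly why I would puncture the neighbourhood of $a$ rather than use $c$.
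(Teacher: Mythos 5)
Your proof is correct and takes essentially the same route as the paper: both use $N_H(a)\setminus\{b\}$ as the hitting set, bound $|N_H(a)|\le 6$ via the leaf-triangle hypothesis (the paper states this as $|N_\Sigma(a)|-4$, which your pairing argument justifies in detail), observe that the only triangle through $b$ is $\{a,b,c\}$, still hit by $c$, and then transfer the hitting set to $G$ via Lemma~\ref{multi}. Your write-up merely spells out two points the paper leaves implicit (the verification of the hypotheses of Lemma~\ref{multi} and the disjoint-pairs structure of $N_\Sigma(a)$), which is fine.
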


\begin{proof}
  Let $G$, $H$ and $\{a,b,c\}$ as in the definition.
  
  Since $a$ belongs to exactly one triangle of $H$ and to exactly 5 triangles in $\Sigma$, we have $|N_H(a)|\leq |N_\Sigma(a)|-4=6$.
  
  By Lemma \ref{claimSchlafli}, $N_H(a)$ is a hitting set of $H$.
  
  Since $b\in N_H(a)$ and since the unique triangle containing $b$ is $\{a,b,c\}$ which is already covered by $c$, we have that $N_H(a)\sm \{b\}$ is a hitting set of $H$ of cardinality at most 5. 
  
  By Lemma~\ref{multi} it is also a hitting set of $G$. 

%
%
%
%
%
\end{proof}

\subsection{Graphs of parallel-square type}\label{ss.parallel-square}

Let $X$ be the edge-set of some $C_4$ of the complete bipartite graph
$K_{3,3}$, and let $z$ be the edge of $K_{3,3}$ disjoint from all
edges in $X$. Thus $X$ induces a $C_4$ of the line graph $H$ of
$K_{3,3}$. Any graph $G$ obtained from $H$ by multiplying $X$, and
possibly deleting $z$, is prismatic and is called a graph of
\textit{parallel-square type}.

\begin{figure}[H]
 	\centering	
 	\begin{tikzpicture}
 \node[-] (v1) at (0,0) {$\bullet$};
 \node[color=red] (v2) at (0,2) {$\bullet$};
 \draw[color=violet] (0,2) circle (0.25cm);
 \node[color=red] (v3) at (1,1) {$\bullet$};
  \draw[color=violet] (1,1) circle (0.25cm);
 \node[color=blue] (v11) at (2,0.5) {$z$};
 \node[-] (v21) at (2,2.5) {$\bullet$};
 \node[-] (v31) at (3,1.5) {$\bullet$};
 \node[-] (v12) at (5,0) {$\bullet$};
 \node[color=red] (v22) at (5,2) {$\bullet$};
  \draw[color=violet] (5,2) circle (0.25cm);
 \node[color=red] (v32) at (4,1) {$\bullet$};
  \draw[color=violet] (4,1) circle (0.25cm);
 \node[color=red] (X) at (0.21,2.4) {$X$};
 \draw[-] (v1) -- (v2); 
 \draw[-] (v1) -- (v3); 
 \draw[color=red] (v3) -- (v2); 
 \draw[-] (v11) -- (v21); 
 \draw[-] (v11) -- (v31); 
 \draw[-] (v31) -- (v21); 
 \draw[-] (v12) -- (v22); 
 \draw[-] (v12) -- (v32); 
 \draw[color=red] (v32) -- (v22); 
 \draw[] (v1) -- (v11); 
 \draw[-] (v1) -- (v12); 
 \draw[-] (v11) -- (v12); 
 \draw[-] (v2) -- (v21); 
 \draw[color=red] (v2) -- (v22); 
 \draw[-] (v21) -- (v22); 
 \draw[-] (v3) -- (v31); 
 \draw[color=red] (v3) -- (v32); 
 \draw[-] (v31) -- (v32); 
 \end{tikzpicture}
 \caption{Line graph of $K_{3,3}$\label{f:spt}, $X$, $z$ as defined in section \ref{ss.parallel-square}} 
\end{figure}
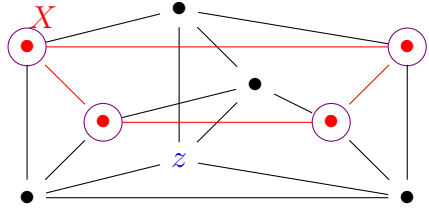
 
\begin{lemma}\label{l:parallel}
  Every prismatic graph of parallel square type admits a hitting set of
  cardinality smaller or equal to 4.
\end{lemma}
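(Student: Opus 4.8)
The plan is to reduce the statement to a finite verification on $H=L(K_{3,3})$ and then invoke Lemma~\ref{multi}. Write the two sides of $K_{3,3}$ as $\{1,2,3\}$ and $\{1',2',3'\}$, and let $ij'$ denote the edge joining $i$ to $j'$; these nine edges are the vertices of $H$, and two of them are adjacent in $H$ exactly when they share an endpoint. Since $K_{3,3}$ is bipartite, hence triangle-free, every triangle of $L(K_{3,3})$ must come from three edges of $K_{3,3}$ sharing a common endpoint. Thus $H$ has exactly six triangles, one per vertex of $K_{3,3}$, each being the star of edges incident to that vertex. The key observation I would record is that, under this identification, a set $S$ of vertices of $H$ is a hitting set of $H$ if and only if, read as a set of edges of $K_{3,3}$, it is incident with every vertex of $K_{3,3}$; that is, the hitting sets of $H$ are precisely the edge covers of $K_{3,3}$.

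After relabelling I may take $X=\{11',12',21',22'\}$, the edges of a $C_4$ of $K_{3,3}$ on $\{1,2\}\cup\{1',2'\}$, and $z=33'$. The first step is to exhibit an edge cover of $K_{3,3}$ of size $4$ that is disjoint from $X$, for instance $S=\{13',23',31',32'\}$: these four edges are incident with all of $1,2,3,1',2',3'$, and none lies in $X$. (Here $4$ is in fact unavoidable among sets disjoint from $X$, since avoiding $X$ forbids any perfect matching — both $1$ and $2$ would be forced onto $3'$ — so the minimum edge cover disjoint from $X$ has size at least $4$.)

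Next I would verify the hypotheses of Lemma~\ref{multi}. Here $H[X]$ is an induced $C_4$ with cyclic order $11'\!-\!21'\!-\!22'\!-\!12'$, so its two non-adjacent pairs are $\{11',22'\}$ and $\{21',12'\}$, which in $L(K_{3,3})$ are the two pairs of \emph{disjoint} edges on the four vertices $\{1,1',2,2'\}$. A common neighbour in $H$ of two disjoint edges is an edge meeting both; for each of these pairs the only such edges are the remaining two edges on $\{1,1',2,2'\}$, and both of these belong to $X$. Hence non-adjacent vertices of $X$ have no common neighbour in $V(H)\setminus X$, and Lemma~\ref{multi} applies, giving that $S$ is a hitting set of the graph obtained from $H$ by multiplying $X$.

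Finally I would deal with the optional deletion of $z$: deleting a vertex only removes triangles and never creates new ones, and $z\notin S$, so $S$ remains a hitting set of cardinality $4$ after deleting $z$. This covers every graph of parallel-square type. I expect no genuine obstacle; the only step needing real care is the common-neighbour check feeding Lemma~\ref{multi}, and the conceptual content is simply the recognition that hitting sets of $L(K_{3,3})$ are edge covers of $K_{3,3}$ and that avoiding $X$ forces one extra edge beyond a perfect matching.
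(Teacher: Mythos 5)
Your proof is correct and follows essentially the same route as the paper: the hitting set you exhibit, $\{13',23',31',32'\}$, is exactly the paper's choice $S_H=V(H)\setminus(X\cup\{z\})$, and both arguments then invoke Lemma~\ref{multi} and observe that deleting $z\notin S_H$ changes nothing. The only difference is that you spell out the verifications (triangles of $L(K_{3,3})$ are stars of $K_{3,3}$, so hitting sets are edge covers, and the common-neighbour hypothesis of Lemma~\ref{multi}) which the paper dismisses as obvious.
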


\begin{proof}
  Let $H$, $X$, $z$ and $G$ as in the definition.

  Let $S_H=V(H)\setminus (X\cup \{z\})$. Obviously $S_H$ is a hitting set of
  $H$.

  The set $X$ induces a $C_4$ in $H$  and no two
  non-adjacent vertices of $X$ have common neighbours in
  $V(H)\setminus X$. By Lemma~\ref{multi}, $S_H$ is a hitting set of
  $G$.

  Note that the deletion of $z$ does not change the result
  because $z$ is not in $S_H$.  Since $|S_H|=4$, the
  proof is completed.
\end{proof}

\subsection{Graphs of skew-square type}

Let $K$ be a graph with five vertices $a$, $b$, $c$, $s$, $t$, where
$\{s,a,c\}$ and $\{t,b,c\}$ are triangles and there is no more
edge. Let $H$ be obtained from $K$ by multiplying $\{a,b,c\}$, let $A$, $B$, $C$ be the sets of new vertices corresponding to $a$,
$b$, $c$ respectively, and let $\varphi$ be the corresponding integer
map.  Add three more vertices $d_1$, $d_2$, $d_3$ to $H$, with
adjacency as follows:

\begin{itemize}
\item $d_1$, $d_2$, $d_3$, $s$, $t$ are pairwise non-adjacent,
\item for $1\leq i \leq 3$ and $v\in A\cup B$, $d_i$ is adjacent to
  $v$ if and only if $1\leq \varphi(v)\leq 3$ and $\varphi(v) \neq i$,
\item for $1\leq i \leq 3$ and $v\in C$, $d_i$ is non-adjacent to $v$
  if and only if $1\leq \varphi(v)\leq 3$ and $\varphi(v) \neq i$.
\end{itemize} 

Any graph obtained by this way is prismatic, and is called a graph of  \textit{skew-square type}.

\begin{lemma}\label{l:skews}
  Every prismatic graph of skew-square type admits a hitting set of cardinality
  smaller or equal to 5.
\end{lemma}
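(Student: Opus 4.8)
The plan is to produce a hitting set of the form $\{s,t\}\cup\{a_1,a_2,a_3\}$, where the pair $\{s,t\}$ takes care of all triangles coming from $H$, and the three vertices $a_j$ take care of the triangles created by $d_1,d_2,d_3$.

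First I would treat $H$, the graph obtained from $K$ by multiplying $X=\{a,b,c\}$, by applying Lemma~\ref{multi} (so that the graph called $H$ there is here $K$, and the one called $G$ there is here $H$). Three hypotheses must be checked. The set $\{s,t\}$ is a hitting set of $K$ disjoint from $X$, since the only triangles of $K$ are $\{s,a,c\}$ and $\{t,b,c\}$, hit by $s$ and $t$ respectively. The graph $K[X]$ is the path $P_3$ with $c$ as its middle vertex, hence an induced subgraph of $C_4$. Finally the only non-adjacent pair of $X$ is $\{a,b\}$, and it has no common neighbour in $V(K)\setminus X=\{s,t\}$, because $s$ is adjacent to $a$ but not to $b$, while $t$ is adjacent to $b$ but not to $a$. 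Lemma~\ref{multi} then gives that $\{s,t\}$ is a hitting set of $H$.

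Next I would add $d_1,d_2,d_3$ and describe the triangles of $G$ that are not triangles of $H$, namely those meeting $\{d_1,d_2,d_3\}$. Because $d_1,d_2,d_3,s,t$ are pairwise non-adjacent, such a triangle contains exactly one $d_i$ together with two further vertices lying in $A\cup B\cup C$, and it can contain neither $s$ nor $t$. Since $A$, $B$ and $C$ are stable, those two further vertices lie in two distinct sets among $A,B,C$. A short case analysis, using the adjacency rules of the multiplication, eliminates the pairs meeting $C$: an edge between $A$ and $C$, or between $B$ and $C$, forces equal $\varphi$-values, which is incompatible with $d_i$ being adjacent to both endpoints. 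Hence every triangle through some $d_i$ has the form $\{d_i,u,w\}$ with $u\in A$, $w\in B$ and $\varphi(u),\varphi(w)\in\{1,2,3\}\setminus\{i\}$; in particular such a triangle always contains a vertex $u\in A$ with $\varphi(u)\in\{1,2,3\}$.

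Finally, for $j\in\{1,2,3\}$ let $a_j$ be the vertex of $A$ with $\varphi(a_j)=j$ when it exists (it is unique because $\varphi$ is injective on $A$). By the previous paragraph $\{a_1,a_2,a_3\}$ covers every triangle of $G$ through some $d_i$, while $\{s,t\}$ covers every triangle of $G$ avoiding $\{d_1,d_2,d_3\}$, these being exactly the triangles of $H$. Therefore $\{s,t\}\cup\{a_1,a_2,a_3\}$ is a hitting set of $G$ of cardinality at most $5$. The only genuinely delicate step is the triangle analysis of the previous paragraph: one must carefully split into subcases according to which two of $A,B,C$ receive the non-$d_i$ vertices, and keep track of the $\varphi$-values; the rest reduces to a clean application of Lemma~\ref{multi} together with the bookkeeping of the integer map.
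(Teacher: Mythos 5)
Your proof is correct, and its first half coincides exactly with the paper's: both establish that $\{s,t\}$ is a hitting set of $H$ by applying Lemma~\ref{multi} to the multiplication of the $P_3$ induced by $\{a,b,c\}$ in $K$, checking that $a$ and $b$ have no common neighbour outside $\{a,b,c\}$. Where you diverge is in handling the triangles that meet $\{d_1,d_2,d_3\}$. The paper's observation here is deliberately trivial: since $G$ is built from $H$ by adding only the three vertices $d_1,d_2,d_3$, every triangle of $G$ either lies in $H$ (hence is hit by $\{s,t\}$) or contains some $d_i$ --- so one simply puts $d_1,d_2,d_3$ themselves into the hitting set, giving $\{s,t,d_1,d_2,d_3\}$ with no structural analysis at all. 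You instead classify all triangles through a $d_i$, showing (correctly, via the $\varphi$-value bookkeeping) that each has the form $\{d_i,u,w\}$ with $u\in A$, $w\in B$ and $\{\varphi(u),\varphi(w)\}=\{1,2,3\}\setminus\{i\}$, and then cover them by the at most three vertices of $A$ with $\varphi$-value in $\{1,2,3\}$. Your route buys a finer understanding of the triangle structure around the $d_i$'s (and occasionally a hitting set of size less than $5$, when some $a_j$ does not exist), but at the cost of a case analysis that the paper's choice of hitting set makes entirely unnecessary; both arguments deliver the same bound of $5$.
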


\begin{proof}
  Let $G$ be a graph of skew-square type.

  Let $S_K=\{s,t\}$. We first show that
  $S_K$ is a hitting set of $H$ and then prove that
  $S_K\cup \{d_1,d_2,d_3\}$ is a hitting set of $G$.

  First it is obvious that $S_K$ is a hitting set of $K$. Furthermore, in $K$,
  $\{a,b,c\}$ induces a $P_3$ which is a induced subgraph of $C_4$  and vertices $a$ and $b$ do not have a common
  neighbour in $V(K)\setminus \{a,b,c\}$. We may therefore apply Lemma~\ref{multi}, showing that
  $S_K$ is a hitting set of $H$.

  Since we just add three vertices $d_1$, $d_2$ and $d_3$ to construct
  $G$ from $H$, every triangle in $G$ either contains one vertex of
  $\{d_1,d_2,d_3\}$ or is a triangle in $H$ that is covered by $S_K$.

  This shows that $S_K\cup \{d_1,d_2,d_3\}$ is a hitting set of
  $G$ of size~5.
\end{proof}

\subsection{The class ${\cal F}_{0}$}
Note that this class is defined in \cite{DBLP:journals/jgt/JavadiH19}.

 Let $H$ be a subgraph of  $\Sigma$ induced by:
$$\{r^i_j : (i,j) \in I_1\}\cup \{s^i_j : (i,j) \in I_2\}\cup \{t^i_j : (i,j) \in I_3\}$$

where $I=\{(i,j) : 1\leq i,j\leq 3\}$ and $I_1$, $I_2$, $I_3$ are subset of $I$ such that:
\begin{itemize}
\item $(1,1),(1,3),(2,2),(2,3),(3,1)(3,2)\in I_1$ and $(3,3)\notin I_1$,
\item $(1,1),(2,1),(3,2)\in  I_2 $ and $(1,2),(1,3),(2,2),(2,3)\notin I_2$,
\item $(1,3),(2,1),(2,2)\in  I_3$ and $(1,1),(1,2),(3,1),(3,2)\notin I_3$ .
\end{itemize}

Let $G$ be the graph obtained from $H$ by adding the edges $s^3_1 t^2_3$, $s^3_1 t_3^3$ and $s_3^3 t^2_3$ if the corresponding vertices are in $H$. We define ${\cal F}_{0}$ to be the class of all such graphs $G$. 

\begin{lemma}\label{l:f0}
  Every  graph of the class ${\cal F}_{0}$ admits a hitting set of
  cardinality smaller or equal to 3.
\end{lemma}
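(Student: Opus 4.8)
The plan is to exhibit one explicit set, namely $\{r^1_3, r^2_3, t^1_3\}$, and to verify it meets every triangle of $G$. All three of these vertices are forced to belong to $H$ by the defining constraints ($(1,3),(2,3)\in I_1$ and $(1,3)\in I_3$), so the set exists for every $G$ in the class, independently of how the unconstrained entries of $I_1,I_2,I_3$ are chosen. To justify the claim I would enumerate all triangles of $G$, treating separately the internal triangles (inside one tile), the external triangles, and any triangle that the three added edges might create.

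For the internal triangles I would first read off, from the constraints on the $I_k$, exactly which vertices of each tile lie in $H$ (some entries are free, so each tile has forced, forbidden and optional vertices). Recall that an internal triangle uses one vertex from each line and one from each column of its tile. In $S$ the only present vertices of lines $1$ and $2$ are $s^1_1$ and $s^2_1$, which share a column, so $S$ has no internal triangle; likewise in $T$ the only possible present vertices of lines $1$ and $3$ are $t^1_3$ and $t^3_3$, again in a common column (and if $t^3_3$ is absent, line $3$ is empty), so $T$ has none. Hence every internal triangle lies in $R$; since it must use column $3$ and the only present vertices there are $r^1_3,r^2_3$ (as $(3,3)\notin I_1$), each is met by $\{r^1_3,r^2_3\}$.

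For the external triangles I use that each has the form $\{r^i_j, s^j_k, t^k_i\}$ and is determined by its unique $S$-vertex $s^j_k$. Running through the present pairs $(j,k)\in I_2$, the constraints on $I_1,I_3$ force the triangle in each case: $s^1_1$ gives $\{r^3_1,s^1_1,t^1_3\}$ and $s^2_1$ gives $\{r^3_2,s^2_1,t^1_3\}$ (both met by $t^1_3$, since $(1,i)\in I_3$ forces $i=3$); $s^3_2$ gives $\{r^1_3,s^3_2,t^2_1\}$ and $\{r^2_3,s^3_2,t^2_2\}$ (met by $r^1_3$ and $r^2_3$, since the $R$-vertex lies in column $3$); and the optional vertices $s^3_1,s^3_3$ yield no external triangle, because the indices forced by $I_1$ and by $I_3$ cannot agree. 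Thus every external triangle is met by $\{r^1_3,r^2_3,t^1_3\}$.

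The step I expect to require the most care is showing the three added edges $s^3_1t^2_3$, $s^3_1t^3_3$, $s^3_3t^2_3$ create no new triangle, since these edges are absent from $\Sigma$ and could a priori spoil the enumeration. Here I would compute, from the tile adjacency rules, the neighbourhoods in $H$ of the four vertices $s^3_1,s^3_3,t^2_3,t^3_3$, and check that the endpoints of each added edge have no common neighbour: inside $H$ the relevant neighbourhood pairs turn out to be disjoint, and no common neighbour can lie among the four special vertices either, since $s^3_1,s^3_3$ are nonadjacent (same line) and $t^2_3,t^3_3$ are nonadjacent (same column). As the added edges only alter adjacencies among these four vertices, this is exhaustive. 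It follows that $G$ and $H$ have exactly the same triangles, and combining the three previous paragraphs shows that $\{r^1_3,r^2_3,t^1_3\}$ is a hitting set of cardinality $3$.
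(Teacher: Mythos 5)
Your proof is correct, and the set you exhibit, $\{r^1_3,r^2_3,t^1_3\}$, is exactly the set used in the paper; the treatment of the three added edges (no common neighbours of the endpoints inside $H$, and no triangle among $s^3_1,s^3_3,t^2_3,t^3_3$ because the four special vertices induce a $C_4$) is also essentially identical. Where you diverge is in how the hitting property for $H$ is established. The paper observes that your set is precisely $N_\Sigma(s^3_1)\cap V(H)$ and invokes Lemma~\ref{claimSchlafli}: since $N_\Sigma(s^3_1)$ hits every triangle of $\Sigma$ and every triangle of $H$ is a triangle of $\Sigma$, the trace of that neighbourhood on $V(H)$ hits every triangle of $H$ --- a one-line argument, and the same trick the paper reuses for ${\cal F}_4$, ${\cal F}_5$, ${\cal F}_6$ and ${\cal F}_9$. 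You instead verify the hitting property by exhaustively enumerating the internal and external triangles of $H$ from the constraints on $I_1,I_2,I_3$. Your enumeration is complete and each case checks out, so nothing is missing; what you lose is brevity and robustness (a slip in one of the many cases would be fatal, and indeed your phrase ``determined by its unique $S$-vertex'' is slightly off --- an $S$-vertex of $\Sigma$ lies in three external triangles, as your own list for $s^3_2$ shows --- though the enumeration you actually carry out is the right one). What you gain is a self-contained argument that does not require noticing the neighbourhood structure behind the chosen set.
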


\begin{proof}
By Lemma~\ref{claimSchlafli}, $N_\Sigma (s^3_1)\cap V(H)=S_H=\{r^1_3,r^2_3, t^1_3\} $ is a hitting set of $H$. Note that $s^3_1$ and  $t_3^2$ do not have common neighbours in $H$ so as $s^3_1$ and $t^3_3$, and $s^3_3$ and $t^2_3$. Since $G[\{s^3_1,s^3_3,t_3^2,t^3_3\}]$ induces a $C_4$, we have that the addition of the new edges does not add any triangle in $G$. Therefore $S_H$ is also a hitting set of $G$. 
\end{proof}

\subsection{The class ${\cal F}_{1}$}

Let $G$ be a graph with vertex set the disjoint union of sets
$\{s,t\}$, $R$, $A$, $B$, where $|R|\leq 1$, and with edges as
follows:
\begin{itemize}
\item $s$, $t$ are adjacent, both are complete to $R$, and $s$ is complete to $A$; $t$ is complete to $B$;
\item every vertex in $A$ has at most one neighbour in $A$, and every
  vertex in $B$ has at most one neighbour in $B$;
\item if $a,a'\in A$ are adjacent and $b,b'\in B$ are adjacent, then
  the subgraph induced by $\{a,a',b,b'\}$ is a cycle;
\item if $a,a'\in A$ are adjacent and $b\in B$ has no neighbour in
  $B$, then $b$ is adjacent to exactly one of $a,a'$;
\item if $b,b'\in B$ are adjacent and $a\in A$ has no neighbour in
  $A$, then $a$ is adjacent to exactly one of $b,b'$;
\item if $a\in A$ has no neighbour in $A$, and $b \in B$ has no
  neighbour in $B$, then $a,b$ are adjacent
\end{itemize}

We define ${\cal F}_{1}$ to be the class of all such graphs $G$.

\begin{lemma}\label{l:f1}
Every prismatic graph of the class ${\cal F}_{1}$ admits a hitting set of cardinality smaller or equal to 2.
\end{lemma}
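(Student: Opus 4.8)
The plan is to prove that the two-element set $\{s,t\}$ is a hitting set of $G$, which immediately gives the bound. Every triangle through $s$ or $t$ is trivially covered, so it suffices to show that $G$ has no triangle inside $V(G)\setminus\{s,t\}=R\cup A\cup B$. First I would read off the neighbourhoods forced by the edge list: since no edge incident with $R$ other than those to $s$ and $t$ is listed, the (at most one) vertex $r\in R$ satisfies $N(r)=\{s,t\}$; likewise $s$ is anticomplete to $B$ and $t$ is anticomplete to $A$. Consequently the only triangle meeting $R$ is $\{r,s,t\}$, which is covered, and any triangle avoiding $\{s,t\}$ must lie entirely inside $A\cup B$.

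Next I would use that each of $A$ and $B$ induces a graph of maximum degree at most $1$ (a matching), so neither contains a triangle. Hence a putative triangle inside $A\cup B$ has exactly two vertices on one side and one on the other; up to the $A$/$B$ symmetry, say $a,a'\in A$ are adjacent and $b\in B$ is adjacent to both $a$ and $a'$. The heart of the argument is to rule this out. If $b$ has a neighbour $b'\in B$, then the third bullet in the definition of ${\cal F}_{1}$ forces $G[\{a,a',b,b'\}]$ to be a cycle; but the edges $aa'$, $ab$ and $a'b$ already form a triangle on $\{a,a',b\}$, so $b$ has degree $3$ in this induced subgraph, which is therefore not a cycle, a contradiction. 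If instead $b$ has no neighbour in $B$, then the fourth bullet says $b$ is adjacent to exactly one of $a,a'$, contradicting adjacency to both. The case of two vertices in $B$ and one in $A$ is symmetric, using the third and fifth bullets.

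The step I expect to be the only real obstacle is this last contradiction, namely showing that no vertex of $B$ can be complete to an edge of $A$ (and symmetrically); everything else is bookkeeping about the prescribed neighbourhoods. I would also note that the argument does not actually use that $G$ is prismatic: the combinatorial constraints in the definition of ${\cal F}_{1}$ already forbid any triangle outside the neighbourhoods of $s$ and $t$, so $\{s,t\}$ is a hitting set of cardinality $2$ for every graph in the class.
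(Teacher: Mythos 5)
Your proposal is correct and takes essentially the same approach as the paper: both show that $\{s,t\}$ is a hitting set by verifying, directly from the bullets of the definition, that $G\setminus\{s,t\}$ contains no triangle (the paper phrases this as every vertex of $G\setminus\{s,t\}$ having a stable neighbourhood, which is equivalent). Your write-up simply makes explicit the case analysis that the paper leaves to the reader, including the correct observation that prismaticity is never used.
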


\begin{proof}
  We claim that $\{s,t\}$ is a hitting set
  of $G$.   This is equivalent to  the fact that in $G\sm \{s,t\}$, the
  neighborhood of any vertex $v$ is a stable set. And this follows
  directly from the definition in all cases ($v=r$, $v$ in $A$ with no
  neighbor in $A$, $v$ in $A$ with one neighbor in $A$, symmetric cases
  with $v\in B$).

\end{proof}

Note that graphs in ${\cal F}_1 $ can have arbitrarily large minimum degree.

\subsection{The class ${\cal F}_{2}$}
Let $K$ be the line graph of $K_{3,3}$ with vertices numbered $s_j^i$
$(1\leq i,j\leq 3)$, where $s_j^i$ and $s_{j'}^{i'}$ are adjacent if
and only if $i' \neq i$ and $j'\neq j$.  Note that this is how usually
 the complement of the line graph of $K_{3, 3}$ is defined, but
 since it is a self-complementary graph, it makes no difference.

 Let H be a graph obtained from this by multiplying
 $\{s^1_2,s^1_3,s_1^2,s_1^3\}$, thus, $H$ is of parallel-square
 type. Let $A_2^1$, $A_3^1$, $A^2_1$, $A_1^3$ be the sets of new
 vertices corresponding to $\{s^1_2$, $s^1_3$, $s_1^2$, $s_1^3\}$
 respectively, and let $\varphi$ be the corresponding integer
 map. Suppose that:

 \begin{itemize}
\item there do not exist $u\in A^3_1$ and $v\in A_3^1$ with
  $\varphi(u)=\varphi(v)$;
\item there exist $a_2^1 \in A_2^1$ and $a_1^2 \in A_1^2$ such that
  $\varphi (a_2^1) = \varphi(a_1^2) =1$;
\item $\varphi (v) \neq 1$ for all $v \in A_1^3 \cup A_3^1$.

\end{itemize}

Let $G$ be obtained from $H$ by exponentiating  $\{a^1_2,a_1^2,s_3^3\}$, leaf triangle at $s_3^3$.  We define ${\cal F}_{2}$ to be the class of all such graphs
$G$.

\begin{lemma}\label{l:f2}
  Every prismatic graph of the class ${\cal F}_{2}$ admits a hitting set of
  cardinality smaller or equal to 4.
\end{lemma}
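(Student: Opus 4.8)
The plan is to apply the exponentiation lemma (Lemma~\ref{exp}) and thereby reduce the whole statement to finding one well-chosen hitting set of the intermediate graph $H$. Recall that $G$ is obtained from $H$ by exponentiating the leaf triangle $\{a^1_2,a^2_1,s^3_3\}$ at $s^3_3$, so in the notation of Lemma~\ref{exp} we have $a=a^1_2$, $b=a^2_1$ and $c=s^3_3$. Since $G$ is prismatic by hypothesis, Lemma~\ref{exp} says that it suffices to exhibit a hitting set $S_H$ of $H$ with $|S_H|\le 4$ that contains $s^3_3$ but neither $a^1_2$ nor $a^2_1$: such an $S_H$ is then automatically a hitting set of $G$, which is exactly what we want.

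First I would produce a hitting set of $H$ of size at most $4$ that avoids the exponentiated vertices. By construction $H$ is of parallel-square type, obtained from $K=L(K_{3,3})$ by multiplying the induced $C_4$ on $X=\{s^1_2,s^1_3,s^2_1,s^3_1\}$. The proof of Lemma~\ref{l:parallel}, through Lemma~\ref{multi}, supplies the hitting set $S_H=V(K)\setminus(X\cup\{z\})$ of $H$; it has size $4$ and, crucially, is disjoint from $X$. Being disjoint from $X$, the set $S_H$ meets neither $A^1_2$ nor $A^2_1$, the sets of new vertices corresponding to $s^1_2$ and $s^2_1$; in particular $a^1_2\notin S_H$ and $a^2_1\notin S_H$.

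It then remains to check that $s^3_3\in S_H$, and here I would argue by forcing rather than by any explicit identification. The hypotheses on $\varphi$ guarantee that $a^1_2a^2_1$ is an edge of $H$ (since $\varphi(a^1_2)=\varphi(a^2_1)=1$ and $s^1_2,s^2_1$ are adjacent in $K$), while $s^3_3$ is complete to $A^1_2\cup A^2_1$; hence $\{a^1_2,a^2_1,s^3_3\}$ is indeed a triangle of $H$. Since $S_H$ is a hitting set of $H$ that contains neither $a^1_2$ nor $a^2_1$, it must contain the remaining vertex $s^3_3$ in order to cover this triangle. Therefore $S_H$ satisfies all the hypotheses of Lemma~\ref{exp}, and applying that lemma gives the desired hitting set of $G$ of cardinality at most $4$.

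The main obstacle is not mathematical depth but bookkeeping: one must carefully match the concrete construction of ${\cal F}_2$ to the abstract hypotheses of Lemmas~\ref{multi},~\ref{exp} and~\ref{l:parallel}. The two delicate points are (i) confirming that the parallel-square hitting set furnished by Lemma~\ref{l:parallel} can be taken disjoint from $X$, so that it automatically avoids the vertices $a^1_2,a^2_1$ that are consumed by the exponentiation, and (ii) using the forcing observation above to place $s^3_3$ in the set, which neatly sidesteps having to track which edge $z$ is deleted in the parallel-square step.
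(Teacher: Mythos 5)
Your proof is correct and takes essentially the same route as the paper: both arguments use the same hitting set $V(K)\setminus(X\cup\{z\})=\{s_2^2,s_3^2,s_2^3,s_3^3\}$, lift it to $H$ via the multiplication lemma (you delegate this to the proof of Lemma~\ref{l:parallel}, which is itself just an application of Lemma~\ref{multi}; the paper re-verifies the hypotheses of Lemma~\ref{multi} directly), and then conclude with Lemma~\ref{exp}. The only cosmetic difference is that the paper sees $s_3^3\in S_H$ by inspection, whereas you deduce it by the neat forcing argument that the triangle $\{a^1_2,a^2_1,s^3_3\}$ must be hit by a set avoiding the new vertices.
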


\begin{proof}
   Let $S_K=\{s_2^3,s_3^2,s_2^2,s^3_3\}$. We can
  easily see that $S_K$ is a hitting set of $K$. We prove that $S_K$
  is a hitting set of $G$.

  By definition $H$ is obtained from $K$ by multiplying
  $\kappa =\{s_3^ 1,s_2^1,s_1^2,s_1^3\}$.  Note that, in $K$,  $\kappa$ induces a
  $C_4$, $\kappa \cap S_K = \emptyset$, $s_3^ 1,s_2^1$ do not
  have a common neighbours outside of $\kappa$ and $s_1^ 3,s_1^2$ do
  not have a common neighbours outside of $\kappa$. We may now apply
  Lemma~\ref{multi}, and conclude that $S_K$ is a hitting set of $H$.

We may apply Lemma \ref{exp}, and we obtain that $S_K$ is a hitting set of $G$ (note that the fact that  $H$ is a prismatic graph is not used in the proof of  Lemma~\ref{exp}) .
\end{proof}

 \subsection{The class ${\cal F}_{3}$}
 Let $K$ be the line graph of $K_{3,3}$, with vertices numbered $s_j^i$
 $(1\leq i,j \leq 3)$, where $s^i_j$ and $s^{i'}_{j'}$ are adjacent if
 and only if $i'\neq i$ and $j' \neq j$. Let $H$ be obtained from
 $K$ by deleting the vertex $s_2^2$ and possibly $s_1^1$, and then
 multiplying $\{s^1_2,s^1_3,s_1^2,s_1^3\}$. Let $A_2^1$, $A_3^1$,
 $A^2_1$, $A_1^3$ be the sets of new vertices corresponding to
 $s_2^1$, $s_3^1$, $s^2_1$, $s^3_1$ respectively, and let $\varphi$ be
 the corresponding integer map. Suppose that
 
 \begin{itemize}
 \item there exist $a_2^1 \in A_2^1$ and $a_1^3 \in A_1^3$ such that
   $\varphi (a_2^1) = \varphi(a_1^3) =1$;
\item $\varphi (v) \neq 1$ for all $v \in A_3^1 \cup A_1^2$.
\item there exist $a_3^1 \in A_3^1$ and $a_1^2 \in A_1^2$ such that
  $\varphi (a_3^1) = \varphi(a_1^2) =2$;
\item $\varphi (v) \neq 2$ for all $v \in A_2^1 \cup A_1^3$.
\end{itemize}

Let $G$ be obtained from $H$ by exponentiating 
$\{a^1_2,a_1^3,s_3^2\}$ and $\{a^1_3,a_1^2,s_2^3\}$, leaf triangles respectively at $s_3^2$ and $s_2^3$. We define ${\cal F}_{3}$ to be the
class of all such graphs $G$.
 
\begin{lemma}\label{l:f3}
  Every prismatic graph of the class ${\cal F}_{3}$ admits a hitting set of
  cardinality smaller or equal to 3.
\end{lemma}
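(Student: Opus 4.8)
The plan is to follow the scheme used for Lemma~\ref{l:f2}: first exhibit a hitting set $S$ of the line graph $K$ (after its deletions) that is disjoint from the multiplied set and contains both apexes, then carry it over to $H$ by Lemma~\ref{multi} and to $G$ by two applications of Lemma~\ref{exp}. I would take
\[ S=\{s_3^2,\,s_2^3,\,s_3^3\}. \]
The apexes $s_3^2$ and $s_2^3$ must lie in $S$, since Lemma~\ref{exp} requires the hitting set to contain the vertex at which each exponentiated leaf triangle is rooted; the third vertex $s_3^3$ is then forced, being the only vertex outside the multiplied set $X=\{s_2^1,s_3^1,s_1^2,s_1^3\}$ on the triangle $\{s_2^1,s_1^2,s_3^3\}$.

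I would first check that $S$ meets every triangle of $K$ once $s_2^2$ (and possibly $s_1^1$) is deleted. The six triangles of $K$ are the transversals of the $3\times 3$ array of indices; deleting $s_2^2$ kills the two passing through it and leaves $\{s_1^1,s_3^2,s_2^3\}$, $\{s_2^1,s_1^2,s_3^3\}$, $\{s_2^1,s_3^2,s_1^3\}$ and $\{s_3^1,s_1^2,s_2^3\}$. Here $s_3^2$ meets the first and third, $s_2^3$ meets the first and fourth, and $s_3^3$ meets the second; deleting $s_1^1$ only suppresses a triangle, so it does no harm. Note that $S\cap X=\emptyset$.

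Next I would apply Lemma~\ref{multi} to the multiplication of $X=\{s_2^1,s_3^1,s_1^2,s_1^3\}$. This set induces a $C_4$, its only non-edges being $s_2^1 s_3^1$ and $s_1^2 s_1^3$, and in $K$ the common neighbours of each of these non-adjacent pairs already lie in $X$; hence no non-adjacent pair of $X$ has a common neighbour outside $X$, and this survives the deletions. Lemma~\ref{multi} then gives that $S$, being disjoint from $X$, is a hitting set of $H$. Finally, $S$ contains the apexes $s_3^2,s_2^3$ but none of the leaf vertices $a_2^1,a_1^3,a_3^1,a_1^2$ created by the multiplication, so Lemma~\ref{exp} applies to each leaf triangle in turn and yields that $S$, of cardinality $3$, is a hitting set of $G$.

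I expect the successive use of Lemma~\ref{exp} to be the main obstacle. Its second use is harmless: the input graph may be non-prismatic (as was noted for Lemma~\ref{l:f2}) and the output is $G$, prismatic by hypothesis. The first use produces the intermediate graph $G_1$ obtained by exponentiating the first leaf triangle alone, whose prismaticity is not given in advance. I would bypass this by recalling that, in the proof of Lemma~\ref{exp}, prismaticity of the output is used only to exclude a diamond formed by the apex and a vertex of the set $C$ added by the exponentiation. Since the two leaf triangles and their gadgets are vertex-disjoint, any such diamond arising for the first exponentiation survives into $G$: if one of its vertices is suppressed by the second exponentiation, it can be replaced by a vertex substituted for it, which is complete to its former neighbours. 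It is thus already forbidden by the prismaticity of $G$. The only point requiring genuine care is to verify that these replacements always reproduce a diamond, i.e.\ to deal with degenerate choices of the replacement sets; granting this, both applications of Lemma~\ref{exp} are valid and $S$ is the required hitting set.
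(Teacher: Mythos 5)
Your proposal is correct and is essentially the paper's own proof: the paper takes the identical hitting set $\{s_3^3,s_2^3,s_3^2\}$, checks it hits $K$ minus $s_2^2$, applies Lemma~\ref{multi} to the $C_4$ induced by $\{s_3^1,s_1^3,s_2^1,s_1^2\}$ to transfer it to $H$, and then invokes Lemma~\ref{exp} to reach $G$. Your additional discussion of the intermediate graph arising from performing the two exponentiations one at a time addresses a point the paper silently glosses over (it applies Lemma~\ref{exp} in one stroke for both leaf triangles), so on that detail your write-up is, if anything, more careful than the original.
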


\begin{proof}
  Let $S_H=\{s^3_3,s_2^3,s_3^2\}$. Note that in $K$ minus vertex $s_2^2$, $S_H$   is a hitting set,
  $X=\{s_3^1,s_1^3,s_2^1,s_1^2\}$ induces a $C_4$,
   $s_3^1,s_2^1$ do not share a common neighbour in $V(H)\sm X$ so as
  $s_1^2,s_1^3$. We may apply Lemma~\ref{multi}. It follows that
  $S_H$ is a hitting set of $H$.

  We may apply Lemma~\ref{exp}, and we
  conclude that $S_H$ is a hitting set of~$G$.
\end{proof}

 \subsection{The class ${\cal F}_{4}$}

 Take  $\Sigma$, with vertices numbered
 $r_j^i$, $s_j^i$, $t^i_j$ as usual. Let $H$ be the subgraph induced
 on $$Y \cup \{s^i_j : (i,j) \in I\} \cup \{t^1_1,t_2^2, t_3^3\}$$ where $\emptyset \neq Y \subseteq\{r_1^3,r_2^3,r_3^3\}$ and
 $I\subseteq \{(i,j) : 1\leq i,j\leq 3\}$ with $|I| \geq 8$ and
 including
 $\{(i,j) : 1\leq i \leq 3 \mbox{ and } 1\leq j \leq 2\}.$
 
 We consider $T=\{t_1^1,t_2^2,t_3^3\}$ which is a leaf triangle at $t_3^3$.  Let $G$ be obtained from $H$ by exponentiating 
$ T$. We define ${\cal F}_{4}$ to be the class of
 all such graphs $G$.

 \begin{lemma}\label{l:f4}
   Every prismatic graph of the class ${\cal F}_{4}$ admits a hitting set of
   cardinality smaller or equal to 4.
 \end{lemma}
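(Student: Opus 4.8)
The plan is to produce, in the pre-exponentiation graph $H$, a hitting set $S_H$ of size at most $4$ that contains $t^3_3$ but neither $t^1_1$ nor $t^2_2$, and then to transfer it to $G$ by Lemma~\ref{exp} applied with $a=t^1_1$, $b=t^2_2$, $c=t^3_3$. This follows exactly the pattern of the proofs for ${\cal F}_2$ and ${\cal F}_3$: since $G$ is assumed prismatic, the only substantive work is to choose $S_H$ and check that it meets every triangle of $H$, after which the appeal to Lemma~\ref{exp} is immediate.

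First I would enumerate the triangles of $H$. As $H$ is an induced subgraph of $\Sigma$, each of its triangles is internal (inside one tile) or external (one vertex per tile). Because $R\cap V(H)=Y$ lies in a single line of $R$, there is no internal triangle in $R$; and because $T\cap V(H)=\{t^1_1,t^2_2,t^3_3\}$, the unique internal triangle meeting $T$ is the transversal $\{t^1_1,t^2_2,t^3_3\}$ itself. For the external triangles, I would use the adjacency rules of $\Sigma$: an external triangle with $T$-vertex $t^k_k$ forces its $R$-vertex to lie in line $k$ of $R$, and since $Y$ sits in line $3$, only $k=3$ can occur. Hence the external triangles of $H$ are precisely the sets $\{r^3_j,s^j_3,t^3_3\}$ that happen to be present, all of which contain $t^3_3$. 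In particular no triangle other than $\{t^1_1,t^2_2,t^3_3\}$ meets $t^1_1$ or $t^2_2$, so this set is indeed a leaf triangle at $t^3_3$, and every triangle of $H$ is either this leaf triangle, an external triangle through $t^3_3$, or an internal triangle of $S$.

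Next I would take $S_H=\{t^3_3\}\cup\{s^1_1,s^2_1,s^3_1\}$, that is, $t^3_3$ together with the whole column $1$ of the tile $S$, which is present because $\{(i,j):1\le i\le 3,\ 1\le j\le 2\}\subseteq I$. The vertex $t^3_3$ meets the leaf triangle and every external triangle, while column $1$ of $S$ meets every internal triangle of $S$: such a triangle is a transversal of the tile and therefore has exactly one vertex in each column. Thus $S_H$ is a hitting set of $H$ of cardinality $4$ that contains $t^3_3$ but avoids $t^1_1$ and $t^2_2$, and since $G$ is prismatic, Lemma~\ref{exp} shows that $S_H$ is also a hitting set of $G$, which completes the proof.

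The main obstacle is the triangle enumeration of the second step: one must verify carefully from the adjacency rules of $\Sigma$ that no unexpected triangle arises from the interaction of $Y$, the almost-complete tile $S$, and the three diagonal $T$-vertices — in particular that the complete absence of any $r$-vertex in lines $1$ and $2$ annihilates all external triangles through $t^1_1$ and $t^2_2$, so that the diagonal really is a leaf triangle at $t^3_3$. Once this bookkeeping is pinned down, the choice of $S_H$ and the invocation of Lemma~\ref{exp} are routine.
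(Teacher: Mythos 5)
Your proof is correct and follows the paper's strategy exactly: produce a hitting set of the pre-exponentiation graph $H$ of cardinality at most $4$ that contains $t^3_3$ but neither $t^1_1$ nor $t^2_2$, and transfer it to $G$ via Lemma~\ref{exp}. The only difference is how that hitting set is found: you enumerate the triangles of $H$ explicitly and take column $1$ of $S$ together with $t^3_3$, whereas the paper skips the enumeration by applying Lemma~\ref{claimSchlafli} to $r^3_3$ in the supergraph induced by $\{r^3_1,r^3_2,r^3_3\}\cup S\cup\{t^1_1,t^2_2,t^3_3\}$, obtaining the hitting set $\{s^3_1,s^3_2,s^3_3,t^3_3\}\cap V(H)$.
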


\begin{proof}
 Let $K$ be the
  subgraph of  $\Sigma$ induced by the
  vertices:
  $\{r^3_1,r^3_2,r^3_3\}\cup S \cup \{t_1^1,t_2^2,t_3^3\}$.
 
  By Lemma~\ref{claimSchlafli},
  $N_\Sigma (r^3_3)\cap K=S_K=\{s^3_1,s^3_2,s_3^3,t_3^3\}$ is a hitting set of $K$.

  Since $H$ is a subgraph of $K$, $S_H=S_K\cap V(H)$ is a hitting set
  of $H$ and $|S_H| \leq 4$.

 Since $t_3^3\in S_H$, we can apply Lemma~\ref{exp}, and conclude that $S_H$
  is a hitting set of $G$.
\end{proof}

 \subsection{The class ${\cal F}_{5}$}
 Take  $\Sigma$, with vertices numbered
 $r^i_j$, $s^i_j$, $t^i_j$ as usual. Let $H$ be the subgraph induced
 on $$ \{r^i_j : (i,j) \in I_1\} \cup \{s^i_j : (i,j) \in I_2\} \cup \{t^i_j : (i,j) \in I_3\}$$
  where $I_1$, $I_2$, $I_3\subseteq \{(i,j) :1\leq i,j \leq 3\}$ are
 chosen such that:
 \begin{itemize}
 \item $(1,1)$, $(3,1)$, $(3,2)$, $(3,3)\in I_1$ and $(2,2)$, $(2,3)\notin I_1$
 \item $(1,1)\notin I_2$
 \item $(1,2)$, $(1,3)$, $(2,3)$, $(3,3)\in I_3$ and $(2,1)$, $(3,1)\notin I_3$
 \end{itemize}
 
Let $G$ be obtained from $H$ by adding the edge $r_1^1t_2^1$. We
 define ${\cal F}_{5}$ to be the class of all such graphs $G$.

 \begin{lemma}\label{l:f5}
   Every prismatic graph in the class ${\cal F}_{5}$  admits a hitting set of
   cardinality smaller or equals to 5.
 \end{lemma}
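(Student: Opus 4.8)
The plan is to follow the template already used for the classes $\mathcal{F}_0$ and $\mathcal{F}_4$: produce a hitting set of the auxiliary subgraph $H$ of $\Sigma$ directly from Lemma~\ref{claimSchlafli}, and then check that the single edge added to form $G$ creates no new triangle, so that the same set hits $G$. The vertex I would feed to Lemma~\ref{claimSchlafli} is $r^1_1$, which lies in $H$ because $(1,1)\in I_1$.

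First I would compute, from the edge rules of $\Sigma$, that $N_\Sigma(r^1_1)=\{r^2_2,r^2_3,r^3_2,r^3_3\}\cup\{s^1_1,s^1_2,s^1_3\}\cup\{t^1_1,t^2_1,t^3_1\}$: within the tile $R$ the neighbours of $r^1_1$ are the four vertices off its line and column, $r^1_1$ is complete to line $1$ of $S$, and $r^1_1$ is complete to column $1$ of $T$. The point is that the defining constraints of $\mathcal{F}_5$ delete from $H$ exactly five of these ten neighbours: $r^2_2,r^2_3$ (since $(2,2),(2,3)\notin I_1$), then $s^1_1$ (since $(1,1)\notin I_2$), and finally $t^2_1,t^3_1$ (since $(2,1),(3,1)\notin I_3$). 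Hence $N_H(r^1_1)=N_\Sigma(r^1_1)\cap V(H)\subseteq\{r^3_2,r^3_3,s^1_2,s^1_3,t^1_1\}$, a set of size at most $5$, and by Lemma~\ref{claimSchlafli} applied to $H$ (which is prismatic, being an induced subgraph of $\Sigma$) it is a hitting set of $H$.

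It remains to pass from $H$ to $G$. The added edge joins $r^1_1$ to $t^1_2$, and both endpoints are in $H$ (note $(1,2)\in I_3$). Adding this edge can create a new triangle only through a common neighbour of $r^1_1$ and $t^1_2$ in $H$, so I would compute their common neighbourhood in $\Sigma$. Matching the edge rules on each tile gives $N_\Sigma(r^1_1)\cap N_\Sigma(t^1_2)=\{r^2_2,r^2_3,s^1_1,t^2_1,t^3_1\}$, which is \emph{precisely} the set of five vertices the constraints have removed from $H$. Therefore $r^1_1$ and $t^1_2$ have no common neighbour in $H$, no triangle is created, $G$ and $H$ have the same triangles, and the hitting set $N_H(r^1_1)$ of size at most $5$ works for $G$ as well.

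The individual computations are routine applications of the adjacency rules of $\Sigma$; the only thing that needs genuine care is the double coincidence that the five forced-out vertices are simultaneously the ``excess'' neighbours of $r^1_1$ in $\Sigma$ and the entire common neighbourhood of the new edge $r^1_1t^1_2$. This is exactly what makes $r^1_1$ the right centre to choose, and carefully checking both adjacency computations against the between-tile edge rules is the main (and only modest) obstacle.
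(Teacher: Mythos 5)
Your proposal is correct and follows essentially the same route as the paper's proof: apply Lemma~\ref{claimSchlafli} to $r^1_1$ in $H$ to get the hitting set $\{r^3_2,r^3_3,s^1_2,s^1_3,t^1_1\}$ (or a subset of it), then verify that all common neighbours of $r^1_1$ and $t^1_2$ in $\Sigma$, namely $\{r^2_2,r^2_3,s^1_1,t^2_1,t^3_1\}$, are excluded from $H$ by the definition of ${\cal F}_5$, so the added edge creates no triangle. Your write-up is in fact slightly more careful than the paper's, since you use $\subseteq$ rather than equality for $N_H(r^1_1)$ (some of those five vertices are only optionally in $H$).
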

 
 \begin{proof}
 
 By Lemma~\ref{claimSchlafli},
 $N_\Sigma (r^1_1)\cap H=S_H = \{r^3_2,r^3_3,s^1_2,s^1_3,t^1_1 \}$ is a hitting set of $H$.
 
 In  $\Sigma$, vertices $r_1^1$ and $t_2^1$
 have as common neighbours  the following vertices: $s_1^1$, $r^2_2$, $r^2_3$, $t^2_1$ and
 $t_1^3$.  Since none of them are in $H$, the addition of the edge
 $r_1^1t_2^1$ does not create another triangle.
 
 This proves that $S_H$ is a hitting set of $G$.
 \end{proof}

 \subsection{The class ${\cal F}_{6}$}
 
Take  $\Sigma$ with vertices numbered $r^i_j$, $s^i_j$, $t_j^i$ as usual. Let
 $H$ be the subgraph induced by
 
 $$ \{r^i_j : (i,j) \in I_1\} \cup \{s^i_j : (i,j) \in I_2\} \cup \{t^i_j : (i,j) \in I_3\}$$
 
 where :
 \begin{itemize}
 \item $I_1=\{(1,1),(1,2),(3,1),(3,2),(3,3)\}$,
 \item $I_2=\{(1,2),(2,1),(2,2),(3,3)\}$,
 \item $I_3=\{(1,2),(2,2),(1,3),(2,3),(3,3)\}$
 \end{itemize}
 
 Let $G$ be obtained from $H$ by adding the edge $r_1^1 t_2^1$ and
 then multiplying $\{r^3_3,t_3^3\}$. We define ${\cal F}_{6}$ to be
 the class of all such graphs $G$.

 \begin{lemma}\label{l:f6}
   Every prismatic  graph in the class ${\cal F}_{6}$ admits a hitting set of
   cardinality smaller or equals to 3.
 \end{lemma}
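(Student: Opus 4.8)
The plan is to reuse the template of the preceding lemmas: exhibit a size-$3$ hitting set $S_H$ of the induced subgraph $H$, check that the edge added when forming $G$ creates no triangle, and then transfer $S_H$ to $G$ through Lemma~\ref{multi}. I would start by reading off $V(H)$ from $I_1,I_2,I_3$ and listing every triangle of $H$. As $H$ is an induced subgraph of $\Sigma$, each of its triangles is internal (inside one tile) or external, i.e.\ of the form $\{r^i_j,s^j_k,t^k_i\}$. Scanning the tiles, $R$ and $T$ contain no three vertices lying in pairwise distinct lines and columns (the lines present in $R$ are only $\{1,3\}$ and the columns present in $T$ are only $\{2,3\}$), so the only internal triangle is $\{s^1_2,s^2_1,s^3_3\}$; running through the external pattern gives exactly $\{r^3_1,s^1_2,t^2_3\}$, $\{r^3_2,s^2_1,t^1_3\}$, $\{r^3_2,s^2_2,t^2_3\}$ and $\{r^3_3,s^3_3,t^3_3\}$. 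Thus $H$ has precisely five triangles.

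Next I would pin down the hitting set. Since $X=\{r^3_3,t^3_3\}$ is the pair that gets multiplied, Lemma~\ref{multi} will require $S_H$ to avoid both $r^3_3$ and $t^3_3$; the triangle $\{r^3_3,s^3_3,t^3_3\}$ then has to be met at $s^3_3$, so $s^3_3$ is forced (and it also meets the internal triangle $\{s^1_2,s^2_1,s^3_3\}$). The three remaining external triangles are covered by $r^3_2$ (meeting $\{r^3_2,s^2_1,t^1_3\}$ and $\{r^3_2,s^2_2,t^2_3\}$) and $t^2_3$ (meeting $\{r^3_1,s^1_2,t^2_3\}$ and $\{r^3_2,s^2_2,t^2_3\}$), so I would take $S_H=\{r^3_2,s^3_3,t^2_3\}$; it meets all five triangles, has size $3$ and is disjoint from $X$. (Alternatively, a size-$3$ hitting set can be read off from $N_\Sigma(r^2_3)\cap V(H)$ via Lemma~\ref{claimSchlafli} after discarding the four vertices $r^1_1,r^1_2,t^1_2,t^2_2$, which lie in no triangle.)

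I would then check that adding the edge $r^1_1t^1_2$ (written $r_1^1t_2^1$ in the definition) creates no triangle. A new triangle would consist of $r^1_1$, $t^1_2$ and a common neighbour of theirs; the common neighbours of $r^1_1$ and $t^1_2$ in $\Sigma$ are $s^1_1$, $r^2_2$, $r^2_3$, $t^2_1$ and $t^3_1$, and one verifies from $I_1,I_2,I_3$ that none of these is a vertex of $H$. Hence $H$ and $H':=H+r^1_1t^1_2$ have the same triangles, and $S_H$ is still a hitting set of $H'$.

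Finally I would apply Lemma~\ref{multi} to the multiplication of $X=\{r^3_3,t^3_3\}$ in $H'$. Since the subscript of $t^3_3$ equals the superscript of $r^3_3$, these two vertices are adjacent, so $H'[X]$ is a single edge, which is an induced subgraph of $C_4$; as $X$ contains no non-adjacent pair, the hypothesis on common neighbours is vacuous. Because $S_H$ is disjoint from $X$, Lemma~\ref{multi} yields that $S_H$ is a hitting set of $G$, of cardinality $3$. I expect the only real work to be the finite bookkeeping of the first and third paragraphs — making sure the five triangles of $H$ are exactly as listed, and that the five $\Sigma$-common-neighbours of $r^1_1$ and $t^1_2$ all lie outside $H$ — after which the forced choice of $s^3_3$ and Lemma~\ref{multi} close the argument at once.
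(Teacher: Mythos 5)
Your proof is correct, but it takes a more explicit (and in fact more careful) route than the paper. The paper disposes of ${\cal F}_{6}$ in a single line: by Lemma~\ref{claimSchlafli}, $N_G(r^1_3)=\{r^3_1,r^3_2,s^3_3\}$ is a hitting set of $G$ --- which is exactly the set your parenthetical remark recovers via $N_\Sigma(r^2_3)\cap V(H)$ after discarding triangle-free vertices, and which your five-triangle census also certifies at once. However, the paper's formulation is loose on two counts. First, $(1,3)\notin I_1$, so $r^1_3$ is not a vertex of $G$ at all, and Lemma~\ref{claimSchlafli} as stated concerns the neighbourhood of a vertex of the graph itself; what must be meant is $N_\Sigma(r^1_3)\cap V(G)$, which hits every triangle of $H$ only because every triangle of $H$ is a triangle of $\Sigma$. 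Second, the one-liner silently skips the two operations defining ${\cal F}_{6}$: one still has to check that the added edge $r^1_1t^1_2$ creates no triangle (your common-neighbour computation, whose list $\{s^1_1,r^2_2,r^2_3,t^2_1,t^3_1\}$ is correct and indeed disjoint from $V(H)$) and that multiplying $\{r^3_3,t^3_3\}$ preserves the hitting property (your application of Lemma~\ref{multi}, legitimate since $r^3_3t^3_3$ is an edge and your set avoids both multiplied vertices). So your write-up supplies precisely the verifications the paper's proof omits, at the cost of finite bookkeeping; the paper's approach buys brevity by leaning on Lemma~\ref{claimSchlafli}, but as printed it needs exactly your repairs to be fully rigorous. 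Your enumeration of the five triangles of $H$ and your hitting set $\{r^3_2,s^3_3,t^2_3\}$ (a valid alternative to the paper's $\{r^3_1,r^3_2,s^3_3\}$) both check out.
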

 
 \begin{proof}
   By Lemma~\ref{claimSchlafli},
   $N_G (r^1_3)=\{r_1^3,r_2^3,s_3^3\}$ is a hitting set of $G$.
 \end{proof}

 \subsection{The class ${\cal F}_{7}$}
 
 The \textit{six-vertex prism} is the graph with six vertices
 $a_1, a_2, a_3,b_1,b_2,b_3$ and edges
$$a_1a_2, a_1a_3,a_2a_3,b_1b_2,b_1b_3,b_2b_3,a_1b_1,a_2b_2,a_3b_3$$

Let $K$ be a graph with six vertices, with the six-vertex prism as a
subgraph. Construct a new graph $G$ as follows. The vertices of $G$
consist of $E(K)$ and some of the vertices of $K$, so
$E(K) \subseteq V(G) \subseteq E(K) \cup V(K)$; two edges of $K$ are
adjacent in $G$ if they have no common end in $K$; an edge and a
vertex of $K$ are adjacent in $G$ if they are incident in $K$; and two
vertices of $K$ are adjacent in $G$ if they are non-adjacent in
$K$. The class of all such graphs $G$ is called ${\cal F}_{7}$ (they
are all prismatic). 
 
 \begin{lemma}\label{l:f7}
  Every prismatic graph in the class ${\cal F}_{7}$ admits a hitting set of
   cardinality at most $5$.
 \end{lemma}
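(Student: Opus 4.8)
The plan is to show that, despite the involved definition, the triangles of $G$ are nothing more than the perfect matchings of $K$, and then to hit all of them at once with the (at most five) edge-vertices incident with a single vertex of $K$. Call a vertex of $G$ an \emph{edge-vertex} if it lies in $E(K)$ and a \emph{vertex-vertex} if it lies in $V(K)\cap V(G)$. First I would classify every triangle $\{u,v,w\}$ of $G$ by the number $p$ of its vertices that are edge-vertices. The two \emph{mixed} cases are impossible. If $p=2$, say $u,v\in E(K)$ and $w\in V(K)$, then $u\sim w$ and $v\sim w$ force $w$ to be incident with both edges $u$ and $v$, so $u,v$ share an end and cannot be adjacent in $G$, a contradiction. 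If $p=1$, say $u\in E(K)$ and $v,w\in V(K)$, then $u\sim v$ and $u\sim w$ force $v,w$ to be the two ends of the edge $u$, so $vw\in E(K)$; but $v\sim w$ requires $v,w$ non-adjacent in $K$, again a contradiction.

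For the remaining two cases I would argue as follows. If $p=0$, then $u,v,w$ are pairwise non-adjacent in $K$, i.e.\ an independent set of size three in $K$. Here is the key point: since $K$ contains the six-vertex prism, the sets $\{a_1,a_2,a_3\}$ and $\{b_1,b_2,b_3\}$ are triangles, hence cliques, of $K$, and their union is all of $V(K)$. Any independent set of $K$ therefore contains at most one vertex of each, so $K$ has no independent set of size three, and no triangle of $G$ has $p=0$. Consequently every triangle has $p=3$: it consists of three edge-vertices that are pairwise disjoint edges of $K$, and since they cover all six vertices of $K$ they form a perfect matching of $K$. Thus the triangles of $G$ are precisely the perfect matchings of $K$.

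It then remains to build the hitting set. I would fix any vertex $v$ of $K$ and let $S$ be the set of edge-vertices that are edges of $K$ incident with $v$; these all lie in $V(G)$ because $E(K)\subseteq V(G)$, and $|S|\le 5$ since $|V(K)|=6$ bounds the number of edges at $v$. Every perfect matching of $K$ covers $v$ and so uses an edge of $S$; as every triangle of $G$ is such a matching, $S$ meets every triangle, giving a hitting set of cardinality at most $5$. There is no genuinely hard step here: the only point demanding care is the exhaustive case analysis of triangle types, and within it the observation that $K\supseteq$ prism forces $V(K)$ to split into two cliques and hence kills all vertex-vertex triangles. (Note that prismaticity of $G$ is never used, so the statement in fact holds for every $G\in{\cal F}_7$.)
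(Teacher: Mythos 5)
Your proof is correct and follows essentially the same route as the paper: the same case analysis on how many triangle vertices are edge-vertices of $K$ (the two mixed cases die for the same incidence reasons, the all-vertex case dies because $V(K)$ is covered by the two prism triangles, hence no stable set of size $3$), and the same hitting set consisting of the at most five edges of $K$ incident with one fixed vertex, which every perfect matching must use.
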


 \begin{proof}
%
 Let $K$ and $G$ be as in the definition. 
Let us show that 

$$S_G=E(K)\cap \{a_1a_2,a_1a_3,a_1b_1,a_1b_2,a_1b_3\}$$ 

is a hitting set of $G$. Let $T$ be a triangle in $G$. We now break
into 4 cases.
 \begin{itemize}
 \item $|T \cap V(K)|=1$: Such a triangle does not exist. Because if two edges
   in $K$ are adjacent in $G$ then they do not share a common vertex in
   $K$ and then they do not have in $G$ a common neighbour in $V(G)\cap V(K)$.

 \item $|T\cap V(K)|=2$: Such a triangle does not exist. Indeed, if in $G$, there are two vertices $u,v\in V(K)$, both adjacent to $e\in E(K)$, then $e=uv$. A contradiction to  $u$, $v$ being adjacent in $G$.  
 	
 \item $|T\cap V(K)|=3$: Such a triangle does not exist. Otherwise it would induce in $K$ a stable set of cardinality 3 and there is no such stable set in the prism and hence in $K$.
 	
 \item $|T\cap V(K)|=0$: Every vertex of $G$ not in $V(K)$ is in
   $E(K)$. If such a triangle $T=\{e_1,e_2,e_3\}$, ($e_1,e_2,e_3\in E(K)$) exists, then $e_1$, $e_2$ and $e_3$ do not have common ends in $K$. It follows that at least one of $e_1$, $e_2$ or $e_3$ has $a_1$ as an end point and therefore is included in $S_G$.
 \end{itemize}

 Hence $S_G$ is a hitting set of $G$ and $|S_G|\leq 5$. 

\end{proof}

\subsection{The class ${\cal F}_{8}$}
 
Let $H$ be the graph with nine vertices $v_1, \dots , v_9$ and with
edges as follows: $\{v_1,v_2,v_3\}$ is a triangle, $\{v_4,v_5,v_6\}$
is complete to $\{v_7,v_8,v_9\}$, and for $i =1,2,3$, $v_i$ is
adjacent to $v_{i+3}$, $v_{i+6}$. Note that $H$ is a rotator. Let $G$ be obtained from $H$ by
multiplying $\{v_4,v_7\}$, $\{v_5,v_8\}$ and $\{v_6,v_9\}$. We define
${\cal F}_{8}$ to be the class of all such graphs $G$.

\begin{lemma}\label{l:f8}
  Every prismatic graph in the class ${\cal F}_{8}$ admits a hitting set of
  cardinality smaller or equals to 3.
\end{lemma}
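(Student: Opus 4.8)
The plan is to exhibit an explicit hitting set of size $3$ inside the base graph $H$ and to transport it to $G$ through the multiplication operation by means of Lemma~\ref{multi}. First I would determine all triangles of $H$. Listing the neighbourhoods, one checks that $\{v_4,v_5,v_6\}$ and $\{v_7,v_8,v_9\}$ are stable sets joined by a complete bipartite graph (hence a triangle-free $K_{3,3}$), so every triangle must meet $\{v_1,v_2,v_3\}$. A short case analysis on the neighbourhood of each $v_i$ then shows that the only triangles of $H$ are $\{v_1,v_2,v_3\}$, $\{v_1,v_4,v_7\}$, $\{v_2,v_5,v_8\}$ and $\{v_3,v_6,v_9\}$; this is of course consistent with $H$ being a rotator with center $\{v_1,v_2,v_3\}$. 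In particular $S=\{v_1,v_2,v_3\}$ meets all four triangles, so it is a hitting set of $H$, and crucially $S$ is disjoint from all the vertices that get multiplied.

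Next I would handle the passage from $H$ to $G$. Here $G$ is built by multiplying the three edges $\{v_4,v_7\}$, $\{v_5,v_8\}$ and $\{v_6,v_9\}$, which I read as three successive multiplications, each of a pair of adjacent vertices. For each such pair the induced subgraph is a single edge $K_2$, which is an induced subgraph of $C_4$, and since the two vertices of the pair are adjacent the hypothesis of Lemma~\ref{multi} concerning non-adjacent vertices of the multiplied set holds vacuously. Therefore Lemma~\ref{multi} applies to each multiplication, and since $S$ is disjoint from each multiplied pair, $S$ stays a hitting set after each step. Applying the lemma three times in succession (to $H$, then to the intermediate graphs) yields that $S$ is a hitting set of $G$, of cardinality $3$, which is what we want.

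The verifications are all routine, so there is no genuinely hard step; the only points requiring care are: (i) confirming the complete enumeration of triangles of $H$, so that one is certain $\{v_1,v_2,v_3\}$ really hits every triangle and not merely the obvious ones, and (ii) checking that the hypotheses of Lemma~\ref{multi} are met at \emph{each} of the three multiplications, i.e.\ that the relevant pair is still an edge in the current intermediate graph (which it is, since those vertices are untouched by the earlier multiplications). Since Lemma~\ref{multi} does not require the graph being multiplied to be prismatic, the fact that the intermediate graphs need not be prismatic causes no difficulty.
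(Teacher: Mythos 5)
Your proposal is correct and follows essentially the same route as the paper's own proof: take $S=\{v_1,v_2,v_3\}$ as a hitting set of the rotator $H$, then apply Lemma~\ref{multi} successively to the three edges $\{v_4,v_7\}$, $\{v_5,v_8\}$, $\{v_6,v_9\}$, noting that $S$ is disjoint from them and that Lemma~\ref{multi} does not require the intermediate graphs to be prismatic. Your explicit enumeration of the four triangles of $H$ and the vacuous-hypothesis remark are just slightly more detailed versions of what the paper leaves as ``easily seen.''
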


 \begin{proof}
 We can easily see that $S_H=\{v_1,v_2,v_3\}$ is a hitting set of
 $H$.
  Since $\{v_4,v_7\}$, $\{v_5,v_8\}$ and $\{v_6,v_9\}$ are each
  not in $S_H$ and are each edges, we can successively apply Lemma~\ref{multi},
 for each one separately  and $S_H$ stays a
 hitting set of $G$ (note that the fact that  $H$ is a prismatic graph is not used in the proof of  Lemma~\ref{multi}).
 \end{proof}

 \subsection{The class ${\cal F}_{9}$}
 Take  $\Sigma$ with vertices numbered
 $r_j^i$, $s_j^i$, $t_j^i$ as usual. Let $H$ be the subgraph induced
 by
 $$\{r_j^i:(i,j)\in I_1\} \cup \{s_j^i:(i,j)\in I_2\}\cup \{t_j^i:(i,j)\in I_3\}$$
 
 where $I_1$, $I_2$, $I_3\subseteq \{(i,j) : 1\leq i,j \leq 3\}$ satisfy 
 
 \begin{itemize}
 \item $(2,1)$, $(3,1)$, $(3,2)$, $(3,3)\in I_1$ and $I_1$ contains at least one of $(1,2)$, $(1,3)$ and $(1,1)$, $(2,2)$, $(2,3)\notin I_1$,
 \item $(1,1)$, $(2,2)$, $(3,3)\in I_2$ and $(1,2)$, $(1,3)\notin I_2$,
 \item $(1,3)$, $(2,3)$, $(3,3)\in I_3$, and $I_3$ contains at least one of $(1,2)$, $(2,2)$, $(3,2)$, and $(1,1)$, $(2,1)$, $(3,1)\notin I_3$,
 \item either $(1,2)$, $(1,3)\in I_1$ or $I_3$ contains $(1,2)$ and at least one of $(2,2)$, $(3,2)$.
 \end{itemize}
 
 Let $G$ be obtained from $H$ by adding a new vertex $z$ adjacent to
 $r_2^3$, $r_3^3$, $s_1^1$, and to $t_2^2$ if $(2,2)\in I_3$, and to
 $t_2^3$ if $(3,2)\in I_3$. We define ${\cal F}_{9}$ to be the class
 of all such graphs $G$.

  \begin{lemma}\label{l:f9}
    Every prismatic graph in the class ${\cal F}_{9}$ admits a hitting set of
    cardinality smaller or equals to 3.
 \end{lemma}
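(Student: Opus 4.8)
The plan is to follow the same pattern used throughout this section: identify a vertex of $\Sigma$ (or of the constructed graph $G$) whose neighbourhood, after intersecting with $V(H)$, forms a small hitting set of $H$, and then argue that the single added vertex $z$ is already covered by that set or does not create uncovered triangles. Since $H$ is an induced subgraph of $\Sigma$ and every vertex of $\Sigma$ has degree $10$ with its neighbourhood a hitting set (Lemma~\ref{claimSchlafli}), the first step is to pick a well-placed vertex. A natural candidate is a vertex incident to many of the ``$(\cdot,3)$'' columns, for instance $r_3^3$ or $t_3^3$, since the constraints force $(3,1),(3,2),(3,3)\in I_1$ and $(1,3),(2,3),(3,3)\in I_3$, so the relevant columns are well populated. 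First I would compute $N_\Sigma(v)\cap V(H)$ for such a $v$ and check that it has cardinality at most $3$; the sparsity constraints (several $(i,j)\notin I_k$) are exactly what keeps this intersection small.

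The key step is then to handle the new vertex $z$. I would verify that every triangle of $G$ through $z$ is hit by $S_H$. Since $z$ is adjacent only to $r_2^3$, $r_3^3$, $s_1^1$, and conditionally to $t_2^2$ and $t_2^3$, a triangle containing $z$ must consist of $z$ together with two of these neighbours that are themselves adjacent in $H$. So I would enumerate the adjacencies among $\{r_2^3, r_3^3, s_1^1, t_2^2, t_2^3\}$ inside $\Sigma$: using the tile adjacency rules, $r_2^3$ and $r_3^3$ are in the same line $3$ of $R$ hence nonadjacent; column $3$ of $R$ is complete to line $3$ of $T$, so $r_2^3, r_3^3$ may be adjacent to $t_2^3$ (line $3$ of $T$), and $r_3^3$ to $t_3^3$-type vertices; $s_1^1$ sits in line $1$ of $S$, which connects to column $1$ of $R$. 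The point is to show that for each such triangle through $z$, at least one of its two $\Sigma$-vertices already lies in the chosen $S_H$, so that $S_H$ (without needing to include $z$) remains a hitting set of all of $G$.

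The main obstacle will be the two conditional edges from $z$ to $t_2^2$ and $t_2^3$, governed by whether $(2,2)\in I_3$ and $(3,2)\in I_3$, combined with the disjunctive fourth bullet (``either $(1,2),(1,3)\in I_1$ or $I_3$ contains $(1,2)$ and at least one of $(2,2),(3,2)$''). I expect this case split to require checking a few subcases, and I would want to choose the base vertex $v$ so that its neighbourhood already contains the $T$-tile vertices that could pair with $z$ to form triangles, regardless of which conditional edges are present. If a single choice of $v$ does not cover all subcases cleanly, the fallback is to verify directly, as in the proof of Lemma~\ref{l:f5}, that the added vertex $z$ creates no triangle whose other two vertices both avoid $S_H$, by checking that the common neighbours in $\Sigma$ of each adjacent pair among $z$'s neighbours are absent from $H$. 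Once both the computation of $|S_H|\le 3$ and the triangle-through-$z$ analysis are complete, the lemma follows exactly as the preceding $\mathcal{F}_i$ lemmas do.
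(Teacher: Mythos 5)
Your high-level plan (pick a vertex $v$ so that $N_\Sigma(v)\cap V(H)$ is a small hitting set of $H$ via Lemma~\ref{claimSchlafli}, then deal with $z$) is the same as the paper's, but the choice of $v$ is the entire content of this proof, and both of your candidates fail. The vertices $r_3^3$ and $t_3^3$ sit in the \emph{densely populated} part of $H$: $N_\Sigma(r_3^3)\cap V(H)$ always contains the five vertices $r_1^2, s_3^3, t_3^1, t_3^2, t_3^3$ (the third bullet forces all of column~$3$ of $T$ into $H$, and $(2,1)\in I_1$, $(3,3)\in I_2$), and $N_\Sigma(t_3^3)\cap V(H)$ always contains the four vertices $s_3^3, r_1^3, r_2^3, r_3^3$; with the optional index pairs these intersections can reach $8$ and $7$. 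Your selection heuristic is inverted: you want $v$ whose $\Sigma$-neighbourhood lands in the \emph{excluded} positions, not the populated ones. The correct choice is $r_1^1$, which is itself excluded from $H$ (as $(1,1)\notin I_1$) and whose ten $\Sigma$-neighbours are all excluded by the constraints $(2,2),(2,3)\notin I_1$, $(1,2),(1,3)\notin I_2$, $(1,1),(2,1),(3,1)\notin I_3$ -- except exactly $r_2^3, r_3^3, s_1^1$, i.e.\ precisely the unconditional neighbours of $z$; the definition of ${\cal F}_{9}$ is tailored so that $z$ plays the role of the deleted vertex $r_1^1$. Then every triangle of $H$ is a triangle of $\Sigma$, hence meets $N_\Sigma(r_1^1)$, and since all its vertices lie in $V(H)$ it meets $\{r_2^3, r_3^3, s_1^1\}$. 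This three-vertex set is exactly the paper's (one-line) hitting set.

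Your treatment of $z$ also rests on a misquoted adjacency rule: column~$3$ of $R$ is complete to line~$3$ of $S$, not of $T$ (edges between $T$ and $R$ go from \emph{columns of $T$} to \emph{lines of $R$}), so the tentative claim that $r_2^3, r_3^3$ ``may be adjacent to $t_2^3$'' is false. Carrying out the enumeration correctly, no two vertices of $\{r_2^3, r_3^3, s_1^1, t_2^2, t_2^3\}$ are adjacent: $r_2^3, r_3^3$ share line~$3$ of $R$; $t_2^2, t_2^3$ share column~$2$ of $T$; and every cross pair violates the line/column rules (e.g.\ $t_2^2, t_2^3$ attach to line~$2$ of $R$, while $z$'s $R$-neighbours are in line~$3$). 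So $z$ lies in no triangle of $G$ at all, and $\{r_2^3, r_3^3, s_1^1\}$ hits every triangle of $G$. Your fallback (checking that adjacent pairs among $z$'s neighbours have no common neighbour in $H$) would discover this, but it cannot repair the first gap: starting from $r_3^3$ or $t_3^3$ you never reach a hitting set of size $3$ -- nor even $5$ -- so the plan as written does not prove the lemma.
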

 
 \begin{proof}
 By Lemma \ref{claimSchlafli},   $ N_G(r^1_1)=\{r^3_2,r^3_3,s^1_1\}$ is a hitting
  set of $G$.
 \end{proof}

 \section{Clique cover of non-orientable prismatic graphs}
\label{sec:col}

In this section we show how the presence of a hitting set of
cardinality bounded by a constant can be used for solving the clique
cover problem.  We have seen in the previous section that every
non-orientable prismatic graph admits a hitting set of size at most
10. The following is more useful for algorithmic purposes.

 \begin{theorem}\label{l:27_ou_5}
   If $G$ is a non-orientable prismatic graph then
   $G$ admits a hitting set of cardinality at  most $5$ or $G$ is a Schläfli-prismatic graph. 
 \end{theorem}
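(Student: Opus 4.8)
```latex
The plan is to deduce Theorem~\ref{l:27_ou_5} directly from
Lemma~\ref{supertheorem} together with the class-by-class analysis
carried out throughout Section~\ref{sec:no}. Recall that by
Theorem~\ref{rigidtoprism}, every non-orientable prismatic graph is
obtained from a \emph{rigid} non-orientable prismatic graph by
replicating vertices outside the core and then deleting edges between
such vertices. Neither replication of a vertex $v$ outside the core nor
deletion of edges between two such vertices can create a new triangle:
a triangle using a replicated copy would, by the definition of
replication, project onto a triangle of the original graph, and the
deleted edges lie outside the core and hence in no triangle at all.
Consequently, a hitting set of the underlying rigid graph remains a
hitting set after these operations, and a Schl\"afli-prismatic graph
remains Schl\"afli-prismatic under them. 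It therefore suffices to
establish the statement for \emph{rigid} non-orientable prismatic
graphs.

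By Theorem~$4.1$ of~\cite{chudnovsky_claw-free_2008}, as corrected by
Javadi and Hajebi~\cite{DBLP:journals/jgt/JavadiH19}, every rigid
non-orientable prismatic graph belongs to the menagerie, i.e.\ to one of
the fourteen classes treated in the preceding subsections. For each of
these classes we have proved a bound on the size of a hitting set:
${\cal F}_1$ admits a hitting set of size $\le 2$
(Lemma~\ref{l:f1}); ${\cal F}_0$, ${\cal F}_3$, ${\cal F}_6$,
${\cal F}_8$, ${\cal F}_9$ admit one of size $\le 3$
(Lemmas~\ref{l:f0},~\ref{l:f3},~\ref{l:f6},~\ref{l:f8},~\ref{l:f9});
graphs of parallel-square type and graphs in ${\cal F}_2$,
${\cal F}_4$ admit one of size $\le 4$
(Lemmas~\ref{l:parallel},~\ref{l:f2},~\ref{l:f4}); and graphs of
skew-square type, fuzzily Schl\"afli-prismatic graphs, and graphs in
${\cal F}_5$, ${\cal F}_7$ admit one of size $\le 5$
(Lemmas~\ref{l:skews},~\ref{Fuzzily},~\ref{l:f5},~\ref{l:f7}). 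In every
one of these thirteen cases the hitting set has cardinality at most $5$.

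The only remaining class is that of the Schl\"afli-prismatic graphs
themselves (the induced subgraphs of $\Sigma$). Here we do not claim a
hitting set of size $5$; indeed by Lemma~\ref{Sclacomp} the whole graph
$\Sigma$ needs a hitting set of size $10$. But this is precisely the
exceptional alternative allowed in the statement: if the rigid graph is
Schl\"afli-prismatic, then since being Schl\"afli-prismatic is preserved
by replication outside the core and by the relevant edge deletions (as
noted above), the original graph $G$ is Schl\"afli-prismatic as well.
Thus every rigid non-orientable prismatic graph either lies in one of
the thirteen classes and inherits a hitting set of size at most $5$, or
is Schl\"afli-prismatic; lifting back through
Theorem~\ref{rigidtoprism} yields the same dichotomy for $G$, which is
exactly the claim.

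The main point requiring care is the verification that the three reduction
operations---replication of non-core vertices, deletion of edges between
non-core vertices, and the passage from the menagerie member to $G$---do
not spoil either conclusion of the dichotomy. For the hitting-set
alternative this amounts to observing that these operations introduce no
new triangle and destroy no existing one outside the core, so that a
triangle-covering set of the rigid graph still meets every triangle of
$G$; this is the same principle already used implicitly when we reduced
Lemma~\ref{supertheorem} to the rigid case. For the Schl\"afli
alternative one checks that replicating a non-core vertex of an induced
subgraph of $\Sigma$, and deleting non-core edges, again produces an
induced subgraph of $\Sigma$, using that such vertices lie in no triangle
and hence can be handled within the vertex-transitive structure of
$\Sigma$. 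No genuinely new computation beyond the fourteen class lemmas
is needed; the work is purely in assembling them.
```
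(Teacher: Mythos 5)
Your handling of the thirteen non-Schl\"afli classes matches the paper's proof, but your Schl\"afli-prismatic case contains a genuine gap: the claim that being Schl\"afli-prismatic ``is preserved by replication outside the core and by the relevant edge deletions'' is false, and the dichotomy cannot be obtained that way. Replication replaces a non-core vertex by an arbitrarily large stable set of twins, so the resulting graph $G$ can have more than $27$ vertices, and more generally can contain many pairwise non-adjacent vertices with identical neighbourhoods --- configurations that cannot exist inside $\Sigma$ (it is strongly regular with parameters $(27,10,1,5)$; in particular no two of its vertices have the same neighbourhood). Such a $G$ is not an induced subgraph of $\Sigma$, yet it is a legitimate non-orientable prismatic graph arising from a rigid Schl\"afli-prismatic $H$ having a vertex outside its core. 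The appeal to vertex-transitivity does not repair this: no symmetry argument embeds $28$ or more vertices into a $27$-vertex graph. A similar objection applies to deleting edges between non-core vertices, which need not leave an induced subgraph of $\Sigma$.

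What the paper does instead, and what your argument is missing, is a case split inside the Schl\"afli case. If every vertex of the rigid graph $H$ lies in a triangle of $H$, then there is nothing to replicate and no non-core edge to delete, so $G=H$ is Schl\"afli-prismatic; this is the only situation where the second alternative of the theorem is invoked. Otherwise some vertex $v$ of $H$ lies in no triangle of $H$; since $N_\Sigma(v)$ induces a matching of $5$ edges, $N_H(v)$ contains at most one endpoint of each such edge (else $v$ would be in a triangle of $H$), so $|N_H(v)|\leq 5$, and by Lemma~\ref{claimSchlafli} it is a hitting set of $H$, hence of $G$, which has the same triangles. So in that subcase $G$ lands in the \emph{first} alternative, with a hitting set of size at most $5$, not in the Schl\"afli alternative as your proof asserts. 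Without this argument, your proof fails exactly on the graphs obtained by replicating non-core vertices of Schl\"afli-prismatic graphs.
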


 \begin{proof}
   By Theorem \ref{rigidtoprism}, $G$ is
   obtained from a prismatic graph $H$ from the menagerie by
   replicating vertices not in the core of $H$ and then deleting edges
   between vertices not in the core.
  
  It is easy to verify that $G$ and $H$ have exactly the same
  triangles. Therefore, it is obvious that if $S_H$ is a hitting set
  of $H$ then $S_H$ is a hitting set of~$G$.

 From Lemmas~\ref{Fuzzily}, \ref{l:parallel},
  \ref{l:skews}, \ref{l:f0}, \ref{l:f1}, \ref{l:f2}, \ref{l:f3}, \ref{l:f4},
  \ref{l:f5}, \ref{l:f6}, \ref{l:f7}, \ref{l:f8}, and \ref{l:f9}, if
  $H$ is either Fuzzily Schläfli-prismatic, of parallel-square type,
  of skew-square type, or in ${\cal F}_i$,
  $i\in \{0,\dots , 9\}$, then $G$ admits a hitting set
  of size at most $5$.
   
 It remains to consider the case where $H$ is Schläfli-prismatic.  Hence, $H$ is an
  induced subgraph of  $\Sigma$.

  If every vertex of $H$ is contained in a triangle of $H$, then no vertex of $H$ can be replicated, so
  $G=H$ and $G$ is Schläfli-prismatic.  Hence, we may assume
  that some vertex $v$ of $H$ is contained in no triangle of $H$.

Since $N_{\Sigma}(v)$ induces a matching of 5
  edges in $\Sigma $ and $v$ is contained in no triangle of $H$, we see that
  $N_H(v)$ contains at most 5 vertices.
  By Lemma \ref{claimSchlafli}, $N_H(v)$ is a
  hitting set of $H$.  Hence, $H$ (and therefore $G$)
  contains a hitting set of size at most~5.
 \end{proof}
 
 Observe that in a triangle-free graph, solving the clique cover
 problem is easily reducible to computing a matching of maximum
 cardinality by Edmonds' algorithm.  Furthermore, the clique cover
 problem is solvable in constant time when the number of vertices of
 the input graph is bounded. Note that Schläfli-prismatic graphs have at most $27$ vertices.

 We need the following notation~: $T(G)$ is a variant of the adjacency
 matrix of $G$.  For $v, w \in V(G)$, the
 entry $(v, w)$  of $T(G)$ is $0$ if $v$ and $w$ are not adjacent,
 $1$ if they are adjacent but without common neighbour, and $x$ if
 they are adjacent and have $x$ as a common neighbour. Note that in
 the last case, if $G$ is  diamond-free and $K_4$-free, then $x$ is unique.

 This matrix can be computed in time ${\cal O}(n^3)$ at the beginning
 of an algorithm and used afterwards to find the triangles in $G$ or in any
 induced subgraph of $G$.

 \begin{lemma}\label{supersuper}
 	Let  $G \in $ Free$\{\mbox{diamond}, K_4\}$.
 	There is an algorithm that finds a hitting set of $G$ of cardinality at most $5$ if such a set exists and answers "no" otherwise. 
  This algorithm has complexity ${\cal O}(n^7)$.
 \end{lemma}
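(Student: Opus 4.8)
The plan is to reduce the problem to a brute-force search over all candidate sets of size $5$, made efficient by the matrix $T(G)$ defined above. First I would compute $T(G)$ once, in time ${\cal O}(n^3)$. Since $G$ is diamond-free and $K_4$-free, every edge of $G$ lies in at most one triangle: two triangles sharing an edge would induce a diamond or a $K_4$. Consequently the third vertex of the unique triangle through an edge $vw$, if it exists, is exactly the common neighbour recorded in the entry $(v,w)$ of $T(G)$, and this is well defined. This already shows that $G$ has at most ${\cal O}(n^2)$ triangles (at most one per edge), and that the complete list ${\cal T}$ of triangles can be read off from $T(G)$ in time ${\cal O}(n^2)$.

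Next I would observe that a set $S \subseteq V(G)$ is a hitting set if and only if $S \cap T \neq \emptyset$ for every $T \in {\cal T}$, and that if any hitting set of size at most $5$ exists then (assuming $n \geq 5$) one of size exactly $5$ exists, obtained by adding arbitrary vertices. So it suffices to enumerate all subsets $S$ of $V(G)$ of cardinality at most $5$; there are ${\cal O}(n^5)$ of them. For each such $S$ I would test whether it is a hitting set by scanning ${\cal T}$: using a boolean membership array for $S$ (set up in ${\cal O}(1)$ since $|S| \leq 5$), each of the ${\cal O}(n^2)$ triangles is checked in constant time, so one candidate costs ${\cal O}(n^2)$. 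As soon as a hitting set is found it is returned; if the search exhausts all candidates, the algorithm answers \textbf{no}.

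Correctness is immediate: if a hitting set of size at most $5$ exists it is among the enumerated subsets and will be detected, and any subset the algorithm outputs has been verified against the full list ${\cal T}$. The running time is ${\cal O}(n^3)$ for $T(G)$ plus ${\cal O}(n^5 \cdot n^2)={\cal O}(n^7)$ for the search, hence ${\cal O}(n^7)$ overall. The only point that needs care --- and the place where the diamond-free and $K_4$-free hypothesis is essential --- is the triangle-freeness test: it is precisely the bound of at most one triangle per edge, encoded in $T(G)$, that lets each candidate be verified in ${\cal O}(n^2)$ rather than ${\cal O}(n^3)$ time, keeping the total at ${\cal O}(n^7)$.
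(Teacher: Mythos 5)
Your proposal is correct and follows essentially the same approach as the paper: compute $T(G)$ in ${\cal O}(n^3)$, enumerate all ${\cal O}(n^5)$ candidate sets of size at most $5$, and verify each candidate in ${\cal O}(n^2)$ using the fact that in a $\{\mbox{diamond}, K_4\}$-free graph each edge lies in at most one triangle, which $T(G)$ records. The only cosmetic difference is that you extract an explicit list of triangles from $T(G)$ and scan it, whereas the paper checks the entries of $T(G)$ restricted to $G\sm X$ directly; the two checks are equivalent.
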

 
 \begin{proof}
 First compute $T(G)$ in time ${\cal O}(n^3)$ as above.
    Enumerate each
   set $X$ of vertices of $G$ of size at most 5 in time
   ${\cal O}(n^5)$.  For each $X$, test in time ${\cal O} (n^2)$ if
   $G\sm X$ is triangle-free by checking if every entry of $T(G)$
   reduced to $G\sm X$ is either $0$, $1$ or an element of $X$. If no such $X$ exists  answer "no" and  otherwise output $X$. 
 \end{proof}

 \begin{theorem}\label{t:algo_couverture}
 The Clique Cover Problem for
non-orientable prismatic graphs is solvable in  time ${\cal O}(n^{7.5})$.
 \end{theorem}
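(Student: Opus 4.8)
The plan is to combine the structural dichotomy of Theorem~\ref{l:27_ou_5} with the two algorithmic ingredients just established. By that theorem, a non-orientable prismatic graph $G$ either admits a hitting set of size at most~$5$, or else $G$ is Schl\"afli-prismatic and hence has at most $27$ vertices. These two cases will be handled separately, but first I would dispose of the bounded case: if $G$ is Schl\"afli-prismatic, then $|V(G)|\leq 27$ is a constant, so the clique cover problem can be solved in constant time by brute force. Thus the interesting work is entirely in the case where a small hitting set exists.

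\medskip

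For the main case, first I would run the algorithm of Lemma~\ref{supersuper}. Note that a prismatic graph is $\{\mbox{diamond}, K_4\}$-free (a diamond or a $K_4$ would contain a triangle $T$ together with a vertex having two neighbours in $T$, violating the prismatic condition), so the lemma applies to $G$ and computes, in time ${\cal O}(n^7)$, a hitting set $X$ of size at most~$5$ whenever one exists. If the algorithm answers ``no'', then by Theorem~\ref{l:27_ou_5} the graph must be Schl\"afli-prismatic and we fall into the constant-size case above. Otherwise we obtain an explicit hitting set $X$ with $|X|\leq 5$.

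\medskip

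The key idea is then to use $X$ to reduce the clique cover problem to a triangle-free instance. Since $X$ meets every triangle, the induced subgraph $G\sm X$ is triangle-free, and by the observation preceding the theorem, clique cover on a triangle-free graph reduces to maximum matching and is solvable by Edmonds' algorithm in time ${\cal O}(n^{2.5})$. The remaining difficulty---which I expect to be the main obstacle---is to account correctly for the vertices of $X$: a minimum clique cover of $G$ need not restrict to a minimum clique cover of $G\sm X$, because cliques of the optimal cover may use vertices of $X$ as ``attachment points''. The natural way to handle this is to enumerate all possible ways the $\leq 5$ vertices of $X$ are distributed among cliques of an optimal cover: guess a partition of $X$ into the clusters that will each be extended to a clique, and for each guess determine the best way to absorb vertices of $G\sm X$. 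Since $|X|\leq 5$, there are only a constant number of partitions of $X$, so this enumeration costs only a constant factor. For each fixed partition one is left with choosing, for the remaining vertices of $G\sm X$, either to join one of the constantly many $X$-clusters (subject to forming a clique) or to be covered among themselves; the latter is again a triangle-free clique cover, i.e. a matching problem. The ${\cal O}(n^{7.5})$ bound should then emerge as the cost ${\cal O}(n^7)$ of finding the hitting set, dominating or combining with the constantly-many matching computations of cost ${\cal O}(n^{2.5})$ each.

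\medskip

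Thus the overall algorithm is: compute $T(G)$; find a hitting set $X$ of size at most~$5$ via Lemma~\ref{supersuper}, or conclude $G$ is Schl\"afli-prismatic and solve in constant time; then, exploiting that $G\sm X$ is triangle-free, combine a constant number of matching-based clique covers (one per way of distributing $X$) to produce a minimum clique cover of $G$. The correctness hinges on the claim that every minimum clique cover of $G$ is captured by one of the constantly many configurations of $X$, and the running time is governed by the hitting-set computation.
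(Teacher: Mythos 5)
Your overall skeleton matches the paper's: dispose of the Schl\"afli case by brute force, find a hitting set $X$ with $|X|\leq 5$ via Lemma~\ref{supersuper}, and then combine enumeration over cliques meeting $X$ with maximum matching on a triangle-free remainder. But there is a genuine gap at the heart of the argument: your claim that the enumeration involves only ``constantly many configurations'' of $X$ is false, and the step ``determine the best way to absorb vertices of $G\setminus X$'' is precisely the nontrivial part, which you leave unspecified. A configuration is not determined by the partition of $X$ into clusters; it is determined by \emph{which actual vertices} of $G\setminus X$ are absorbed into each cluster, and this choice interacts with the matching (a vertex placed in a triangle with some $s_i\in X$ is no longer available to be matched, so no greedy or local absorption rule is obviously optimal --- one must take a minimum over all choices). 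Done naively, extending a singleton cluster $\{s_i\}$ to a triangle means choosing a pair of adjacent common neighbours, i.e.\ $O(n^2)$ choices per cluster, hence $O(n^{10})$ configurations in the worst case of five singleton clusters, each followed by an $O(n^{2.5})$ matching computation. This blows past $O(n^{7.5})$, so your complexity accounting (``constantly many matching computations'') does not establish the theorem.

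The two missing ideas, both used by the paper, are: (i) only the \emph{triangles} of the optimal cover need to be guessed --- since $G$ is $K_4$-free, the rest of any cover consists of edges and singletons, and these (including the ones containing vertices of $X$) are recovered automatically and optimally by a maximum matching in $G$ minus the guessed triangles; so there is no need to partition $X$ into clusters or to treat edge-clusters at all; and (ii) since prismatic graphs are diamond-free, two triangles sharing a vertex meet only in that vertex, so each $s_i\in X$ lies in at most $n/2$ triangles; these can be listed from the matrix $T(G)$, and a candidate set of at most $5$ pairwise disjoint triangles, at most one through each $s_i$, can therefore be enumerated in $O(n^5)$ time rather than $O(n^{10})$. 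Correctness then follows because every triangle of a minimum cover meets $X$ and the triangles of a cover are pairwise disjoint, so the optimal triangle set is among those enumerated. With (i) and (ii), each of the $O(n^5)$ guesses costs one $O(n^{2.5})$ matching, giving the claimed $O(n^{7.5})$ bound; without them, your proposal yields neither a correct enumeration bound nor the stated running time.
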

 
 \begin{proof}
   Let $G$ be a prismatic non-orientable graph.  The following method provides a minimum clique cover. 

   \begin{enumerate}
   \item Compute the matrix $T(G)$ as previously defined. This can be
     done in time ${\cal O}(n^3)$.
     
   \item Use the method in time ${\cal O} (n^{7})$ described in  Lemma \ref{supersuper}. If the algorithm outputs a hitting set of $G$ of size at most $5$  denoted by $S=\{s_1,\dots,s_{i^*}\}$ ($i^*\leq 5$), then go to Step~\ref{algo:oui}. Else by Theorem \ref{l:27_ou_5}, $G$ is a Schläfli-prismatic graph and go to Step~\ref{algo:no}.
     
     \item \label{algo:no} Since there is a bounded number of vertices in $G$ compute all possible clique covers of $G$ in constant time. Go to Step~\ref{algo:fin}.
     
   \item \label{algo:oui} Enumerate all sets $X$ of at most $5$ disjoint triangles of $G$, (this can trivially be done in time ${\cal O} (n^{15})$). We can do it in time ${\cal O} (n^{5})$ as follows.

Compute the set $\mathcal T_i$ of triangles containing $s_i$ for each $1\le i \le i^*$. This can be done in ${\cal O} (n)$ by reading the line of $T(G)$ corresponding to $s_i$. Notice that there are at most $n/2$ triangles in each $\mathcal T_i$. Then compute all subsets $\mathcal T$ of triangles of $G$ obtained by choosing at most one triangle in each $\mathcal T_i$.
 For each such $\mathcal T$ which contains only pairwise vertex-disjoint triangles, compute by some classical algorithm a maximum matching $M_{\mathcal T}$ of $G\sm (\cup_{T \in \mathcal T} T)$  and let $\mathcal R_{\mathcal T}$ be the vertices of $G$ that are neither in $\mathcal T$ nor in  $\mathcal M_{\mathcal T}$. Notice that $\mathcal T\cup \mathcal M_{\mathcal T}\cup \mathcal R_{\mathcal T}$ is a clique cover of $G$ . Go to Step~\ref{algo:fin}.

  \item\label{algo:fin}  Among all the clique covers generated by the previous steps, let $\mathcal C^*$ be one of the
   smallest size. Return $\mathcal C^*$.
\end{enumerate}

   \noindent{\bf Correctness:} 
   
   If $G$ has no hitting set of size at most $5$ then the algorithm will consider all possible clique covers of $G$. Therefore, the algorithm will give a clique cover of minimum size. 
   
  Otherwise,  let $\mathcal C$ be a minimum clique cover of $G$. Since $G$ is $K_4$-free,  $\mathcal C$ is the union of a set $\mathcal T $ of vertex-disjoint triangles, a set $\mathcal{E}$ of vertex-disjoint edges  and a set $\mathcal R$ of vertices. Since $\mathcal C$ is of minimum size, $\mathcal E$ should be  a maximum matching in the subgraph of  $G$ induced by the vertices not in any triangle  of $\mathcal T$.
   Each triangle in $\mathcal T$ contains a vertex of the hitting set $S$ of $G$ obtained by Step 2 in the algorithm. Furthermore, each vertex of $S$ is contained in at most one triangle of $\mathcal T$.  So the algorithm will consider $\mathcal T$ at some point in Step \ref{algo:oui}  and will compute a maximum matching in the remaining graph. Therefore it will return a clique cover $\mathcal C^*$ of same size as $\mathcal C$.

   \noindent{\bf Complexity}: 
   
   The procedure to enumerate all sets $\mathcal T$
   takes time ${\cal O}(n^{5})$.  For each
   such set, the maximum matching can be found by Micali and
   Vazirani's algorithm~\cite{conf/focs/MicaliV80} in time ${\cal O}(n^{2.5})$.  Overall, a best clique cover
   is found in time ${\cal O} (n^{7.5})$.
 \end{proof}

 \section{Orientable prismatic graphs}
 \label{sec:o}

	In the non-orientable case, we have shown that the clique cover problem is polynomial time solvable. We do not know the complexity of this problem in the orientable case. 
 	In this section we show that the vertex-disjoint triangles problem (the problem of finding a maximum number of vertex-disjoint triangles) is polynomial time solvable in prismatic graphs.  As noted in the introduction, this problem is NP-hard in the general case \cite{hromkovic_vertex-disjoint_1998}. 
	
Remark that solving the vertex-disjoint triangle problem is not sufficient to solve the clique cover problem in orientable prismatic graph. See Figure~\ref{f:mauvais}.

 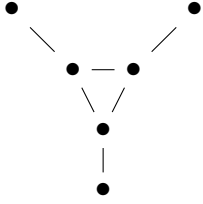
\begin{figure}[h]
\centering
	\begin{tikzpicture}[scale=0.8]
	\node[-] (a) at (1,4) {$\bullet$};
	\node[-] (b) at (2,3) {$\bullet$};
	\node[-] (c) at (2.5,2) {$\bullet$};
	\node[-] (d) at (2.5,1) {$\bullet$};
	\node[-] (e) at (3,3) {$\bullet$};
	\node[-] (f) at (4,4) {$\bullet$};
\draw[-] (a) -- (b);
\draw[-] (b) -- (c);
\draw[-] (b) -- (e);
\draw[-] (e) -- (c);
\draw[-] (e) -- (f);
\draw[-] (c) -- (d);
\end{tikzpicture}	
\caption{An orientable prismatic  graph where the best clique cover is not obtained by first selecting the maximum number of disjoint triangles }\label{f:mauvais}
\end{figure}

 The \textit{derived graph} $D(G)$ of a graph $G$ is the intersection graph of the triangles of $G$. More formally, if $H=D(G)$, then $V(H) = \{T : \mbox{ $T$ is a triangle in $G$}\}$. Two vertices of $H$ are adjacent if they are distinct triangles of $G$ sharing at least one vertex.  Note  that the class of derived graphs is not hereditary.

\begin{figure}[H]
\centering
	\begin{tikzpicture}[scale=0.8]
	\node[-] (a1) at (0,-2.5) {$a_3$};
\node[-](a2) at (1,-1) {$a_2$};
\node[-] (a3) at (2,-2.5) {$a_1$};
\node[-] (b1) at (4,-2.5) {$b_1$};
\node[-] (b2) at (5,-1) {$b_3$};
\node[-] (b3) at (6,-2.5) {$b_2$};
\node[-] (c1) at (3,-4) {$c_1$};
\node[-] (c2) at (2,-5.5) {$c_2$};
\node[-] (c3) at (4,-5.5) {$c_3$};
\node[-] (LK) at (3,-6.5) {$L(K_{3,3})$; (See also Fig.~\ref{f:spt})};

\node[-] (a) at (9,-2.5) {$a$};
\node[-](b) at (9,-3.5) {$b$};
\node[-] (c) at (9,-4.5) {$c$};
\node[-] (un) at (11,-2.5) {$1$};
\node[-] (d) at (11,-3.5) {$2$};
\node[-] (t) at (11,-4.5) {$3$};
\node[-] (name) at (10,-5.5) {$K_{3,3} $};

\draw[-] (a) -- (un);
\draw[-] (a) -- (d);
\draw[-] (a) -- (t);
\draw[-] (b) -- (un);
\draw[-] (b) -- (d);
\draw[-] (b) -- (t);
\draw[-] (c) -- (un);
\draw[-] (c) -- (d);
\draw[-] (c) -- (t);

\draw[-] (a1) -- (a2);
\draw[-] (a1) -- (a3);
\draw[-] (a2) -- (a3);
\draw[-] (b1) -- (b2);
\draw[-] (b1) -- (b3);
\draw[-] (b2) -- (b3);
\draw[-] (c1) -- (c2);
\draw[-] (c1) -- (c3);
\draw[-] (c2) -- (c3);
\draw[-] (a3) -- (b1);
\draw[-] (a3) -- (c1);
\draw[-] (b1) -- (c1);
\draw[-] (a1) -- (b2);
\draw[-] (b2) -- (c3);
\draw[-] (a1) -- (c3);
\draw[-] (a2) -- (b3);
\draw[-] (a2) -- (c2);
\draw[-] (b3) -- (c2);
\end{tikzpicture}	
\caption{$D(L(K_{3,3}))=K_{3,3}$}
		\end{figure}

\begin{theorem}\label{conect}
Let $G$ an orientable prismatic graph. Every connected component of $D(G)$ is claw-free or is isomorphic to $K_{3,3}$. 
\end{theorem}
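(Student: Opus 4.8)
The plan is to show that a claw in $D(G)$ can occur only inside a component isomorphic to $K_{3,3}$, so that every component which is not $K_{3,3}$ is claw-free. First I would record two elementary facts valid in any prismatic graph: each edge lies in at most one triangle (two triangles sharing an edge would give a vertex with two neighbours in a triangle), and consequently two distinct triangles meet in at most one vertex. Hence in $D(G)$ two triangles are adjacent exactly when they share a single vertex.

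Suppose some component $\mathcal{C}$ contains a claw with centre $T_0=\{a,b,c\}$ and leaves $T_1,T_2,T_3$. Since the leaves are pairwise non-adjacent in $D(G)$ they are pairwise vertex-disjoint, and as each meets $T_0$ in exactly one vertex, by counting they meet $T_0$ in three distinct vertices; write $T_1=\{a,a_1,a_2\}$, $T_2=\{b,b_1,b_2\}$, $T_3=\{c,c_1,c_2\}$. Using that two disjoint triangles induce a prism, I would read off the perfect matchings between $T_1,T_2,T_3$: the centre vertices match ($a$ to $b$, $a$ to $c$, $b$ to $c$), and after relabelling I may assume $a_1b_1,a_2b_2$ and $a_1c_1,a_2c_2$ are the other matching edges. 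The matching between $T_2$ and $T_3$ is the only remaining freedom, and here I would invoke orientability: fixing $O(T_1)=(a\to a_1\to a_2)$ propagates through the $T_1$-$T_2$ and $T_1$-$T_3$ prisms to $O(T_2)$ and $O(T_3)$, and the $T_2$-$T_3$ compatibility condition forces the matching to be $b_1c_1,b_2c_2$ (the twisted alternative $b_1c_2,b_2c_1$ would make $O(T_2)$ and $O(T_3)$ incompatible, so no correct orientation could exist). This yields two further triangles $T_4=\{a_1,b_1,c_1\}$ and $T_5=\{a_2,b_2,c_2\}$.

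At this point there are nine vertices carrying six triangles. A short adjacency count, all $\binom{9}{2}$ pairs being determined by the prism matchings, shows the induced graph is $4$-regular with $18$ edges; since each edge lies in a unique triangle and $6\times 3=18$, these six triangles are all the triangles on the nine vertices, and one checks directly that the induced graph is isomorphic to $L(K_{3,3})$ and that $T_0,\dots,T_5$ induce $K_{3,3}$ in $D(G)$ (bipartition $\{T_0,T_4,T_5\}$ versus $\{T_1,T_2,T_3\}$).

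It remains to prove that $\mathcal{C}$ contains no further triangle, which I expect to be the main obstacle. If it did, connectedness would supply a triangle $T'$ outside the six but sharing a vertex of the configuration; by the symmetry (edge-transitivity) of $L(K_{3,3})$ I may assume $a\in T'$ with $T'\neq T_0,T_1$, and write $T'=\{a,p,q\}$. Since $a$ is non-adjacent to $b_1,b_2,c_1,c_2$ and every edge lies in one triangle, $p$ and $q$ fall outside the nine vertices, so $T'$ is disjoint from $T_2,T_3,T_4,T_5$. I would then rerun the prism-and-orientation bookkeeping: the prisms of $T'$ with $T_4$ and $T_5$ pin down the adjacencies of $p,q$ to the $b_i,c_j$, and feeding the resulting matchings of $T'$ with $T_2$ and with $T_3$ into the orientation forces $O(T_2)$ and $O(T_3)$ into a relation contradicting the $T_2$-$T_3$ compatibility already established. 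Hence no such $T'$ exists, $\mathcal{C}$ consists of exactly the six triangles, and $\mathcal{C}\cong K_{3,3}$. The delicate part throughout is tracking which cyclic orientation each prism imposes; notably the rotator/twister dichotomy of Theorem~\ref{t:rotattor et twister} is not needed, only the definition of a correct orientation.
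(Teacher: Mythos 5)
Your proof is correct, and its skeleton is the same as the paper's: a claw in $D(G)$ gives a centre triangle and three pairwise disjoint leaves, the prism property forces the matchings between the leaves, the parallel matchings yield $L(K_{3,3})$ whose six triangles induce $K_{3,3}$ in $D(G)$, and any seventh triangle must meet the nine vertices in exactly one vertex and is then ruled out. Where you genuinely depart from the paper is in how the two contradictions are obtained. The paper invokes Theorem~\ref{t:rotattor et twister} twice: the twisted matching $b_1c_2,\ b_2c_1$ makes the nine vertices induce a rotator, and the extra triangle produces a rotator with centre $\{a_2,b_2,c_2\}$; both contradict orientability via that characterization. You instead argue directly from the definition of a correct orientation, propagating cyclic permutations through the prisms and exhibiting an incompatibility --- in effect you inline a proof of the easy direction of Theorem~\ref{t:rotattor et twister} in the two special cases at hand. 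Your route buys self-containedness (only the definition of orientability is needed, not the cited structure theorem), at the cost of some cyclic-permutation bookkeeping; the paper's route is shorter given that theorem and makes the rotator configurations explicit. Two details to tighten: ``fixing'' $O(T_1)=(a\to a_1\to a_2)$ deserves a word of justification (reversing every permutation of a correct orientation yields a correct orientation, or swap the indices $1\leftrightarrow 2$ throughout); and your final step is only sketched, but the bookkeeping does close: with $p\sim b_1$ and $q\sim c_1$ (after possibly swapping $p$ and $q$), the prisms of $T'$ with $T_2$ and $T_3$ force $q\sim b_2$ and $p\sim c_2$, so compatibility of $T'$ with $T_2$ gives $O(T')=(a\to p\to q)$ while compatibility with $T_3$ gives $O(T')=(a\to q\to p)$, a contradiction.
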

 
\begin{proof}
Let $D$ be a connected component of $D(G)$  containing a  claw.  Hence, 
 $G$ has to contain 4 triangles as represented on Figure \ref{f:de_la_preuve} (not all edges of $G$ are represented).  We will use the notation given there and we denote by $K$ the set of vertices $\{a_1,a_2,a_3,b_1,b_2,b_3,c_1,c_2,c_3\}$.
 
\begin{figure}[h]
\centering
	\begin{tikzpicture}[scale=0.8]
	\node[-] (a1) at (1,-2.5) {$b_3$};
\node[-](a2) at (2,-1) {$b_2$};
\node[-] (a3) at (3,-2.5) {$b_1$};
\node[-] (a) at (2,-2) {B};
\node[-] (b1) at (5,-2.5) {$c_1$};
\node[-] (b2) at (6,-1) {$c_3$};
\node[-] (b3) at (7,-2.5) {$c_2$};
\node[-] (c1) at (4,-4) {$a_1$};
\node[-] (c2) at (3,-5.5) {$a_2$};
\node[-] (c3) at (5,-5.5) {$a_3$};
\node[-] (b) at (6,-2) {C};
\node[-] (c) at (4,-5) {A};

\draw[-] (a1) -- (a2);
\draw[-] (a1) -- (a3);
\draw[-] (a2) -- (a3);
\draw[-] (b1) -- (b2);
\draw[-] (b1) -- (b3);
\draw[-] (b2) -- (b3);
\draw[-] (c1) -- (c2);
\draw[-] (c1) -- (c3);
\draw[-] (c2) -- (c3);
\draw[-] (a3) -- (b1);
\draw[-] (a3) -- (c1);
\draw[-] (b1) -- (c1);
 
\end{tikzpicture}	
\caption{A graph whose derived graph is  $K_{1,3}$} \label{f:de_la_preuve}
		\end{figure}
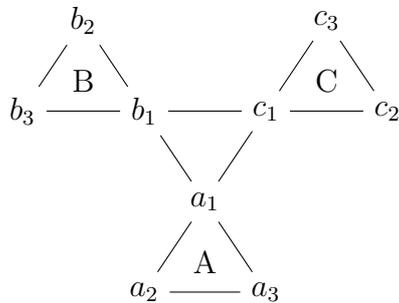

Since $G$ is prismatic there should be a matching between the extremities of the edge $a_2a_3$ and those of $b_2b_3$, $c_2c_3$.
 
Without loss of generality we may assume that these matching edges are: $a_2b_2$, $a_3b_3$ and $a_2c_2$, $a_3c_3$.

Now there are two possibilities for the matching between $b_2b_3$, $c_2c_3$. If $b_2c_3,b_3c_2\in E(G)$, then $G[K]$ is a rotator of center $\{a_1,b_1,c_1\}$, a contradiction to $G$ being orientable (Theorem \ref{t:rotattor et twister}). We can now assume that $b_2c_2,b_3c_3\in E(G)$. So $G[K]$ contains  $L(K_{3,3})$.

It remains to show that there is no other vertex in $D$. Assume that it is not the case, then there is a triangle $T$ in $G$, that intersects  $K$.  Since in $G[K]$, every vertex and edge is in a triangle,  $|T\cap K| = 1$.  

Since $L(K_{3, 3})$ is vertex transitive, we may assume up to symmetry that $T=\{u, v, b_2\}$.   Since $T$ and $\{a_1, b_1, c_1\}$ form a prism, we may assume without loss of generality that $ua_1$ and $vc_1$ are edges of $G$.  Now, because of triangles $T$ and $A$, $va_3 \in E(G)$ and because of triangles $T$ and $C$, $uc_3 \in E(G)$.  Then $\{u, v, b_2, a_2, a_3, a_1, c_2, c_3, c_1\}$ induces a rotator with center $\{a_2,b_2,c_2\}$, a contradiction to $G$ being  orientable. 
\end{proof}

\begin{theorem}
The vertex-disjoint triangle problem is ${\cal O}(n^5)$ time solvable in prismatic graphs. 
\end{theorem}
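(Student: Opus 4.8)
The plan is to reduce the problem to a maximum stable set computation in the derived graph. A family of triangles of $G$ is vertex-disjoint exactly when the corresponding vertices of $D(G)$ are pairwise nonadjacent, so the maximum number of vertex-disjoint triangles of $G$ equals the independence number of $D(G)$. Before anything else I would bound the size of $D(G)$. In a prismatic graph every edge lies in at most one triangle: two triangles sharing an edge $xy$ would give a fourth vertex with two neighbours in a triangle, contradicting the definition. Hence $G$ has at most $O(n^2)$ triangles and $D(G)$ has $O(n^2)$ vertices. Moreover, the triangles through a fixed vertex $v$ are exactly the edges of $G[N(v)]$, and $G[N(v)]$ is a matching: if $xy$ is an edge of $G[N(v)]$ then every other neighbour of $v$ has $v$ as its unique neighbour in the triangle $\{v,x,y\}$, hence is nonadjacent to both $x$ and $y$. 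So each vertex lies in at most $n/2$ triangles, and using the matrix $T(G)$ all triangles and the graph $D(G)$ can be produced in time $O(n^3)$.

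Next I would split according to Theorem~\ref{conect}. If $G$ is orientable, then every connected component of $D(G)$ is claw-free or isomorphic to $K_{3,3}$. A $K_{3,3}$-component contributes exactly $3$ to the answer, and for a claw-free component one computes a maximum stable set by a polynomial algorithm for the maximum independent set problem in claw-free graphs (Minty, Sbihi, and later faster versions). Summing over the components yields the independence number of $D(G)$, hence the optimum for $G$.

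For the non-orientable case I would invoke Section~\ref{sec:col}. By Theorem~\ref{l:27_ou_5}, a non-orientable prismatic graph either has at most $27$ vertices, in which case the answer is found in constant time, or admits a hitting set $S$ with $|S|\le 5$. In the latter situation every triangle meets $S$, so there are only $O(n)$ triangles and every family of pairwise vertex-disjoint triangles has size at most $5$. The optimum is then found by enumerating, for each $s_i\in S$, either no triangle or one of the at most $n/2$ triangles through $s_i$, and retaining the largest pairwise-disjoint selection among the distinct triangles chosen; this inspects $O(n^5)$ candidate tuples, each tested in constant time. To combine the two cases without a separate orientability test, I would process $D(G)$ component by component: components that are claw-free or isomorphic to $K_{3,3}$ are treated as above, and as soon as a component arises that is neither, the contrapositive of Theorem~\ref{conect} certifies that $G$ is non-orientable, and one switches to the hitting-set procedure.

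The main obstacle is the complexity bookkeeping needed to reach $O(n^5)$ rather than merely polynomial time. Since $D(G)$ may have $\Theta(n^2)$ vertices, a generic claw-free independent-set routine running in $O(N^3)$ on an $N$-vertex graph would only give $O(n^6)$; one must instead use a matching-based implementation so that the $O(N^{2.5})$ matching of Micali--Vazirani drives the cost on the $N=O(n^2)$ vertices of $D(G)$, and one must also check that $K_{3,3}$-recognition and the detection of the non-claw-free/non-$K_{3,3}$ situation stay within this budget rather than relying on a naive claw test, which would be too slow. This quantitative control of the claw-free stable-set computation on $D(G)$ is the delicate part of the argument.
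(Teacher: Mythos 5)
Your overall route is the paper's own: reduce the problem to a maximum stable set computation in $D(G)$, handle the orientable case through Theorem~\ref{conect} (claw-free or $K_{3,3}$ components, plus a claw-free stable-set algorithm), and handle the non-orientable case through Theorem~\ref{l:27_ou_5} (at most $27$ vertices, or a hitting set of size at most $5$ followed by an $O(n^5)$ enumeration). The one place where you genuinely depart from the paper is the control flow, and that is exactly where your argument has a gap. You propose to skip any orientability test, process the components of $D(G)$ one by one, and switch to the hitting-set procedure ``as soon as a component arises that is neither'' claw-free nor $K_{3,3}$. To implement this you must be able to decide, within the time budget, whether a component of $D(G)$ is claw-free; you yourself acknowledge that a naive claw test is too slow and that this detection is ``the delicate part'', but you never supply the missing procedure, so the combined algorithm is not complete as written. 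Running Sbihi's algorithm blindly does not substitute for detection: on a non-claw-free input its output carries no guarantee, and maximality of a stable set cannot be certified a posteriori.

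The paper's ordering dissolves precisely this difficulty. It first runs the hitting-set search of Lemma~\ref{supersuper} on $G$ itself (so no claw detection is ever needed). If a hitting set of size at most $5$ is found, the $O(n^5)$ enumeration finishes the job; if not, and $n>27$, Theorem~\ref{l:27_ou_5} certifies that $G$ is orientable, and only then is $D(G)$ used: Theorem~\ref{conect} now \emph{guarantees} that every component is claw-free or $K_{3,3}$, so one merely recognizes the six-vertex $K_{3,3}$ components (trivial) and runs Sbihi on all the others, their claw-freeness being supplied by the theorem rather than checked. Reverting to this order repairs your proof with no new ideas. Two smaller points. First, your non-orientable branch needs the hitting set $S$ explicitly, and you never say how it is obtained; the paper pays for it via Lemma~\ref{supersuper} (note that your own observation that such a $G$ has only $O(n)$ triangles would alternatively let you enumerate all $5$-subsets of the triangle list without knowing $S$). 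Second, your final complexity discussion is not sound: Micali--Vazirani computes maximum matchings, not maximum stable sets in claw-free graphs, and no $O(N^{2.5})$ ``matching-based implementation'' of claw-free stable set can simply be invoked; this part of your proposal should not be relied upon.
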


\begin{proof}
Let G be a prismatic graph.	If $G$ has at most 27 vertices, we solve the problem in constant time. Otherwise, we look by Lemma~\ref{supersuper}, for a hitting set of size 	at most 5 in $G$. If one exists, then we know that at most 5 disjoint triangles exist in $G$, and we find an optimal set of vertex-disjoint triangles of $G$ in time ${\cal O}(n^{5})$ as in the proof of Theorem~\ref{t:algo_couverture}. Hence, we may assume that no hitting set of size at most 5 exists. By Theorem~\ref{l:27_ou_5}, $G$ is orientable. 
	 
	 A set $R$ is a stable set of $D(G)$ if and only if it is a set of vertex-disjoint triangles of $G$.  Hence, it is enough to compute a maximum stable set in $D(G)$. Such a set can obviously be found by computing a maximum stable set in each connected component of $D(G)$.  By Lemma \ref{conect}, each such component is either isomorphic to $K_{3, 3}$ or claw-free.  The components that are isomorphic to $K_{3, 3}$ are handled trivially. In the components that are claw-free, to find a maximum stable set, we may rely on the classical algorithm of Sbihi~\cite{sbihi_algorithme_1980} (${\cal O}(n^3))$.  
\end{proof}

\section{A shorter proof for a weaker result}\label{hajebi}

Sepehr Hajebi~\cite{hajebi} noted that the existence of a hitting set
of bounded size (namely 15) in any non-orientable prismatic graph~$G$
can be deduced from several parts of~\cite{chudnovsky_claw-free_2008}
with a small amount of additional work as follows.

Consider a non-orientable prismatic graph $G$. By Theorem~$6.1$ in~\cite{chudnovsky_claw-free_2008}, $G$ contains either a twister or a rotator as an induced subgraph. 

If $G$ contains a twister $Z$ (so $|V(Z)|=10$) but does not contain any rotator, then (5) in the proof of 7.2
from~\cite{chudnovsky_claw-free_2008} shows that $V(Z)$ is a hitting set
of $G$.

If $G$ contains a rotator, then Hajebi's strategy is to rely on
results from Section 10 of~\cite{chudnovsky_claw-free_2008}.  This
section is about graphs that contain a rotator and no so-called
``square-forcer'' (we do not need the definition), but the argument
relies only on 10.3, where the assumption that there is no
square-forcer is not used.

From here on, we use notation from~\cite{chudnovsky_claw-free_2008}.

In 10.3, the set of all triangles of $G$ is described, and it is
proved that they all fall in one of the following categories: subsets
of $S$ (where $S$ is a set of cardinality at most 9), $R$-triangles, $T$-triangles,
diagonal triangles and marginal triangles.  The definition of these
categories implies that the set of vertices
${K}= \{r_1^3,r_2^3,r_3^3, t_3^1,t_3^2,t_3^3\} \cup S$ is 
a hitting set of $G$. It has size at most~15 (because $|S|\leq 9$).

\section{Concluding remarks}\label{s:concl}

Despite our efforts to use the results of \cite{chudnovsky_claw-free_2007},  the complexity of the clique cover problem remains unknown for orientable prismatic graphs. However, here is a simple remark. Suppose that $\{K_1, \dots, K_l\}$ is an optimal clique cover of a prismatic graph $G$. If for some $1\leq i, j\leq l$, $K_i$ is a triangle and $K_j$ an isolated vertex $v$, then by prismaticity, $v$ has a neighbour $u$ in $K_i$.  Hence, we may replace $K_i$ and $K_j$ by $K_i\sm \{u\}$ and $K_j\cup \{u\}$.  We may iterate this until in the optimal cover, there is not simultaneously a triangle and an isolated vertex.  It follows that for every prismatic graph, there exists an optimal cover that is either made only of triangles and edges, or made only of edges and isolated vertices. Observe that an optimal clique cover of  the last kind is easily computable in polynomial time by some classical algorithm for the maximum matching.  Hence, to solve the clique cover problem, it is enough to find an optimal clique cover of the first kind.

\section{Acknowledgement}

Thanks to \'Edouard Bonnet, Sepehr Hajebi and Stéphan Thomassé for useful discussions. 

\noindent This work was initiated by Fr\'ed\'eric Maffray.

 

\end{document}